\DeclarePairedDelimiter{\ceil}{\lceil}{\rceil}
\DeclarePairedDelimiter{\floor}{\lfloor}{\rfloor}
\newtheorem{defi}{Definition}
\newtheorem{teo}{Theorem}
\newtheorem{cor}{Corollary}
\newtheorem{lemma}{Lemma}
\title{Scaling Exponent of List Decoders with Applications to Polar Codes}
\author{Marco Mondelli, S. Hamed Hassani, and R\"{u}diger Urbanke
\thanks{M. Mondelli and R. Urbanke are with the School of Computer and Communication Sciences, EPFL, CH-1015 Lausanne, Switzerland
(e-mail: \{marco.mondelli, ruediger.urbanke\}@epfl.ch).

S. H. Hassani is with the Computer Science Department, ETH Z\"{u}rich, Switzerland
(e-mail: hamed@inf.ethz.ch).

}}
\begin{document}
\maketitle
\begin{abstract}
\noindent Motivated by the significant performance gains which polar codes experience under successive cancellation list decoding, their scaling exponent is studied as a function of the list size. In particular, the error probability is fixed and the trade-off between block length and back-off from capacity is analyzed. A lower bound is provided on the error probability under $\rm MAP$ decoding with list size $L$ for any binary-input memoryless output-symmetric channel and for any class of linear codes such that their minimum distance is unbounded as the block length grows large. Then, it is shown that under $\rm MAP$ decoding, although the introduction of a list can significantly improve the involved constants, the scaling exponent itself, i.e., the speed at which capacity is approached, stays unaffected for any finite list size. In particular, this result applies to polar codes, since their minimum distance tends to infinity as the block length increases. A similar result is proved for genie-aided successive cancellation decoding when transmission takes place over the binary erasure channel, namely, the scaling exponent remains constant for any fixed number of helps from the genie. Note that since genie-aided successive cancellation decoding might be strictly worse than successive cancellation list decoding, the problem of establishing the scaling exponent of the latter remains open.
\end{abstract}

\begin{IEEEkeywords}
Scaling exponent, list decoding, polar codes, $\rm MAP$ decoding, genie-aided decoding.
\end{IEEEkeywords}

\section{Introduction}

{\bf \em Error Exponent and Scaling Exponent.} While studying the error performance of a code family when transmission takes place over a binary-input memoryless output-symmetric channel ($\rm BMSC$) $W$ with Shannon capacity $C$, the parameters of interest are, in general, the rate $R$, the block length $N$, and the block error probability $P_e$. Ideally, we would like to characterize $P_e(N, R, W)$ exactly as a function of its parameters, in particular $N$ and $R$, but this is hard to achieve. A slightly easier task is to fix one
of the quantities $(P_e, N, R)$ and then to explore the trade-off
between the remaining two.

The oldest such approach is to compute the \emph{error exponent}:
we fix the rate $R$ and we are interested in the trade-off
between $P_e$ and $N$. In particular, we compute how $P_e$ behaves
when $N$ tends to infinity. For various standard classical
random ensembles (e.g., the Shannon ensemble or the Fano ensemble,
see \cite{urbanke:coding}), it is well known that $P_e$ tends to $0$ exponentially fast in the block length, i.e.,  $P_e = \Theta(e^{-\alpha N})$, for
any $0 < R < C$. For a fairly recent survey on how to determine
$\alpha$ for various such ensembles, we refer to \cite{barg:err}.

However, the error exponent gives only limited guidance for the
design of practical coding systems, since it concerns the behavior
of the error probability once it has already reached very low values.
From an engineering point of view, the following alternate analysis
proves more fruitful: fix $P_e$ and study how the block length $N$
scales with the gap from capacity $C-R$. This scaling is relevant from a practical perspective since we typically have a certain requirement on the error probability and then we are interested in
using the shortest code possible to transmit at a certain rate. As a benchmark, let us mention what is the shortest block length
that we can hope for. A sequence of works starting from \cite{dobr:math}, then \cite{strassen:asym}, and finally \cite{poly:finite} showed that the minimum possible block length $N$ required to achieve a rate $R$ with a fixed error probability $P_e$ is roughly
equal to
\begin{equation}
N \approx \frac{V(Q^{-1}(P_e))^2}{(C-R)^2},
\end{equation}
where $V$ is referred to as
channel dispersion and measures the stochastic variability of the channel relative to a deterministic channel with the same capacity. A similar asymptotic expansion is put forward in \cite{hayashi:info} by using the information spectrum method.

This type of analysis has been successfully applied to iteratively decoded LDPC ensembles
\cite{urbanke:scalingLDPC}, where it was dubbed the \emph{scaling
law paradigm}, following a terminology coming from statistical
physics: if a system goes through a phase transition as a control
parameter $R$ crosses a critical value $C$, then generically around
this point there exists a very specific scaling law. In formulae,
we say that a scaling law holds for the block error probability
$P_e(N, R, W)$ of a capacity-achieving code if there exists a
function $f$, called the \emph{mother curve}, and a constant $\mu
> 0$, called the \emph{scaling exponent}, such that
\begin{equation} \label{eq:scal}
\lim_{N\rightarrow\infty :\mbox{ }  N^{1/\mu}(C-R) = z} P_e(N, R, W) =f(z).
\end{equation}
As the block length increases, if a rate $R< C$ is fixed, then
$P_e(N, R, W) \rightarrow 0$, since the code is supposed to achieve
capacity. On the other hand, $P_e(N, R, W) \rightarrow 1$ for any
$R> C$. Equation \eqref{eq:scal} refines this basic observation,
specifying the speed at which the rate converges to capacity, if a
certain error probability is to be met: roughly speaking, the back-off
from capacity $C-R$ tends to $0$ at a speed of $N^{-1/\mu}$. According to the previous discussion, for
random ensembles the scaling exponent is
$\mu =2$ \cite{strassen:asym, poly:finite}.

{\bf \em List Decoding}.
List decoding, which was introduced independently by Elias and
Wozencraft \cite{elias:list, woz:list}, allows the receiver to
collect $L$ possible transmitted messages. An error is declared
only if the correct message does not appear in the list. 

The \emph{error exponent} of list decoding schemes has been widely studied in the literature \cite{shannon:lb, forney:expbound}, and for random coding it has been proven that the introduction of a list with finite size $L$ does not yield any change in this asymptotic regime, provided that the rate is close enough to capacity \cite{gallager:info}. Improved bounds suitable for both random and structured linear block codes have been recently investigated \cite{hof:list}. 

As concerns the \emph{scaling exponent}, for a random ensemble transmitted over a Binary Erasure Channel with erasure probability $\varepsilon$, namely a ${\rm BEC}(\varepsilon)$, it can be shown that the error probability $P_e(N, R, \varepsilon, L)$ scales as\footnote{Consider a random matrix with $NR$ rows and $N-E$ columns whose elements are independent random variables taking the values 0 and 1 with equal probability and where $E$ is a Bernoulli random variable with mean $N\varepsilon$ and variance $ N\varepsilon (1-\varepsilon)$. Then, $P_e(N, R, \varepsilon, L)$ is the probability that this matrix has rank $< NR-\log_2 L$. After some calculations and the application of Theorem 3.2.1 of \cite{Ko99}, one obtains that the dominant term in $P_e(N, R, \varepsilon, L)$ is given by ${\mathbb P}(E > N(1-R)+\log_2 L)$, which is expanded in \eqref{eq:randbnd}.} 
\begin{equation}\label{eq:randbnd}
\begin{split}
P_e(N, R, \varepsilon, L) &\approx Q\biggl(\frac{\log_2 L}{\sqrt{N\varepsilon(1-\varepsilon)}}+ \frac{\sqrt{N}(1-\varepsilon-R)}{\sqrt{\varepsilon(1-\varepsilon)}}\biggr),
\end{split}
\end{equation}
where $Q(x)= 1/\sqrt{2\pi} \int_{x}^{+\infty} \exp{(-u^2/2)}du$. Consequently, the scaling exponent remains equal to
$2$ and also the mother curve stays unchanged, namely $f(z) =
Q(z/\sqrt{\varepsilon(1-\varepsilon)})$, for any $L \in {\mathbb
N}$.  

{\bf \em Polar Codes: a Motivating Case Study}.
The present research was motivated by polar codes, which were
recently introduced by Ar{\i}kan in \cite{arikan:polar}, and that provably achieve the capacity of a large class of channels including the class of $\rm BMSC$s. The encoding complexity and the decoding complexity with the proposed successive cancellation ($\rm SC$) decoder is $\Theta(N \log
N)$ (see Sections VII and VIII of \cite{arikan:polar}).

In particular, for any $\rm BMSC$ $W$ and for any rate $R < C$, it has been proven that the block error probability
under $\rm SC$ decoding,
namely $P_e^{\rm SC}(N, R, W)$, behaves roughly as $2^{-\sqrt{N}}$
as $N$ grows large \cite{arikan:rate}. With an abuse of notation,
it is said that polar codes achieve an \emph{error exponent} of
$\alpha = 1/2$. This result has been further refined and extended
to the $\rm MAP$ decoder, showing that both $\log_2(-\log_2 P_e^{\rm
SC})$ and $\log_2(-\log_2 P_e^{\rm MAP})$ scale as $\log_2 (N)/2+\sqrt{\log_2 (N)}/2\cdot
Q^{-1}(R/C) + o(\sqrt{\log_2 (N)})$ for any fixed rate strictly less than
capacity \cite{hamed:rated}.

However, when we consider rates close to capacity, simulation
results show that large block lengths are required in order to achieve
a desired error probability. Therefore, it is interesting to
explore the trade-off between rate and block length when the error
probability is fixed, i.e., to consider the \emph{scaling approach}.

In \cite{korada:scaling}, the authors provide strong evidence that
a lower bound to the block error probability under $\rm SC$ decoding
satisfies a scaling law \eqref{eq:scal}. In particular, a proper
scaling assumption yields the inequalities
\begin{equation}
\label{eq:boundscal}
f(N^{\frac{1}{\mu}}(C-R)) \le P_e^{\rm SC}(N, R, W) \le N^{1-\frac{1}{\mu}}F(N^{\frac{1}{\mu}}(C-R)).
\end{equation}
For transmission over the $\rm BEC$, the asymptotic behavior of $f$ and $F$ for small values of their argument is provided, as well as an estimation for the \emph{scaling exponent} is given, namely $\mu \approx 3.627$. Therefore, compared to random and LDPC codes, which have a scaling exponent of $2$ for a large class of parameters and channel models \cite{urbanke:scalingLDPC}, polar codes require larger block lengths to achieve the same rate and error probability. In addition, numerical results show that $P_e^{\rm SC}(N, R, W)$ is extremely close to the upper bound \eqref{eq:boundscal}, provided that such a value is not too big ($< 10^{-1}$ suffices).

For a generic $\rm BMSC$, taking as a proxy of the error probability
the sum of the Bhattacharyya parameters, the scaling exponent is lower bounded by $3.553$ \cite{HAU14} and upper
bounded by 5.77 \cite{GB13}. Furthermore, it is conjectured that the lower bound on $\mu$ can be increased up to 3.627, which is the
value for the $\rm BEC$.

In order to improve the finite-length performance of polar codes,
a successive cancellation list ($\rm SCL$) decoder was proposed in
\cite{vardy:listpolar}. Empirically, the usage of $L$ concurrent
decoding paths yields a significant improvement in the achievable
error probability. Hence, it is interesting to study the behavior
of the scaling exponent in the context of list decoding.

{\bf \em Contribution of the Present Work.} This paper studies whether the scaling exponent is improved by adding a finite list to a decoding algorithm.

The main result concerns the behavior of the $\rm MAP$ decoder: we show that the scaling exponent does not improve for any finite list size, for any $\rm BMSC$ $W$, and for any family of linear codes whose minimum distance grows arbitrarily large when the block length tends to infinity. Proving that the minimum distance of polar codes is unbounded in the limit $N \rightarrow +\infty$, we deduce that these conclusions also hold for polar codes. In particular, by means of a \emph{Divide and Intersect} (DI) procedure, we show that the error probability of the $\rm MAP$ decoder with list size $L$, namely $P_e^{\rm MAP}(N, R, W, L)$, is lower bounded by $P_e^{\rm MAP}(N, R, W, L=1)$ raised to an appropriate power times a suitable constant, both of which depend only on $L$. As a result, we see that list decoding has the potential of significantly improving the involved constants, but it does not change the scaling exponent.

Furthermore, we consider genie-aided $\rm SC$ decoding of polar codes for transmission over the $\rm BEC$. This decoder runs the $\rm SC$ algorithm and it is allowed to ask the value of a certain bit to the genie for a maximum of $k$ times. The $k$-genie-aided $\rm SC$ decoder performs slightly worse than the $\rm SCL$ decoder with list size $2^k$, but it is easier to analyze. In this paper, we show that the scaling exponent does not improve for any finite number of helps from the genie. The proof technique is similar to that developed for $\rm MAP$ decoding and it is based on a DI bound.

{\bf \em Organization.} Section \ref{sec:mainres} states the DI bounds and their implications on the scaling exponent under $\rm MAP$ decoding with finite list size $L$. The proofs of these main results are contained in Section \ref{sec:proofBECL1} and \ref{sec:proofBMSC} for the $\rm BEC$ and for any $\rm BMSC$, respectively. The analysis concerning genie-aided decoding for transmission over the $\rm BEC$ is discussed in Section \ref{sec:furthres}. The conclusions of the paper are provided in Section \ref{sec:concl}.

\section{Main Results for MAP Decoding with List} \label{sec:mainres}

Let ${\mathcal C}_{\rm lin}$ be a set of linear codes parameterized
by their block length $N$ and rate $R$. For each $N$ and $R$, let
$d_{\rm min}(N, R)$ denote the minimum distance. Consider transmission over a $\rm BMSC$ $W$ with capacity $C \in (0, 1)$ and Bhattacharyya
parameter $Z \in (0, 1)$, defined as
\begin{equation}
Z(W) = \sum_{y\in\mathcal Y}\sqrt{W(y|0)W(y|1)},
\end{equation}
where $\mathcal Y$ denotes the output alphabet of the channel and $W(y|x)$ is the probability of receiving $y$ given that $x\in \{0, 1\}$ was transmitted. 
Let $P_e^{\rm
MAP}(N, R, W, L)$ be the block error probability for transmission
over $W$ under $\rm MAP$ decoding with list size
$L$. In addition, denote by ${\mathcal C}_{\rm pol}$ the set of
polar codes when transmission takes place over the $\rm BMSC$ $W$.

The case $W={\rm BEC}(\varepsilon)$ is handled separately. Indeed, for the $\rm BEC$,  $\rm MAP$ decoding reduces to solving a linear system over the finite field ${\mathbb F}_2$. Therefore, the number of codewords compatible with the received message is always a power of 2. Let us assume that the $\rm MAP$ decoder with list size $L$ declares an error if and only if the number of compatible codewords is strictly bigger than $L$. Consider a first $\rm MAP$ decoder with list size $L_1$, and a second $\rm MAP$ decoder whose list size $L_2$ is the biggest power of 2 smaller than $L_1$, i.e., $L_2 = 2^{\lfloor\log_2 L_1\rfloor}$. Then, the performance of these two decoders are identical (the first one declares error if and only if the second one does). As a result, we can restrict our analysis to list sizes which are powers of 2 and, therefore, the bounds can be tightened. In addition, when dealing with a $\rm BEC$, the DI approach itself and the proofs of the intermediate lemmas are considerably simpler, while keeping the same flavor as those valid for any general $\rm BMSC$.

\subsection{Divide and Intersect Bounds}
\label{subsec:boundBMS}

\begin{teo}[DI bound - ${\mathcal C}_{\rm lin}$] \label{teo:lplus1tolplingc}
Consider transmission using elements in ${\mathcal C}_{\rm lin}$ over a $\rm BMSC$ $W$ with Bhattacharyya parameter $Z$ and set $P_e \in (0, 1)$. For any $N$ and $R$ so that
\begin{align}
&P_e^{\rm MAP}(N, R, W, L) > P_e, \label{eq:condln1gc}\\
&d_{\rm min}(N, R) > \frac{\ln(P_e/8)}{\ln Z}, \label{eq:condln2gc}
\end{align}
the performance of the $\rm MAP$ decoder with list size $L+1$ ($2L$, if $W={\rm BEC}(\varepsilon)$) is lower bounded by
\begin{equation} \label{eq:boundgc}
\begin{split}
&P_e^{\rm MAP}(N, R, W, L+1) \ge \frac{3}{16} \cdot \bigl(P_e^{\rm MAP}(N, R, W, L)\bigr)^2,\\
&P_e^{\rm MAP}(N, R, \varepsilon, 2L) \ge \frac{3}{16} \cdot \bigl(P_e^{\rm MAP}(N, R, \varepsilon, L)\bigr)^2.
\end{split}
\end{equation}
\end{teo}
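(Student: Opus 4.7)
The plan is to reduce, by linearity of the code and symmetry of $W$, to the case that the all-zero codeword is transmitted, so that MAP list-$L$ decoding errs exactly when the number $N(Y)$ of non-zero codewords $c \in \mathcal{C}$ with $W^N(Y|c) \ge W^N(Y|0)$ is at least $L$. Writing $p_L := P_e^{\rm MAP}(N,R,W,L) = \mathbb{P}(N(Y) \ge L)$, the goal becomes $\mathbb{P}(N(Y) \ge L+1) \ge \tfrac{3}{16}\, p_L^2$. For the BEC the tighter jump $L \mapsto 2L$ is essentially automatic, because the set $\mathcal{C}(Y)$ of consistent codewords is a coset of dimension $k(E)$, so $|\mathcal{C}(Y)| = 2^{k(E)}$ is always a power of two and the list-$L$ error event for $L = 2^{\ell}$ coincides with $\{k(E) \ge \ell+1\}$; the target then reduces to $\mathbb{P}(k(E) \ge \ell+2) \ge \tfrac{3}{16}\, \mathbb{P}(k(E) \ge \ell+1)^2$.

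I expect the proof to proceed in two matched stages. In the \emph{Divide} stage one partitions the list-$L$ error event according to the identity of the $L$-th most likely non-zero codeword $c^{\star}(Y)$; the Bhattacharyya-type bound $\mathbb{P}(W^N(Y|c) \ge W^N(Y|0)) \le Z^{|c|} \le Z^{d_{\min}}$ combined with the hypothesis $Z^{d_{\min}} < P_e/8$ forbids any single $c$ from being the threshold codeword on a mass exceeding $P_e/8$, so that, together with $p_L > P_e$, a constant fraction of the list-$L$ error event is free of dominant atoms. In the \emph{Intersect} stage one produces, on this atom-free bulk, an $(L+1)$-st confusing codeword. For the BEC the construction is combinatorial, exploiting the coset structure of $\mathcal{C}(Y)$ and the supermodularity of $E \mapsto \mathrm{rank}(G_{E^c})$ to upgrade $k(E) \ge \ell+1$ to $k(E) \ge \ell+2$ on a subset whose conditional probability is at least $\tfrac{3}{16}\, p_L$. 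For a general BMSC the extra codeword is extracted by coupling $Y$ with a second, independent copy $Y'$ of the channel output for the same transmitted codeword: the intersection of the two list-$L$ error events has probability exactly $p_L^2$ by independence, and a careful projection of this bivariate event back to a single channel use (with the correct marginal $W^N$) exhibits at least $L+1$ codewords simultaneously preferred to $0$.

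The main obstacle is the explicit constant $\tfrac{3}{16}$, which is not conceptual but a bookkeeping artifact of the two stages. The square $p_L^2$ emerges naturally from the two-copy coupling, while the prefactor $\tfrac{3}{16}$ must simultaneously absorb the mass discarded when removing single-codeword atoms (this is where $Z^{d_{\min}} < P_e/8$ enters quantitatively) and the loss incurred when projecting the bivariate event back to a univariate marginal. Verifying that none of these steps degrades when the confusing codewords cluster in unfortunate configurations -- in particular, that the BMSC coupling can really be realised measure-preservingly with the advertised loss -- is where the bulk of the technical work in Sections \ref{sec:proofBECL1} and \ref{sec:proofBMSC} should sit; I expect the BEC calculation to come out noticeably cleaner because no coupling is required there.
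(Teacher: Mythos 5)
Your reduction to the all-zero codeword, your observation that for the BEC the list of consistent codewords is a coset of size $2^{k}$ (so only powers of two matter), and your use of the Bhattacharyya bound $\mathbb{P}\bigl(W^N(Y|c)\ge W^N(Y|0)\bigr)\le Z^{|c|}\le Z^{d_{\min}}<P_e/8$ to rule out dominant atoms all match the paper. But the central mechanism of the proof is missing from your Intersect stage, and the substitute you propose does not work. The paper's key tool is a positive-correlation (FKG) inequality: it partitions the collection of confusing objects (codewords for $L=1$; $(l+1)$-dimensional spans or $(L+1)$-subsets in general) into two disjoint families $U_1$ and $U_2$, uses the no-big-jumps lemma to choose $U_1$ so that $\mathbb{P}(\bigcup_{u\in U_1}E_u)\in[\tfrac38 p_L,\tfrac12 p_L]$ (hence $\mathbb{P}(\bigcup_{u\in U_2}E_u)\ge\tfrac12 p_L$, which is where $\tfrac{3}{16}=\tfrac38\cdot\tfrac12$ comes from), and then invokes FKG to get $\mathbb{P}(\bigcup_{U_1}E_u\cap\bigcup_{U_2}E_{\tilde u})\ge\mathbb{P}(\bigcup_{U_1}E_u)\,\mathbb{P}(\bigcup_{U_2}E_{\tilde u})$ \emph{on the same channel output} $Y$. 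Because $U_1$ and $U_2$ are disjoint, the intersection event forces at least one extra confusing codeword, i.e., a list-$(L+1)$ error. The FKG inequality applies because each union event is a monotone function of $Y$ under the partial order induced coordinatewise by the log-likelihood ratios (for the BEC, by the erasure pattern), and the channel measure is log-supermodular on that lattice.

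Your proposed replacement --- taking an independent second copy $Y'$ of the output so that the two list-$L$ error events intersect with probability exactly $p_L^2$, and then "projecting" the bivariate event back to a single channel use --- is the step that fails. Two independent outputs each admitting $L$ confusing codewords give you no control over a single output admitting $L+1$: there is no measure-preserving map from the product space back to $\mathcal{Y}^N$ that preserves both the marginal law and the union of the two confusion sets, and if such a projection existed in general it would prove the correlation inequality for arbitrary (non-monotone) events, which is false. What makes the product lower bound legitimate on a single copy of $Y$ is precisely the monotonicity of the events plus FKG; that is the idea you need and do not have. Your Divide stage is also organized differently (conditioning on the identity of the $L$-th most likely codeword rather than greedily building a sub-family of codeword sets with union probability in the window $[\tfrac38 p_L,\tfrac12 p_L]$), and it is not clear how your version produces the two disjoint monotone events that the Intersect stage requires.
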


\begin{teo}[DI bound - ${\mathcal C}_{\rm pol}$] \label{teo:lplus1tolppolgc}
Consider transmission using elements in ${\mathcal C}_{\rm pol}$ over a $\rm BMSC$ $W$. Fix $P_e \in (0, 1)$ and pick any $N$ such that
\begin{equation} \label{eq:condNgc}
N > 2^{\bar{n}(Z, C, P_e)},
\end{equation}
where
\begin{equation} \label{eq:defnbargenLgc}
\begin{split}
&\bar{n}(Z, C, P_e) = 2\bar{m}(Z, P_e)  -\ln(1-C) \\
&\hspace{0.7cm}+ \sqrt{-4\bar{m}(Z, P_e)\cdot\ln(1-C)+ (\ln(1-C))^2}, 
\end{split}
\end{equation}
with
\begin{equation} \label{eq:defbarmgc}
\bar{m}(Z, P_e) = \log_2\Biggl(\frac{2\ln(P_e/8)\cdot \ln(1-Z)} {\ln Z\cdot \ln\biggl(1-Z^{\frac{4\ln(P_e/8)} {\ln Z}}\biggr)}\Biggr),
\end{equation}
and any sufficiently large $R$ so that
\begin{equation} \label{eq:condRgc}
P_e^{\rm MAP}(N, R, W, L) > P_e.
\end{equation}
Then, the bounds \eqref{eq:boundgc} hold.
\end{teo}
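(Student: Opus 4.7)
The plan is to reduce Theorem~\ref{teo:lplus1tolppolgc} to Theorem~\ref{teo:lplus1tolplingc}, the DI bound already established for general linear codes. Since polar codes are linear, it suffices to verify the two hypotheses \eqref{eq:condln1gc} and \eqref{eq:condln2gc}. Condition \eqref{eq:condln1gc} is exactly assumption \eqref{eq:condRgc}, so the entire argument hinges on establishing the minimum distance bound $d_{\rm min}(N, R) > \ln(P_e/8)/\ln Z$ required by \eqref{eq:condln2gc}, under the sole assumption $N > 2^{\bar n(Z, C, P_e)}$.

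The first step I would take is to translate this into a combinatorial condition on the information set. Using the classical fact that the rows of the polar transform matrix $F^{\otimes n}$ (with $N = 2^n$) have Hamming weights of the form $2^{w(i)}$, where $w(i)$ is the Hamming weight of the binary expansion of the row index $i \in \{0,\ldots,N-1\}$, the minimum distance of a polar code of rate $R$ with information set $A$ satisfies $d_{\rm min}(N, R) = \min_{i \in A} 2^{w(i)}$. Hence condition \eqref{eq:condln2gc} is equivalent to requiring that every $i \in A$ satisfies $w(i) > \log_2(\ln(P_e/8)/\ln Z)$, i.e., no index of small binary Hamming weight belongs to $A$.

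Next, I would lower bound the synthetic Bhattacharyya parameter $Z(W_N^{(i)})$ for any low-weight index. The standard recursions $Z(W_N^{(2j-1)}) \ge Z(W_{N/2}^{(j)})$ (minus step) and $Z(W_N^{(2j)}) = Z(W_{N/2}^{(j)})^2$ (plus step) imply that an index with $w(i) \le m$ undergoes at most $m$ squarings, yielding the crude bound $Z(W_N^{(i)}) \ge Z^{2^m}$. Combining this with the matching minus-step upper bound $Z(W_N^{(2j-1)}) \le 1 - (1 - Z(W_{N/2}^{(j)}))^2$ produces a sharper threshold of the form $1 - (1 - Z^{2^m})^{2^{n-m}}$, which is precisely the structure appearing in \eqref{eq:defbarmgc}. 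Consequently, every index of binary weight at most $m$ is driven out of the information set as soon as the number of synthetic channels whose Bhattacharyya parameter lies below this threshold exceeds $NR$ plus the subexponential count $\sum_{k=0}^{m}\binom{n}{k}$ of low-weight positions.

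Finally, I would invoke a quantitative polarization argument to convert this counting inequality into the explicit block length threshold $N > 2^{\bar n(Z, C, P_e)}$, tracking the evolution of $Z(W_N^{(i)})$ along the polarization tree so that the count of ``good'' indices grows towards $NC$ at an explicit rate depending on $Z$ and $C$. The main obstacle is precisely this quantitative counting: while the qualitative statement that the fraction of indices with small $Z$ tends to $C$ is classical, reproducing the specific form of \eqref{eq:defnbargenLgc}--\eqref{eq:defbarmgc}, with its explicit dependence on $Z$, $C$, and $P_e$, requires a careful non-asymptotic analysis of the Bhattacharyya process. Once this finite-length polarization estimate is in place, the hypotheses of Theorem~\ref{teo:lplus1tolplingc} are met and \eqref{eq:boundgc} follows immediately.
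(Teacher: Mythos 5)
Your reduction of Theorem~\ref{teo:lplus1tolppolgc} to Theorem~\ref{teo:lplus1tolplingc}, and the translation of \eqref{eq:condln2gc} into the statement that no index of small binary Hamming weight may lie in the information set, both match the paper exactly (this is the content of Lemma~\ref{lm:mindistpolargc}). The gap is in your last step. The paper never needs a ``quantitative polarization argument'' or a non-asymptotic estimate of how many synthetic channels have small Bhattacharyya parameter; that is precisely the hard finite-length polarization question, and your plan leaves it as an unresolved ``obstacle.'' Note also that your counting target is genuinely of that difficulty: since $R$ may be taken arbitrarily close to $C$, requiring the number of channels below your threshold to exceed $NR$ amounts to tracking the speed at which the good fraction approaches $C$, with explicit constants --- which is not available in the form you would need.

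What the paper does instead is purely combinatorial. Arguing by contradiction, suppose an index $i'$ with binary weight $w_H(b^{(i')})\le c$ is unfrozen. Using the composition bounds on $f_0$ and $f_1$ (with the correct \emph{lower} bound $f_0^{(l)}(x)=\sqrt{1-(1-x^2)^2}$ on the minus step --- you conflate this with the upper bound $1-(1-x)^2$, which is why the exponent in \eqref{eq:defbarmgc} is $4\ln(P_e/8)/\ln Z$ rather than $2\ln(P_e/8)/\ln Z$), one shows $Z_{i'}\ge Z^{(l)}_{\rm min}(c)$, while every index $j$ of weight at least $c+\bar k$ satisfies $Z_j\le Z^{(u)}_{\rm max}(c+\bar k)$; the explicit $\bar k$ is chosen so that $Z^{(l)}_{\rm min}(c)\ge Z^{(u)}_{\rm max}(c+\bar k)$. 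Since the polar information set consists of the $NR$ indices with smallest Bhattacharyya parameters, all indices of weight $\ge c+\bar k$ must then be unfrozen, forcing $1-C\le 1-R\le 2^{-n}\sum_{i=0}^{c+\bar k-1}\binom{n}{i}$. A Chernoff bound on this binomial tail over weight classes contradicts this for $n>\bar n(Z,C,P_e)$, and this is exactly where the expression \eqref{eq:defnbargenLgc} comes from. In short, the counting is over Hamming-weight classes of row indices, not over Bhattacharyya values, and no polarization-rate estimate enters; without this substitution your proof does not close.
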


The following corollary follows by induction.

\begin{cor}[DI bound - any $L$] \label{cor:divintgenlgc}
Consider transmission using elements in ${\mathcal C}_{\rm lin}$ over a $\rm BMSC$ $W$. Fix $P_e \in (0, 1)$ and define the following recursion,
\begin{equation} \label{eq:srecgc}
P_e(m+1) = \frac{3}{16} \bigl(P_e(m)\bigr)^2, \qquad m \in {\mathbb N}\hspace{0.2em}, 
\end{equation}
with the initial condition $P_e(1) = P_e$. Pick any $N$ and $R$ such that \eqref{eq:condln1gc} and \eqref{eq:condln2gc} hold with $P_e(L)$ instead of $P_e$, or, if the code is in ${\mathcal C}_{\rm pol}$, any $N$ satisfying \eqref{eq:condNgc} and any sufficiently large $R$ satisfying \eqref{eq:condRgc}  with $P_e(L)$ instead of $P_e$. Then, the performance of the $\rm MAP$ decoder with list size $L+1$ is lower bounded by
\begin{equation} \label{eq:boundgcrec}
\begin{split}
&P_e^{\rm MAP}(N, R, W, L+1) \\
&\hspace{0.5cm} \ge \Bigl(\frac{3}{16}\Bigr)^{2^L-1} \cdot \bigl(P_e^{\rm MAP}(N, R, W, L=1)\bigr)^{2^L}.
\end{split}
\end{equation}
If $W={\rm BEC}(\varepsilon)$, consider the recursion \eqref{eq:srecgc} with the initial condition $P_e(0) = P_e$. If \eqref{eq:condln1gc}-\eqref{eq:condln2gc} and \eqref{eq:condNgc}-\eqref{eq:condRgc} are satisfied with $P_e(\log_2 L)$ instead of $P_e$ for codes in ${\mathcal C}_{\rm lin}$ and ${\mathcal C}_{\rm pol}$, respectively, then the performance of the $\rm MAP$ decoder with list size $2L$ is lower bounded by
\begin{equation} \label{eq:boundbecrc}
\begin{split}
&P_e^{\rm MAP}(N, R, \varepsilon, 2L) \\
&\hspace{0.8cm} \ge\Bigl(\frac{3}{16}\Bigr)^{2L-1} \cdot \bigl(P_e^{\rm MAP}(N, R, \varepsilon, L=1)\bigr)^{2L}.
\end{split}
\end{equation}
\end{cor}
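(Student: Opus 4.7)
The plan is to prove, by induction on $m$ for $m = 1, 2, \ldots, L$, the statement that $P_e^{\rm MAP}(N, R, W, m+1) \ge (3/16)^{2^m-1}(P_e^{\rm MAP}(N, R, W, 1))^{2^m}$; setting $m = L$ at the end yields \eqref{eq:boundgcrec}. The BEC case follows an entirely analogous induction on $k = 1, \ldots, \log_2(2L)$ in which the ``step'' of Theorem~\ref{teo:lplus1tolplingc} doubles the list size rather than incrementing it, producing \eqref{eq:boundbecrc} when $2^k = 2L$; the shift of the recursion from $P_e(1)=P_e$ to $P_e(0)=P_e$ is an artifact of this doubling.

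For the inductive step (say in the BMSC case), I would invoke Theorem~\ref{teo:lplus1tolplingc} (or Theorem~\ref{teo:lplus1tolppolgc} for polar codes) at list size $m$ to obtain $P_e^{\rm MAP}(N, R, W, m+1) \ge \frac{3}{16}(P_e^{\rm MAP}(N, R, W, m))^2$, then substitute the inductive hypothesis and collapse the exponents via $\frac{3}{16} \cdot ((3/16)^{2^{m-1}-1})^2 = (3/16)^{2^m-1}$ and $2 \cdot 2^{m-1} = 2^m$. The base case $m=1$ is the theorem itself applied at list size $1$.

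The one subtle point is verifying that the hypotheses of Theorems~\ref{teo:lplus1tolplingc}, \ref{teo:lplus1tolppolgc} are satisfied at every intermediate level $m$, not only at $m = L$ where they are explicitly assumed by the corollary. The crucial observation is that the conclusion of these theorems does not depend on the threshold $P_e$, which appears only in the hypotheses \eqref{eq:condln1gc}--\eqref{eq:condln2gc} and \eqref{eq:condNgc}--\eqref{eq:condRgc}. I would therefore keep the single threshold $P_e(L)$ (or $P_e(\log_2 L)$ in the BEC case) fixed throughout the induction. Then \eqref{eq:condln2gc} and \eqref{eq:condNgc} are independent of $m$ and hold by the corollary's hypothesis, while \eqref{eq:condln1gc} (or \eqref{eq:condRgc}) at level $m$ asks that $P_e^{\rm MAP}(N, R, W, m) > P_e(L)$, which follows from the monotonicity of the MAP error probability in the list size: $P_e^{\rm MAP}(N, R, W, m) \ge P_e^{\rm MAP}(N, R, W, L) > P_e(L)$.

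The main obstacle is purely notational bookkeeping of the powers of $3/16$ and of $P_e^{\rm MAP}(N, R, W, 1)$ as they accumulate through the squaring at each step. No new mathematical idea is required beyond the DI bounds already established in Theorems~\ref{teo:lplus1tolplingc} and \ref{teo:lplus1tolppolgc}.
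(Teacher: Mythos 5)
Your proof is correct and is exactly the induction the paper intends; the paper states only that the corollary ``follows by induction'' and gives no further detail. Your observation that the intermediate hypotheses at levels $m<L$ are discharged by fixing the single threshold $P_e(L)$ (resp.\ $P_e(\log_2 L)$ for the $\rm BEC$) and invoking monotonicity of $P_e^{\rm MAP}(N,R,W,L)$ in the list size is the right way to close the one step the paper leaves implicit, and the exponent bookkeeping $\bigl(\tfrac{3}{16}\bigr)^{2^m-1}$, $2^m$ checks out.
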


\subsection{Scaling Exponent of \texorpdfstring{$\rm MAP$}{MAP} Decoding with List}

An immediate consequence of the DI bounds is that the scaling exponent defined in \eqref{eq:scal} does not change as long as $L$ is fixed and finite. More formally, one can define the existence of a scaling law as follows. 

\begin{defi}[Scaling law] \label{def:scal}
Consider a set of codes, parameterized by their block length $N$ and rate $R$, transmitted over the channel $W$, and processed by means of the decoder $\mathcal D$, and let $P_e^{\mathcal D}(N, R, W)$ denote the block error probability. We say that a \emph{scaling law} holds, if there exist a real number $\mu \in (0, +\infty)$, namely the \emph{scaling exponent}, and a function $f: {\mathbb R} \rightarrow [0, 1]$, namely the \emph{mother curve}, such that
\begin{equation}
\lim_{N\rightarrow\infty :\mbox{ }  N^{1/\mu}(C-R) = z} P_e^{\mathcal D}(N, R, W) =f(z).
\end{equation}
\end{defi}

The proof of the theorem below that bounds the scaling behavior of the $\rm MAP$ decoder with any finite list size $L$ is easily deduced from Corollary \ref{cor:divintgenlgc}. 

\begin{teo}[Scaling exponent - $\rm MAP$ decoding with list] \label{teo:scal}
Consider the set of polar codes ${\mathcal C}_{\rm pol}$ transmitted over a $\rm BMSC$ $W$. If the scaling law of Definition \ref{def:scal} holds for the $\rm MAP$ decoder with mother curve $f$ and scaling exponent $\mu$, then for any $L \in {\mathbb N}$,
\begin{equation}
\limsup_{N\rightarrow\infty :\mbox{ }  N^{1/\mu}(C-R) = z} P_e^{\rm MAP}(N, R, W, L) \le f(z),
\end{equation}
\begin{equation}
\begin{split}
&\liminf_{N\rightarrow\infty :\mbox{ }  N^{1/\mu}(C-R) = z} P_e^{\rm MAP}(N, R, W, L) \\
&\hspace{0.8cm}\ge \Bigl(\frac{3}{16}\Bigr)^{2^{L-1}-1} \cdot \bigl(f(z))^{2^{L-1}}.
\end{split}
\end{equation}
\end{teo}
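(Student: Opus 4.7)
The plan is to treat the two inequalities separately, both being direct consequences of Corollary~\ref{cor:divintgenlgc} combined with the scaling-law hypothesis. The upper bound is essentially free: MAP decoding with a longer list cannot increase the block error probability, since MAP with list $L$ succeeds whenever the transmitted codeword is among the $L$ most likely candidates. Hence $P_e^{\rm MAP}(N, R, W, L) \le P_e^{\rm MAP}(N, R, W, 1)$ for every $L \ge 1$, and taking the $\limsup$ along the trajectory $N^{1/\mu}(C-R) = z$ while invoking the assumed scaling law for plain MAP decoding yields $\limsup P_e^{\rm MAP}(N, R, W, L) \le f(z)$.

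For the lower bound, apply Corollary~\ref{cor:divintgenlgc} with $L$ replaced by $L-1$, so that the left-hand side becomes the desired $P_e^{\rm MAP}(N, R, W, L)$ and the exponent on the right-hand side becomes $2^{L-1}$:
\begin{equation*}
P_e^{\rm MAP}(N, R, W, L) \ge \Bigl(\frac{3}{16}\Bigr)^{2^{L-1}-1} \bigl(P_e^{\rm MAP}(N, R, W, 1)\bigr)^{2^{L-1}}.
\end{equation*}
Once this finite-blocklength inequality is in place along every large enough $N$ on the scaling trajectory, taking $\liminf$ on both sides and using continuity of the map $x \mapsto x^{2^{L-1}}$ together with the scaling law applied to $P_e^{\rm MAP}(N, R, W, 1)$ produces exactly the claimed lower bound.

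The main technical point, and the one I expect to be the hardest to write cleanly, is verifying that the hypotheses of the corollary are met on the trajectory. If $f(z) = 0$ the bound is vacuous, so assume $f(z) > 0$ and fix any $P_e \in (0, f(z))$ as the initial value of the recursion \eqref{eq:srecgc}. Three conditions then need to hold for all sufficiently large $N$ along the trajectory: (i) $N > 2^{\bar n(Z, C, P_e(L-1))}$, which is a fixed finite threshold depending only on $Z$, $C$, $P_e$, and $L$, hence eventually satisfied as $N \to \infty$; (ii) $R$ sufficiently large for Theorem~\ref{teo:lplus1tolppolgc} to apply, which is automatic because $R = C - z N^{-1/\mu} \to C$ along the trajectory; and (iii) $P_e^{\rm MAP}(N, R, W, 1) > P_e$, which follows from the scaling-law convergence $P_e^{\rm MAP}(N, R, W, 1) \to f(z) > P_e$. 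Once these three items are verified, the corollary gives the finite-length inequality, the limit argument closes the proof, and the particular choice of $P_e$ disappears from the final answer since the right-hand side of the corollary involves $P_e^{\rm MAP}(N, R, W, 1)$ itself (which the scaling law carries to $f(z)$), not $P_e$.
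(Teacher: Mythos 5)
Your proposal is correct and follows exactly the route the paper intends: the paper gives no written-out proof beyond stating that the theorem "is easily deduced from Corollary \ref{cor:divintgenlgc}", and your argument (monotonicity of $P_e^{\rm MAP}$ in $L$ for the $\limsup$, the corollary applied with $L-1$ in place of $L$ plus the scaling-law limit for the $\liminf$, and the verification that the hypotheses hold eventually along the trajectory $N^{1/\mu}(C-R)=z$ when $f(z)>0$) is precisely the omitted deduction, with the exponents $2^{L-1}$ and $(3/16)^{2^{L-1}-1}$ matching the substitution correctly.
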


In words, if a scaling law holds for the $\rm MAP$ decoder with list size $L$, the scaling exponent $\mu$ is the same as that for the original $\rm MAP$ decoder without list. Therefore, the speed at which capacity is approached as the block length grows large does not depend on the list size, provided that $L$ remains fixed. Notice that, in general, Theorem \ref{teo:scal} holds for any set of linear codes whose minimum distance is unbounded as the block length grows large.

\section{Proof of DI Bounds for MAP Decoding with List and \texorpdfstring{$W ={\rm BEC}(\varepsilon)$}{BEC}} \label{sec:proofBECL1}

As the name suggests, the DI procedure has two main ingredients: the \emph{Intersect} step is based on the correlation inequality stated in Section \ref{subsec:poscor}; the \emph{Divide} step is based on the existence of a suitable subset of codewords, which is discussed in Section \ref{subsec:exset}. The actual bound for linear and polar codes is proven for the simple case $L=1$ in Sections \ref{subsec:prooflin} and \ref{subsec:proofpol}, respectively, while the generalization to any list size is presented in Section \ref{subsec:genbec}.  

\subsection{Intersect Step: Correlation Inequality} \label{subsec:poscor}

Since the ${\rm BEC}$ is a symmetric channel, we can assume that the all-zero codeword has been transmitted. As the $\rm BEC$ does not introduce errors, the $\rm MAP$ decoder outputs all the codewords compatible with the received message. An error is declared if and only if the all-zero codeword is not the unique candidate, i.e., there is more than one candidate codeword.

Let us map the channel output into the erasure pattern $y = (y_1, \cdots, y_N) \in \{0, 1\}^N$, with $y_i = 1$ meaning that the $i$-th $\rm BEC$ has yielded an erasure symbol and $y_i =0$, otherwise. Let $G_y$ be the part of the generator matrix $G$ obtained by eliminating the columns corresponding to the erased symbols, i.e., all the columns of index $i$ s.t. $y_i =1$. It is easy to check that the $\rm MAP$ decoder outputs the information vector $u = (u_1, \cdots, u_{NR})$ if and only if $u G_y =0$. Define $E_u$ to be the set of all the erasure patterns such that $u$ solves $u G_y =0$, i.e.,
\begin{equation} \label{eq:defEu}
E_u = \{ y \in \{0, 1\}^N \hspace{0.2em}|\hspace{0.2em} u G_y = 0\}.
\end{equation}
Let $I_u$ be the set of positions $i$ in which $(u G)_i$ equals $1$, namely,
\begin{equation} \label{eq:defIu}
I_u = \{i \in \{1, \cdots, N\} \hspace{0.2em}|\hspace{0.2em} (uG)_i=1\}.
\end{equation}

Since $P_e^{\rm MAP}(N, R, \varepsilon, L=1)$ is the probability that there exists a non-zero informative vector $u$ that satisfies $u G_y =0$, we have
\begin{equation}
{\mathbb P}(\bigcup_{u \in U} E_u) = P_e^{\rm MAP}(N, R, \varepsilon, L=1),
\end{equation}
with
\begin{equation}
U = {\mathbb F}_2^{NR} \setminus 0^{NR},
\end{equation}
where $0^{NR}$ denotes a sequence of $NR$ $0$s.

We start with two simple lemmas computing ${\mathbb P}(E_u)$ and showing the positive correlation between the events \eqref{eq:defEu}.

\begin{lemma}[${\mathbb P}(E_u)$] \label{lm:pb}
Let $u\in {\mathbb F}_2^{NR}$ and let $E_u$ be defined in \eqref{eq:defEu}. Then,
\begin{equation} \label{eq:peu}
{\mathbb P}(E_u) = \varepsilon^{|I_u|},
\end{equation}
where $I_u$ is given by \eqref{eq:defIu}.
\end{lemma}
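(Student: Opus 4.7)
The plan is to unwind the definitions and observe that the event $E_u$ is essentially a product event on independent erasure coordinates. Recall that $uG_y = 0$ means the codeword $uG$ restricted to the non-erased positions is zero, i.e., $(uG)_i = 0$ for every $i$ with $y_i = 0$. Contrapositively, the event $E_u$ is precisely the event that every position $i$ at which $(uG)_i = 1$ has been erased; positions where $(uG)_i = 0$ place no constraint whatsoever on $y_i$.

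First I would rewrite this characterization using the set $I_u$ from \eqref{eq:defIu}, stating that
\begin{equation*}
E_u = \{y \in \{0,1\}^N : y_i = 1 \text{ for all } i \in I_u\}.
\end{equation*}
Then, since the channel outputs at distinct coordinates are independent and each $y_i$ is an independent $\mathrm{Bernoulli}(\varepsilon)$ random variable (with $y_i = 1$ indicating erasure with probability $\varepsilon$), the probability of the event reduces to a product over the $|I_u|$ constrained coordinates, yielding $\mathbb{P}(E_u) = \varepsilon^{|I_u|}$, which is exactly \eqref{eq:peu}.

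There is no real obstacle: the statement is a direct translation of the independence of the BEC's per-coordinate erasures together with the linearity of the code. The only bookkeeping subtlety worth spelling out is that coordinates outside $I_u$ contribute a factor of $\varepsilon + (1-\varepsilon) = 1$, since both an erasure and a non-erasure there are compatible with $uG_y = 0$; this is what collapses the product to $\varepsilon^{|I_u|}$ rather than some mixture depending on the full weight pattern of $uG$.
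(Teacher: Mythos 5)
Your proof is correct and follows essentially the same route as the paper's: both identify $E_u$ with the event that every position in $I_u$ is erased and then invoke the independence of the per-coordinate erasures to obtain $\varepsilon^{|I_u|}$. The extra remark about unconstrained coordinates contributing a factor of $1$ is harmless bookkeeping that the paper leaves implicit.
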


\begin{proof}
Observe that $u$ solves $uG_y =0$ if and only if all the positions $i$ s.t. $(u G)_i =1$ are erased by the ${\rm BEC}(\varepsilon)$. Therefore, ${\mathbb P}(E_u)$ equals the probability that $|I_u|$ independent erasures at those positions occur, which implies \eqref{eq:peu}.
\end{proof}

\begin{lemma}[Positive correlation between couples] \label{lm:pd}
Let $u, \tilde{u} \in {\mathbb F}_2^{NR}$. Then,
\begin{equation} \label{eq:pinter}
{\mathbb P}(E_u \cap E_{\tilde{u}}) \ge {\mathbb P}(E_u) \cdot {\mathbb P}(E_{\tilde{u}}).
\end{equation}
\end{lemma}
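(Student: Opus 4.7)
The plan is to reduce the claim to a simple counting identity via the characterization of $E_u$ already established in the proof of Lemma \ref{lm:pb}. Recall that $E_u$ is precisely the event that every coordinate in $I_u$ is erased by the channel; equivalently, $E_u$ is an up-set (monotone increasing event) in the product space $\{0,1\}^N$ equipped with the independent erasure measure. Two such events intersect in another up-set of the same type, which allows an exact computation rather than an inequality at the combinatorial level.

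Concretely, I would proceed as follows. First, observe that by the argument used in Lemma \ref{lm:pb}, $y \in E_u \cap E_{\tilde u}$ if and only if every position $i \in I_u \cup I_{\tilde u}$ satisfies $y_i = 1$. Since the coordinates $y_1,\ldots,y_N$ are independent Bernoulli$(\varepsilon)$ random variables, this gives
\begin{equation}
{\mathbb P}(E_u \cap E_{\tilde u}) = \varepsilon^{|I_u \cup I_{\tilde u}|}.
\end{equation}
Second, invoke the trivial set-theoretic bound $|I_u \cup I_{\tilde u}| \le |I_u| + |I_{\tilde u}|$. Since $\varepsilon \in (0,1)$, raising $\varepsilon$ to a smaller exponent yields a larger value, so
\begin{equation}
\varepsilon^{|I_u \cup I_{\tilde u}|} \ge \varepsilon^{|I_u|+|I_{\tilde u}|} = \varepsilon^{|I_u|} \cdot \varepsilon^{|I_{\tilde u}|},
\end{equation}
which by Lemma \ref{lm:pb} equals ${\mathbb P}(E_u) \cdot {\mathbb P}(E_{\tilde u})$, closing the argument.

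There is really no obstacle here: the positive correlation is purely a consequence of the fact that each $E_u$ is a monotone increasing event in independent coordinates, so one could alternatively cite the FKG/Harris inequality to obtain the bound without any explicit computation. I prefer the direct route above because it gives equality at the key step (rather than an opaque inequality) and because the same template — expressing intersections of erasure events as erasures of a union of index sets — will be reused when generalising to more than two events in later sections.
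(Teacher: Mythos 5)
Your proof is correct and is essentially identical to the paper's: both compute ${\mathbb P}(E_u \cap E_{\tilde u}) = \varepsilon^{|I_u \cup I_{\tilde u}|}$ and then use $|I_u \cup I_{\tilde u}| \le |I_u| + |I_{\tilde u}|$ together with $\varepsilon \in (0,1)$ to conclude. The paper merely writes the inclusion--exclusion identity $|I_u \cup I_{\tilde u}| = |I_u| + |I_{\tilde u}| - |I_u \cap I_{\tilde u}|$ explicitly before dropping the subtracted term, which is the same step.
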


\begin{proof}
By definition \eqref{eq:defIu}, we obtain
\begin{equation}
\begin{split}
{\mathbb P}(E_u \cap E_{\tilde{u}}) &= \varepsilon^{|I_{u} \cup I_{\tilde{u}}|} = \varepsilon^{|I_{u}|+|I_{\tilde{u}}|-|I_{u} \cap I_{\tilde{u}}|} \\
& \ge \varepsilon^{{|I_{u}|+|I_{\tilde{u}}|}} = {\mathbb P}(E_u) \cdot {\mathbb P}(E_{\tilde{u}}),
\end{split}
\end{equation}
which gives \eqref{eq:pinter}.
\end{proof}

Let us now generalize Lemma \ref{lm:pd} to unions of sets.

\begin{lemma}[Positive correlation - ${\rm BEC}(\varepsilon)$, $L=1$] \label{lm:pdecunion}
Let $U_1, U_2 \subset {\mathbb F}_2^{NR}$. Then,
\begin{equation} \label{eq:pinterunion}
{\mathbb P}(\bigcup_{u \in U_1} E_u \cap \bigcup_{\tilde{u} \in U_2} E_{\tilde{u}}) \ge {\mathbb P}(\bigcup_{u \in U_1} E_u)\cdot {\mathbb P}(\bigcup_{\tilde{u} \in U_2} E_{\tilde{u}}).
\end{equation}
\end{lemma}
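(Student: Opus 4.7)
The plan is to recognize each event $E_u$ as a \emph{monotone increasing} event on the product space $\{0,1\}^N$ (equipped with the coordinatewise partial order), and then to invoke Harris's inequality (the FKG inequality for product measures), which states that any two increasing events are positively correlated under a product measure. This reduces \eqref{eq:pinterunion} to a clean, standard fact.

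Concretely, I would work with the probability space $\{0,1\}^N$ endowed with the product measure under which $y_i = 1$ independently with probability $\varepsilon$; this is exactly the law of the erasure pattern. Put the partial order $y \preceq y'$ iff $y_i \le y'_i$ for all $i$, and call an event $A \subseteq \{0,1\}^N$ increasing if $y \in A$ and $y \preceq y'$ imply $y' \in A$. By \eqref{eq:defEu}--\eqref{eq:defIu}, $E_u = \{ y : y_i = 1 \text{ for all } i \in I_u \}$, which is visibly increasing: adding more erasures preserves the property $uG_y = 0$. A union of increasing events is increasing, so both $\bigcup_{u \in U_1} E_u$ and $\bigcup_{\tilde u \in U_2} E_{\tilde u}$ are increasing events, and applying Harris's inequality to this pair yields \eqref{eq:pinterunion} immediately.

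The key step is really the monotonicity observation, which mirrors the $|I_u \cup I_{\tilde u}| \le |I_u| + |I_{\tilde u}|$ inequality used in the proof of Lemma~\ref{lm:pd}: enlarging the erasure set can only create, never destroy, a valid non-zero solution of $uG_y = 0$. Once this is in hand, the lemma ``upgrades'' the pairwise correlation of Lemma~\ref{lm:pd} to correlation between arbitrary unions, exactly as Harris's inequality generalizes elementary pair correlation.

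The main obstacle I anticipate is stylistic rather than mathematical: if the authors want a fully self-contained argument (so that Harris/FKG is not imported as a black box), then a short induction on $N$ is needed. I would condition on the value of $y_N \in \{0,1\}$ and exploit the fact that the sections $A_0 = \{y : (y_{1},\dots,y_{N-1},0) \in A\}$ and $A_1 = \{y : (y_{1},\dots,y_{N-1},1) \in A\}$ of an increasing event $A$ satisfy $A_0 \subseteq A_1$; an induction hypothesis on $N-1$ together with the elementary Chebyshev-type inequality $\mathbb{E}[fg] \ge \mathbb{E}[f]\mathbb{E}[g]$ for two monotone functions of a single Bernoulli variable closes the induction. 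This is the standard two-line proof of Harris's inequality and poses no genuine difficulty, but it is the only place where care is needed beyond the monotonicity reduction.
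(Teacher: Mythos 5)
Your proposal is correct and follows essentially the same route as the paper: the paper's proof in Appendix~\ref{app:pdecunion} also endows $\{0,1\}^N$ with the coordinatewise partial order and the product erasure measure, observes that the indicator functions of $\bigcup_{u\in U_1}E_u$ and $\bigcup_{\tilde u\in U_2}E_{\tilde u}$ are monotonically increasing, and concludes by the FKG inequality (which, for this product measure, is exactly Harris's inequality you invoke). The only cosmetic difference is that the paper cites FKG as a black box rather than reproving it by your induction on $N$.
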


The proof of this result can be found in Appendix \ref{app:pdecunion} and comes from an application of the \emph{FKG inequality}, originally proposed in \cite{fkg:ineq}.

\subsection{Divide Step: Existence of a Suitable Subset of Codewords} \label{subsec:exset}

The aim of this section is to show that there exists $U_1 \subset U$ such that ${\mathbb P}(\bigcup_{u \in U_1} E_u)$ is slightly smaller than $\frac{1}{2} P_e^{\rm MAP}(N, R, \varepsilon, L=1)$. To do so, we first upper bound ${\mathbb P}(E_u)$ for all $u \in U$.

\begin{lemma}[No big jumps - ${\rm BEC}(\varepsilon)$] \label{lm:jump}
Let $P_e \in (0, 1)$ and $\varepsilon \in (0, 1)$. Then, for any $N$ and $R$ so that
\begin{equation} \label{eq:hpmin}
d_{\rm min}(N, R) > \frac{\ln(P_e/8)}{\ln \varepsilon},
\end{equation}
the probability of $E_u$ is bounded by
\begin{equation}
{\mathbb P}(E_u) < \frac{P_e}{8}, \qquad \forall \mbox{ }u \in U. 
\end{equation}  
\end{lemma}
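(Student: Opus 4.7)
The plan is to combine Lemma \ref{lm:pb} with the minimum distance hypothesis in a single short chain of inequalities. First I would apply Lemma \ref{lm:pb} to rewrite $\mathbb{P}(E_u) = \varepsilon^{|I_u|}$, where $|I_u|$ is exactly the Hamming weight of the codeword $uG$. Since $u \in U$ is nonzero and the code is linear, $uG$ is a nonzero codeword, so $|I_u| \ge d_{\min}(N,R)$. Because $\varepsilon \in (0,1)$, raising $\varepsilon$ to a larger power only decreases it, so
\begin{equation*}
\mathbb{P}(E_u) = \varepsilon^{|I_u|} \le \varepsilon^{d_{\min}(N,R)}.
\end{equation*}

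Next I would show that the hypothesis \eqref{eq:hpmin} forces the right-hand side to be strictly less than $P_e/8$. Taking natural logarithms and recalling that $\ln \varepsilon < 0$ (which reverses the direction of the inequality when we divide), the condition $d_{\min}(N,R) > \ln(P_e/8)/\ln \varepsilon$ is equivalent to $d_{\min}(N,R) \cdot \ln \varepsilon < \ln(P_e/8)$, i.e., $\varepsilon^{d_{\min}(N,R)} < P_e/8$. Chaining this with the previous display yields the desired bound for every $u \in U$.

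There isn't really a hard step here: the lemma is essentially a direct translation of the minimum distance assumption into a uniform bound on $\mathbb{P}(E_u)$, made possible by the exact formula $\varepsilon^{|I_u|}$ from Lemma \ref{lm:pb}. The only mild care needed is to remember the sign flip when dividing by $\ln \varepsilon$, and to note that the inequality in \eqref{eq:hpmin} is strict, so the conclusion is strict as well.
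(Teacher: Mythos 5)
Your proof is correct and follows exactly the same route as the paper: apply Lemma \ref{lm:pb}, bound $|I_u| \ge d_{\rm min}$ since $uG$ is a nonzero codeword, and then translate the hypothesis \eqref{eq:hpmin} into $\varepsilon^{d_{\rm min}} < P_e/8$ via the sign flip from $\ln \varepsilon < 0$. The paper's proof is just a terser version of the same two-line argument.
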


\begin{proof}
From Lemma \ref{lm:pb} and the definition of minimum distance, we obtain that 
\begin{equation}
{\mathbb P}(E_u) = \varepsilon^{|I_u|} \le \varepsilon^{d_{\rm min}}.
\end{equation}
Using \eqref{eq:hpmin}, the thesis follows.
\end{proof}

The existence of a subset of codewords with the desired property is an immediate consequence of the previous lemma.

\begin{cor}[Existence of $U_1$] 
\label{cor:setcond}
Let $P_e \in (0, 1)$ and $\varepsilon \in (0, 1)$. Then, for any $N$ and $R$ so that \eqref{eq:hpmin} and $P_e^{\rm MAP}(N, R, \varepsilon, L=1) > P_e$ hold, there exists $U_1 \subset U$ which satisfies 
\begin{equation}\label{eq:setcond}
\begin{split}
{\mathbb P}(\bigcup_{u \in U_1} E_u)& \ge \frac{3}{8} P_e^{\rm MAP}(N, R, \varepsilon, L=1),\\
{\mathbb P}(\bigcup_{u \in U_1} E_u)& \le \frac{1}{2} P_e^{\rm MAP}(N, R, \varepsilon, L=1).
\end{split}
\end{equation}
\end{cor}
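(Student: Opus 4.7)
The plan is to build $U_1$ incrementally by adding one codeword at a time and stopping as soon as the probability of the union crosses the threshold $3 P_e^{\rm MAP}(N, R, \varepsilon, L=1)/8$. The key quantitative input is Lemma \ref{lm:jump}: under hypothesis \eqref{eq:hpmin}, every individual event $E_u$ satisfies ${\mathbb P}(E_u) < P_e/8$. Combined with the standing hypothesis $P_e^{\rm MAP}(N, R, \varepsilon, L=1) > P_e$, this yields the strict bound ${\mathbb P}(E_u) < P_e^{\rm MAP}(N, R, \varepsilon, L=1)/8$ for every $u \in U$, which is precisely the ``no big jumps'' estimate that makes the argument work.

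Concretely, I would enumerate $U = \{u^{(1)}, u^{(2)}, \ldots, u^{(M)}\}$ in any order, define the nested sequence $U_1^{(k)} = \{u^{(1)}, \ldots, u^{(k)}\}$ for $k = 0, 1, \ldots, M$ (with $U_1^{(0)} = \emptyset$), and consider the real sequence $p_k = {\mathbb P}(\bigcup_{u \in U_1^{(k)}} E_u)$. By construction $p_0 = 0$ and $p_M = P_e^{\rm MAP}(N, R, \varepsilon, L=1)$. Moreover, since $\bigcup_{u \in U_1^{(k)}} E_u = \bigl(\bigcup_{u \in U_1^{(k-1)}} E_u\bigr) \cup E_{u^{(k)}}$, the increments satisfy
\begin{equation}
0 \le p_k - p_{k-1} \le {\mathbb P}(E_{u^{(k)}}) < \frac{P_e^{\rm MAP}(N, R, \varepsilon, L=1)}{8},
\end{equation}
so the sequence $(p_k)$ grows in steps strictly smaller than one eighth of the final value.

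Let $k^{\star}$ be the smallest index such that $p_{k^{\star}} \ge 3 P_e^{\rm MAP}(N, R, \varepsilon, L=1)/8$; this index exists because $p_M = P_e^{\rm MAP}(N, R, \varepsilon, L=1) > 3 P_e^{\rm MAP}(N, R, \varepsilon, L=1)/8$, and $k^{\star} \ge 1$ because $p_0 = 0$. By minimality of $k^{\star}$ we have $p_{k^{\star}-1} < 3 P_e^{\rm MAP}(N, R, \varepsilon, L=1)/8$, and therefore
\begin{equation}
p_{k^{\star}} = p_{k^{\star}-1} + (p_{k^{\star}} - p_{k^{\star}-1}) < \frac{3 P_e^{\rm MAP}(N, R, \varepsilon, L=1)}{8} + \frac{P_e^{\rm MAP}(N, R, \varepsilon, L=1)}{8} = \frac{P_e^{\rm MAP}(N, R, \varepsilon, L=1)}{2}.
\end{equation}
Setting $U_1 = U_1^{(k^{\star})}$ then gives both inequalities in \eqref{eq:setcond} simultaneously.

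There is no real obstacle in this argument; it is a clean pigeonhole/intermediate-value-style construction whose only nontrivial ingredient is the uniform bound ${\mathbb P}(E_u) < P_e/8$ provided by Lemma \ref{lm:jump}. The only subtle point to verify is the strict inequality $P_e/8 < P_e^{\rm MAP}(N, R, \varepsilon, L=1)/8$, which is what guarantees that once the running union enters the window $[3 P_e^{\rm MAP}(N, R, \varepsilon, L=1)/8, \, P_e^{\rm MAP}(N, R, \varepsilon, L=1)/2]$ it cannot jump over it in a single step.
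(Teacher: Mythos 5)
Your construction is correct and is exactly the argument the paper has in mind: the authors state the corollary as an ``immediate consequence'' of Lemma \ref{lm:jump}, and the intended reasoning is precisely your greedy accumulation of codewords, where the uniform bound ${\mathbb P}(E_u) < P_e/8 < P_e^{\rm MAP}(N, R, \varepsilon, L=1)/8$ guarantees the running union cannot jump over the window $[\tfrac{3}{8}P_e^{\rm MAP}, \tfrac{1}{2}P_e^{\rm MAP}]$. No gaps; your write-up simply makes explicit what the paper leaves implicit.
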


\subsection{Proof of Theorem \ref{teo:lplus1tolplingc}: DI Bound for Linear Codes} \label{subsec:prooflin}

At this point, we are ready to present the proof of Theorem \ref{teo:lplus1tolplingc} for the ${\rm BEC}$ and for a list size $L=1$. Recall that the Bhattacharyya parameter of a ${\rm BEC}(\varepsilon)$ is $Z=\varepsilon$.

\begin{proof}[Proof of Theorem \ref{teo:lplus1tolplingc} for ${\rm BEC}(\varepsilon)$, $L=1$]
Pick $U_1$ that satisfies \eqref{eq:setcond} and let $U_2 = U \setminus U_1$. Consequently,
\begin{equation*}
\begin{split}
&\frac{3}{8}\cdot P_e^{\rm MAP}(N, R, \varepsilon, L=1) \le {\mathbb P}(\bigcup_{u \in U_1} E_u),\\
&\frac{1}{2}\cdot P_e^{\rm MAP}(N, R, \varepsilon, L=1) \le {\mathbb P}(\bigcup_{\tilde{u} \in U_2} E_{\tilde{u}}).
\end{split}
\end{equation*}
Hence,
\begin{equation*}
\frac{3}{16} \cdot \bigl(P_e^{\rm MAP}(N, R, \varepsilon, L=1)\bigr)^2 \le {\mathbb P}(\bigcup_{u \in U_1} E_u) \cdot {\mathbb P}(\bigcup_{\tilde{u} \in U_2} E_{\tilde{u}}).
\end{equation*}

In addition, the following chain of inequalities holds,
\begin{equation*}
\begin{split}
&{\mathbb P}(\bigcup_{u \in U_1} E_u) \cdot {\mathbb P}(\bigcup_{\tilde{u} \in U_2} E_{\tilde{u}}) \le {\mathbb P}(\bigcup_{u \in U_1} E_u \cap \bigcup_{\tilde{u} \in U_2} E_{\tilde{u}})\\
& = {\mathbb P}(\bigcup_{u \in U_1, \tilde{u} \in U_2} E_u \cap E_{\tilde{u}}) \le {\mathbb P}(\bigcup_{u, \tilde{u} \in U, u \neq \tilde{u}} E_u \cap E_{\tilde{u}}),
\end{split}
\end{equation*}
where the first inequality comes from the application of Lemma \ref{lm:pdecunion} and the last passage is a direct consequence of $U_1 \cap U_2 = \emptyset$. Noticing that
\begin{equation*}
{\mathbb P}(\bigcup_{u, \tilde{u} \in U, u \neq \tilde{u}} E_u \cap E_{\tilde{u}}) =
P_e^{\rm MAP}(N, R, \varepsilon, L=2),
\end{equation*}
we obtain the desired result. 
\end{proof}

\subsection{Proof of Theorem \ref{teo:lplus1tolppolgc}: DI Bound for Polar Codes} \label{subsec:proofpol}

In order to apply the bound to polar codes, it suffices to prove the lower bound on the minimum distance, as required in \eqref{eq:hpmin}. 

\begin{lemma}[$d_{\rm min}$ of polar codes - ${\rm BEC}(\varepsilon)$] \label{lm:mindistpolar}
Consider a polar code in ${\mathcal C}_{\rm pol}$ for a ${\rm BEC}(\varepsilon)$. Let $P_e \in (0, 1)$, $\varepsilon \in (0, 1)$, and $N > 2^{\bar{n}(\varepsilon, P_e)}$, where
\begin{equation} \label{eq:defnbargenL}
\bar{n}(\varepsilon, P_e) = 2\bar{m}(\varepsilon, P_e)  -\ln \varepsilon  + \sqrt{-4\bar{m}(\varepsilon, P_e)\cdot\ln\varepsilon+ (\ln\varepsilon)^2}, 
\end{equation}
with
\begin{equation} \label{eq:defbarm}
\bar{m}(\varepsilon, P_e) = \log_2\Biggl(\frac{2\ln(P_e/8)\cdot \ln(1-\varepsilon)} {\ln\varepsilon\cdot \ln\biggl(1-\varepsilon^{\frac{2\ln(P_e/8)} {\ln\varepsilon}}\biggr)}\Biggr).
\end{equation}
Then, the lower bound on $d_{\rm min}$ \eqref{eq:hpmin} holds.
\end{lemma}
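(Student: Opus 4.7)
The plan is to leverage the known characterization of the Arikan polar-code minimum distance, $d_{\min}(N,R) = 2^{w_{\min}}$, where $w_{\min}$ is the smallest Hamming weight, in binary expansion, among the non-frozen indices $i \in \{0,\ldots,N-1\}$. The upper bound $d_{\min} \leq 2^{w_{\min}}$ is immediate since the rows of $G_N = B_N F^{\otimes n}$ have weights $2^{w(i)}$; the matching lower bound follows from the decreasing-monomial structure enjoyed by polar information sets on the BEC. Granted that one can show $w_{\min} > \bar m$ whenever $N > 2^{\bar n}$, the conclusion is immediate: $d_{\min} \geq 2^{\bar m+1} > \ln(P_e/8)/\ln\varepsilon$, where the final numerical step is encoded in \eqref{eq:defbarm}. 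So the real content is to certify that every index with $w(i) \leq \bar m$ is frozen whenever $n > \bar n$.

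The first concrete step is a uniform lower bound on $\varepsilon_N^{(i)}$ over all indices of a fixed weight. On the BEC, $\varepsilon_N^{(i)}$ is obtained from $\varepsilon$ by composing $w = w(i)$ applications of the plus-operator $T_+(x) = x^2$ and $n - w$ applications of the minus-operator $T_-(x) = 2x - x^2$ in the order dictated by the binary expansion of $i$. A direct expansion gives the pointwise inequality $T_-(T_+(x)) \leq T_+(T_-(x))$ on $[0,1]$, and an adjacent-transposition bubble-sort argument promotes this to the statement that, among all orderings of $w$ pluses and $n-w$ minuses, the smallest composition is achieved by performing all pluses first. Using $1 - T_-(x) = (1-x)^2$, iterating the minuses starting from $\varepsilon^{2^w}$ yields the closed-form bound
\[
\varepsilon_N^{(i)} \;\geq\; 1 - (1 - \varepsilon^{2^w})^{2^{n-w}} \qquad \text{whenever } w(i) = w.
\]

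Finally, I would verify that at $w = \bar m$ this bound is large enough to place the index above the $\varepsilon$-threshold separating information from frozen indices in the polar construction. Taking logarithms and using $(1-x)^k \leq e^{-kx}$ reduces the freezing requirement to the condition that $2^{n-\bar m}\,\varepsilon^{2^{\bar m}}$ be suitably large; passing to logarithms once more yields a quadratic-in-$n$ inequality of the form $(n - 2\bar m)^2 \geq 2n\,|\ln\varepsilon|$, whose relevant positive root is exactly $\bar n(\varepsilon, P_e)$ as written in \eqref{eq:defnbargenL}. The threshold $\bar m$ is then calibrated by \eqref{eq:defbarm} so that, in addition to triggering the freezing property, it already satisfies $2^{\bar m} \geq \ln(P_e/8)/\ln\varepsilon$. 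I expect the main obstacle to be precisely this triple balancing: the double-exponential quantity $(1-\varepsilon^{2^{\bar m}})^{2^{n-\bar m}}$ on the polarization side, the single-exponential freezing criterion, and the linear distance target $\ln(P_e/8)/\ln\varepsilon$ must be met simultaneously by the same pair $(\bar m,\bar n)$ in closed form, and unpacking the nested logarithms cleanly is what produces the rather intricate expressions in \eqref{eq:defnbargenL}--\eqref{eq:defbarm}.
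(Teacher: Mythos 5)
Your first half matches the paper's proof: $d_{\min}$ equals the minimum row weight $2^{w(i)}$ over unfrozen indices, and the pointwise inequality $T_-\circ T_+\le T_+\circ T_-$ plus a bubble-sort argument gives the lower bound $Z_{i}\ge 1-(1-\varepsilon^{2^{w}})^{2^{n-w}}$ for any index of weight $w$. The gap is in how you convert ``this Bhattacharyya parameter is large'' into ``this index is frozen.'' There is no fixed ``$\varepsilon$-threshold separating information from frozen indices'': the polar construction freezes the $N(1-R)$ \emph{worst} channels, so whether a given index is frozen is a statement about its rank among all $N$ synthetic channels, not about its absolute value of $Z$. The only leverage available is $R\le C=1-\varepsilon$, i.e.\ at least $\varepsilon N$ positions are frozen; to conclude that a specific bad index is frozen you must show that fewer than $\varepsilon N$ indices are at least as bad. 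This forces two ingredients that are absent from your plan: (i) a matching \emph{upper} bound $Z_j\le Z_{\max}(m')=(1-(1-\varepsilon)^{2^{n-m'}})^{2^{m'}}$ for indices of large weight $m'$ (obtained by applying all minuses first), together with a choice of $\bar k$ such that $Z_{\min}(c)\ge Z_{\max}(c+\bar k)$, so that every index of weight $\ge c+\bar k$ is strictly better than the hypothetical low-weight unfrozen index and hence also unfrozen; and (ii) the counting step
$\varepsilon\le 1-R\le 2^{-n}\sum_{i=0}^{c+\bar k-1}\binom{n}{i}\le \exp\bigl(-(n-2(c+\bar k-1))^2/(2n)\bigr)$,
which is contradicted for $n>\bar n$.

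This also explains why your derivation of $\bar n$ does not go through as written: the quadratic $(n-2\bar m)^2\ge 2n\,|\ln\varepsilon|$ arises from the Chernoff bound on the binomial tail $\sum_{i<c+\bar k}\binom{n}{i}$ in step (ii), not from requiring $2^{n-\bar m}\varepsilon^{2^{\bar m}}$ to be large. Taking logarithms of $1-(1-\varepsilon^{2^{\bar m}})^{2^{n-\bar m}}$ produces a condition involving $2^{\bar m}\ln\varepsilon$, which is linear in $n$ and exponential in $\bar m$, and cannot yield the quadratic-in-$n$ root \eqref{eq:defnbargenL}. Likewise, $\bar m$ in \eqref{eq:defbarm} is not calibrated so that $2^{\bar m}\ge \ln(P_e/8)/\ln\varepsilon$; it is (up to rounding) $c+\bar k$, i.e.\ $\log_2$ of the product of $2\ln(P_e/8)/\ln\varepsilon$ with the correction factor $\ln(1-\varepsilon)/\ln(1-\varepsilon^{2C})$ coming from the requirement $Z_{\min}(c)\ge Z_{\max}(c+\bar k)$. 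So the intermediate lemmas you would need are exactly the two missing ones above, and without them the argument does not close.
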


The proof of Lemma \ref{lm:mindistpolar} is in Appendix \ref{app:mindistpolar}. Thanks to this result, Theorem \ref{teo:lplus1tolppolgc} follows from Theorem \ref{teo:lplus1tolplingc}. Comparing \eqref{eq:defbarmgc} with \eqref{eq:defbarm}, we notice that, for the $\rm BEC$, the constraint on $N$ is less tight than the one required for any $\rm BMSC$.

\subsection{Generalization to Any List Size} \label{subsec:genbec}

Set $l= \log_2 L$ and define $E_{{\rm sp}(u^{(1)}, \cdots, u^{(l)})}$ to be the set of all erasure patterns $y$ such that the set of solutions of the linear system $u G_y =0$ contains the linear span generated by $\{u^{(1)}, \cdots, u^{(l)}\}$, i.e.,
\begin{equation} \label{eq:defEugenL}
\begin{split}
&E_{{\rm sp}(u^{(1)}, \cdots, u^{(l)})} = \bigcap_{u \in {\rm span}(u^{(1)}, \cdots, u^{(l)})} E_u \\
&= \{ y \in \{0, 1\}^N \hspace{0.2em}|\hspace{0.2em} u G_y = 0 \hspace{1.2em} \forall \mbox{ }u \in {\rm span}(u^{(1)}, \cdots, u^{(l)})\}.
\end{split}
\end{equation}
Consider the set ${\rm LS}_l$ containing all the linear spans of ${\mathbb F}_2^{NR}$ with $2^l$ elements. In formulae,
\begin{equation}
\begin{split}
&{\rm LS}_l = \{{\rm span}(u^{(1)}, \cdots, u^{(l)}) \hspace{0.2em}|\hspace{0.2em} u^{(i)} \in {\mathbb F}_2^{NR} \mbox{ }\forall i \in \{1, \cdots, l\},\\
&\hspace{2cm}|{\rm span}(u^{(1)}, \cdots, u^{(l)})| = 2^l\}.
\end{split}
\end{equation}

Since $P_e^{\rm MAP}(N, R, \varepsilon, L)$ is the probability that the solutions to the linear system $u G_y =0$ form a linear span of cardinality strictly greater than $L$, we have
\begin{equation}
\begin{split}
&{\mathbb P}(\bigcup_{{\rm span}(u^{(1)}, \cdots, u^{(l+1)}) \in {\rm LS}_{l+1}} E_{{\rm sp}(u^{(1)}, \cdots, u^{(l+1)})}) \\
&\hspace{4.5cm}= P_e^{\rm MAP}(N, R, \varepsilon, L).
\end{split}
\end{equation}

For the \emph{Intersect} step, we need now the generalization of Lemma \ref{lm:pdecunion}, which is contained in Lemma \ref{lm:pdecuniongen}.  

\begin{lemma}[Positive correlation - ${\rm BEC}(\varepsilon)$, any $L$] \label{lm:pdecuniongen}
Let $P_1, P_2 \subset {\rm LS}_{l}$. Then,
\begin{equation} \label{eq:pinterunionL}
\begin{split}
&{\mathbb P}(\bigcup_{{\rm span}(u^{(1)}, \cdots, u^{(l)}) \in P_1} E_{{\rm sp}(u^{(1)}, \cdots, u^{(l)})} \\
&\hspace{1cm}\cap \bigcup_{{\rm span}(\tilde{u}^{(1)}, \cdots, \tilde{u}^{(l)}) \in P_2} E_{{\rm sp}(\tilde{u}^{(1)}, \cdots, \tilde{u}^{(l)})}) \\
&\ge {\mathbb P}(\bigcup_{{\rm span}(u^{(1)}, \cdots, u^{(l)}) \in P_1} E_{{\rm sp}(u^{(1)}, \cdots, u^{(l)})})\\
&\hspace{0.2cm}\cdot {\mathbb P}(\bigcup_{{\rm span}(\tilde{u}^{(1)}, \cdots, \tilde{u}^{(l)}) \in P_2} E_{{\rm sp}(\tilde{u}^{(1)}, \cdots, \tilde{u}^{(l)})}).
\end{split}
\end{equation}
\end{lemma}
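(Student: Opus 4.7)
The plan is to reduce this to the same FKG-inequality argument that powered Lemma \ref{lm:pdecunion}, by recognizing that every event appearing in \eqref{eq:pinterunionL} is an \emph{increasing} (upset) event in the natural coordinatewise order on $\{0,1\}^N$, and then noting that the underlying law of the erasure pattern $y$ is the product Bernoulli$(\varepsilon)$ measure on $\{0,1\}^N$, so the FKG inequality \cite{fkg:ineq} applies directly.

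First, I would record the monotonicity of the elementary events. From \eqref{eq:defEu}, $y\in E_u$ iff the support of $uG$ is contained in the support of $y$, i.e.\ iff every coordinate $i\in I_u$ is erased. Hence if $y\in E_u$ and $y\le y'$ componentwise, then $y'\in E_u$, so $E_u$ is an upset. Next, an intersection of upsets is an upset, so the event
\begin{equation*}
E_{{\rm sp}(u^{(1)},\ldots,u^{(l)})} \;=\; \bigcap_{u\in{\rm span}(u^{(1)},\ldots,u^{(l)})} E_u
\end{equation*}
is an upset for every linear span. Finally, a union of upsets is an upset, so the two events
\begin{equation*}
A_j \;=\; \bigcup_{{\rm span}(u^{(1)},\ldots,u^{(l)})\in P_j} E_{{\rm sp}(u^{(1)},\ldots,u^{(l)})}, \qquad j=1,2,
\end{equation*}
whose correlation we wish to bound, are upsets in $\{0,1\}^N$.

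With monotonicity established, the second ingredient is the product structure of the probability space. Under transmission over ${\rm BEC}(\varepsilon)$, the coordinates of the erasure pattern $y$ are i.i.d.\ Bernoulli$(\varepsilon)$, so the law of $y$ is a product measure on the finite distributive lattice $\{0,1\}^N$. The FKG inequality then yields ${\mathbb P}(A_1\cap A_2)\ge {\mathbb P}(A_1)\cdot {\mathbb P}(A_2)$, which is exactly \eqref{eq:pinterunionL}. In practice I would mirror the structure of the proof of Lemma \ref{lm:pdecunion} in Appendix \ref{app:pdecunion}, changing only the building blocks from $E_u$ to $E_{{\rm sp}(u^{(1)},\ldots,u^{(l)})}$; the monotonicity closure under $\cap$ and $\cup$ makes this substitution seamless.

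I do not expect any genuine obstacle: the step that would look like work, namely verifying monotonicity of the span-events, is immediate once \eqref{eq:defEugenL} is rewritten as ``every coordinate in $\bigcup_{u\in{\rm span}(u^{(1)},\ldots,u^{(l)})} I_u$ is erased''. The only subtlety worth a line of commentary is that the bound holds for arbitrary families $P_1,P_2\subset {\rm LS}_l$, including overlapping ones, because FKG does not require disjointness; the argument only uses that the two aggregate events are upsets of a product measure.
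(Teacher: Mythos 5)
Your proposal is correct and follows essentially the same route as the paper: Appendix \ref{app:pdecuniongen} likewise works on the Hamming lattice $\{0,1\}^N$ with the product Bernoulli$(\varepsilon)$ measure, observes that the indicator functions of the two union events $A_1, A_2$ are increasing (precisely because each $E_{{\rm sp}(u^{(1)},\ldots,u^{(l)})}$ is an intersection of the upsets $E_u$, and unions preserve monotonicity), and concludes by the FKG inequality. Your phrasing in terms of upset events rather than increasing $\{0,1\}$-valued functions is an equivalent formulation of the same argument.
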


The proof of Lemma \ref{lm:pdecuniongen} is given in Appendix \ref{app:pdecuniongen}. We are going to need also the subsequent simple result concerning the intersection of events \eqref{eq:defEugenL}.

\begin{lemma}[Intersections] \label{lm:inters}
For any ${\rm span}(u^{(1)}, \cdots, u^{(l)})$ and ${\rm span}(\tilde{u}^{(1)}, \cdots, \tilde{u}^{(l)})$,
\begin{equation}
\begin{split}
&E_{{\rm sp}(u^{(1)}, \cdots, u^{(l)})}\cap E_{{\rm sp}(\tilde{u}^{(1)}, \cdots, \tilde{u}^{(l)})} \\
&\hspace{1.2cm}= E_{{\rm sp}(u^{(1)}, \cdots, u^{(l)}, \tilde{u}^{(1)}, \cdots, \tilde{u}^{(l)})}.
\end{split}
\end{equation}
\end{lemma}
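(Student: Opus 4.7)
The plan is to unwind the definition of $E_{{\rm sp}(\cdot)}$ given in \eqref{eq:defEugenL} and reduce the statement to a trivial observation about subspaces. The key point is that, for any fixed erasure pattern $y$, the set of vectors
\begin{equation*}
K_y = \{u \in {\mathbb F}_2^{NR} \mid u G_y = 0\}
\end{equation*}
is the left kernel of the matrix $G_y$, hence a linear subspace of ${\mathbb F}_2^{NR}$.

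From this observation, I would argue as follows. By \eqref{eq:defEugenL}, $y \in E_{{\rm sp}(u^{(1)}, \cdots, u^{(l)})}$ if and only if every element of ${\rm span}(u^{(1)}, \cdots, u^{(l)})$ belongs to $K_y$, which, because $K_y$ is a subspace, is equivalent to $\{u^{(1)}, \cdots, u^{(l)}\} \subset K_y$, i.e., ${\rm span}(u^{(1)}, \cdots, u^{(l)}) \subset K_y$. Applying the same reasoning to $E_{{\rm sp}(\tilde{u}^{(1)}, \cdots, \tilde{u}^{(l)})}$, I get that $y$ lies in the intersection
\begin{equation*}
E_{{\rm sp}(u^{(1)}, \cdots, u^{(l)})} \cap E_{{\rm sp}(\tilde{u}^{(1)}, \cdots, \tilde{u}^{(l)})}
\end{equation*}
if and only if all of $u^{(1)}, \cdots, u^{(l)}, \tilde{u}^{(1)}, \cdots, \tilde{u}^{(l)}$ lie in $K_y$. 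Again exploiting that $K_y$ is a subspace, this is equivalent to saying that the full span ${\rm span}(u^{(1)}, \cdots, u^{(l)}, \tilde{u}^{(1)}, \cdots, \tilde{u}^{(l)})$ is contained in $K_y$, which is exactly the membership condition $y \in E_{{\rm sp}(u^{(1)}, \cdots, u^{(l)}, \tilde{u}^{(1)}, \cdots, \tilde{u}^{(l)})}$.

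There is no real obstacle here: the statement is essentially a restatement of the fact that if a linear constraint $u \mapsto u G_y$ vanishes on two generating sets then it vanishes on the span of their union. The only thing to be careful about is not to confuse the fact that ${\rm span}(u^{(1)}, \cdots, u^{(l)}, \tilde{u}^{(1)}, \cdots, \tilde{u}^{(l)})$ may have cardinality strictly less than $2^{2l}$ (so this combined span need not itself lie in ${\rm LS}_{2l}$); this is, however, irrelevant to the equality of events, since the definition of $E_{{\rm sp}(\cdots)}$ in \eqref{eq:defEugenL} depends only on the underlying set of vectors and not on how many of them there are.
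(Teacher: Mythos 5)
Your proof is correct and is essentially the same argument as the paper's: the paper proves the two inclusions separately, with the nontrivial direction resting on the linearity of the solution set of $u G_y = 0$, which is exactly your observation that $K_y$ is a subspace. Your packaging as a single chain of equivalences via $K_y$ is a clean restatement of the same idea, and your closing remark about the cardinality of the combined span is a valid (if tangential) clarification.
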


\begin{proof}
Since ${\rm span}(u^{(1)}, \cdots, u^{(l)}, \tilde{u}^{(1)}, \cdots, \tilde{u}^{(l)}) \supset {\rm span}(u^{(1)}, \cdots, u^{(l)}) \cup {\rm span}(\tilde{u}^{(1)}, \cdots, \tilde{u}^{(l)})$,
\begin{equation*}
\begin{split}
&E_{{\rm sp}(u^{(1)}, \cdots, u^{(l)})}\cap E_{{\rm sp}(\tilde{u}^{(1)}, \cdots, \tilde{u}^{(l)})} \\
&\hspace{1.2cm}\supset E_{{\rm sp}(u^{(1)}, \cdots, u^{(l)}, \tilde{u}^{(1)}, \cdots, \tilde{u}^{(l)})}.
\end{split}
\end{equation*}
On the other hand, by linearity of the code, for any $u \in {\rm span}(u^{(1)}, \cdots, u^{(l)})$ and any $v \in {\rm span}(\tilde{u}^{(1)}, \cdots, \tilde{u}^{(l)})$, if $u G_y =0$ and $v G_y =0$, then $w G_y =0$ for all $w \in \{u+v : u \in {\rm span}(u^{(1)}, \cdots, u^{(l)}), v \in {\rm span}(\tilde{u}^{(1)}, \cdots, \tilde{u}^{(l)})\}$. As a result, 
\begin{equation*}
\begin{split}
&E_{{\rm sp}(u^{(1)}, \cdots, u^{(l)})}\cap E_{{\rm sp}(\tilde{u}^{(1)}, \cdots, \tilde{u}^{(l)})} \\
&\hspace{1.2cm}\subset E_{{\rm sp}(u^{(1)}, \cdots, u^{(l)}, \tilde{u}^{(1)}, \cdots, \tilde{u}^{(l)})},
\end{split}
\end{equation*}
and the thesis follows.
\end{proof}

As concerns the \emph{Divide} step, Corollary \ref{cor:setcondL} generalizes the result of Corollary \ref{cor:setcond} to any list size $L$. 

\begin{cor}[Existence of $P_1$] \label{cor:setcondL}
Let $P_e \in (0, 1)$ and $\varepsilon \in (0, 1)$. Then, for any $R$ and $N$ satisfying
\begin{align}
&P_e^{\rm MAP}(N, R, \varepsilon, L) > P_e,\\
&d_{\rm min} > \frac{\ln(P_e/8)}{\ln \varepsilon},
\end{align}
there exists $P_1 \subset {\rm LS}_{l+1}$ such that 
\begin{equation}\label{eq:setcondL}
\begin{split}
&{\mathbb P}(\bigcup_{{\rm span}(u^{(1)}, \cdots, u^{(l+1)}) \in P_1} E_{{\rm sp}(u^{(1)}, \cdots, u^{(l+1)})}) \\
&\hspace{4cm}\ge \frac{3}{8} P_e^{\rm MAP}(N, R, \varepsilon, L),\\
&{\mathbb P}(\bigcup_{{\rm span}(u^{(1)}, \cdots, u^{(l+1)}) \in P_1} E_{{\rm sp}(u^{(1)}, \cdots, u^{(l+1)})}) \\
&\hspace{4cm}\le \frac{1}{2} P_e^{\rm MAP}(N, R, \varepsilon, L).
\end{split}
\end{equation}
\end{cor}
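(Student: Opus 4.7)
The plan is to mimic the greedy construction behind Corollary \ref{cor:setcond}, now performed over $(l+1)$-dimensional linear spans in place of individual nonzero information vectors. The first step is a jump bound: for any element ${\rm span}(u^{(1)},\ldots,u^{(l+1)}) \in {\rm LS}_{l+1}$, definition \eqref{eq:defEugenL} gives $E_{{\rm sp}(u^{(1)},\ldots,u^{(l+1)})} \subseteq E_{u^{(1)}}$, and since $u^{(1)} \neq 0$ (the span has cardinality $2^{l+1} \ge 2$), Lemma \ref{lm:pb} together with the hypothesis on $d_{\rm min}$ yield
\begin{equation*}
{\mathbb P}(E_{{\rm sp}(u^{(1)},\ldots,u^{(l+1)})}) \le \varepsilon^{|I_{u^{(1)}}|} \le \varepsilon^{d_{\rm min}} < \frac{P_e}{8}.
\end{equation*}
This is the list-decoding analogue of Lemma \ref{lm:jump}: no single $E_{{\rm sp}}$ can contribute more than $P_e/8$ to any union.

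The second step is the greedy construction. Fix an arbitrary enumeration ${\rm sp}_1, {\rm sp}_2, \ldots, {\rm sp}_M$ of ${\rm LS}_{l+1}$ and define the non-decreasing sequence $p_k = {\mathbb P}(\bigcup_{i=1}^{k} E_{{\rm sp}_i})$, with $p_0 = 0$. As the paper already notes before Lemma \ref{lm:pdecuniongen}, $p_M = P_e^{\rm MAP}(N,R,\varepsilon,L)$, because the error event under $\rm MAP$ decoding with list size $L = 2^l$ is precisely that the solution set of $u G_y = 0$ contains some $(l+1)$-dimensional span. A union bound then gives the increment estimate
\begin{equation*}
p_k - p_{k-1} \le {\mathbb P}(E_{{\rm sp}_k}) < \frac{P_e}{8} < \frac{P_e^{\rm MAP}(N,R,\varepsilon,L)}{8},
\end{equation*}
where the last inequality uses the hypothesis $P_e^{\rm MAP}(N,R,\varepsilon,L) > P_e$.

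It only remains to choose a good stopping index. Let $k^\star$ be the smallest $k$ for which $p_k \ge \frac{3}{8} P_e^{\rm MAP}(N,R,\varepsilon,L)$; such a $k^\star$ exists because $p_M = P_e^{\rm MAP}$. Minimality forces $p_{k^\star - 1} < \frac{3}{8} P_e^{\rm MAP}$, and combining with the jump estimate above gives $p_{k^\star} < \frac{3}{8} P_e^{\rm MAP} + \frac{1}{8} P_e^{\rm MAP} = \frac{1}{2} P_e^{\rm MAP}$. Setting $P_1 = \{{\rm sp}_1,\ldots,{\rm sp}_{k^\star}\}$ then yields both inequalities in \eqref{eq:setcondL}. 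I do not anticipate a real obstacle: once the single-span jump bound is in place, the argument reduces to the same discrete intermediate-value step already used for $L=1$ in Corollary \ref{cor:setcond}, with the simple observation $E_{{\rm sp}} \subseteq E_{u^{(1)}}$ playing the role of Lemma \ref{lm:pb}.
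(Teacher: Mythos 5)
Your proposal is correct and follows essentially the same route the paper intends (the paper leaves this as an "immediate consequence" of the jump bound): the observation $E_{{\rm sp}(u^{(1)},\ldots,u^{(l+1)})}\subseteq E_{u^{(1)}}$ with $u^{(1)}\neq 0$ transfers Lemma \ref{lm:jump} to spans, and the greedy enumeration with the discrete intermediate-value step then lands the accumulated probability in $[\tfrac{3}{8}P_e^{\rm MAP},\tfrac{1}{2}P_e^{\rm MAP}]$ exactly as in Corollary \ref{cor:setcond}. No gaps.
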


At this point, we can prove Theorem \ref{teo:lplus1tolplingc} for the $\rm BEC$ and for any list size $L$. 

\begin{proof}[Proof of Theorem \ref{teo:lplus1tolplingc} for ${\rm BEC}(\varepsilon)$, any $L$]
Pick $P_1$ that satisfies \eqref{eq:setcondL} and let $P_2 = {\rm LS}_{l+1} \setminus P_1$. Consequently, applying Lemma \ref{lm:pdecuniongen} and \ref{lm:inters}, we have
\begin{equation*}
\begin{split}
\frac{3}{16} &\cdot \bigl(P_e^{\rm MAP}(N, R, \varepsilon, L)\bigr)^2 \\
&\le {\mathbb P}(\bigcup_{{\rm span}(u^{(1)}, \cdots, u^{(l+1)}) \in P_1} E_{{\rm sp}(u^{(1)}, \cdots, u^{(l+1)})}) \\
&\hspace{0.2cm}\cdot {\mathbb P}(\bigcup_{{\rm span}(\tilde{u}^{(1)}, \cdots, \tilde{u}^{(l+1)}) \in P_2} E_{{\rm sp}(\tilde{u}^{(1)}, \cdots, \tilde{u}^{(l+1)})})\\
& \le {\mathbb P}(\bigcup_{{\rm span}(u^{(1)}, \cdots, u^{(l+1)}) \in P_1} E_{{\rm sp}(u^{(1)}, \cdots, u^{(l+1)})} \\
&\hspace{0.4cm}\cap \bigcup_{{\rm span}(\tilde{u}^{(1)}, \cdots, \tilde{u}^{(l+1)}) \in P_2} E_{{\rm sp}(\tilde{u}^{(1)}, \cdots, \tilde{u}^{(l+1)})})\\
& \le {\mathbb P}(\bigcup_{\substack{{\rm span}(u^{(1)}, \cdots, u^{(l+1)}) \in P_1\\{\rm span}(\tilde{u}^{(1)}, \cdots, \tilde{u}^{(l+1)}) \in P_2}} E_{{\rm sp}(u^{(1)}, \cdots, u^{(l+1)}, \tilde{u}^{(1)}, \cdots, \tilde{u}^{(l+1)})})\\
& \le P_e^{\rm MAP}(N, R, \varepsilon, 2L),\\
\end{split}
\end{equation*}
where the last inequality is due to the fact that $|{\rm span}(u^{(1)}, \cdots, u^{(l+1)}, \tilde{u}^{(1)}, \cdots, \tilde{u}^{(l+1)})| \ge 2^{l+2} = 4L > 2L$, since $P_1 \cap P_2 = \emptyset$. 
\end{proof}

\section{Proof of DI Bounds for MAP Decoding with List and Any \texorpdfstring{$\rm BMSC$}{BMSC}} \label{sec:proofBMSC}

\subsection{Case \texorpdfstring{$L=1$}{L=1}}

Since the information vectors are equiprobable, the $\rm MAP$ decision rule is given by 
\begin{equation*}
\hat{u} = \arg\max_{\tilde{u}} p(y | \tilde{u}).
\end{equation*}

Define $E'_u$ as the set of all $y$ such that $p(y|u) \ge p(y|0^{NR})$. Simple algebraic manipulations show that 
\begin{equation} \label{eq:defEu1}
\begin{split}
E'_u &= \{y \in  {\mathcal Y}^N \hspace{0.2em}|\hspace{0.2em} \sum_{i=1}^N \ln \frac{p(y_i|(uG)_i)}{p(y_i|0)} \ge 0\} \\
&= \{y \in  {\mathcal Y}^N \hspace{0.2em}|\hspace{0.2em}\sum_{i \in I_u} \ln \frac{p(y_i|1)}{p(y_i|0)}\ge 0\}, 
\end{split}
\end{equation} 
where ${\mathcal Y}$ is the output alphabet of the channel and $I_u$ is defined in \eqref{eq:defIu}.

Note that $P_e^{\rm MAP}(N, R, \varepsilon, L=1)$ is the probability that there exists a non-zero informative vector $u$ s.t. $p(y|u) \ge p(y|0^{NR})$. Then, we have
\begin{equation}
{\mathbb P}(\bigcup_{u \in U} E'_u) = P_e^{\rm MAP}(N, R, W, L=1).
\end{equation}

As concerns the \emph{Intersect} part, we generalize the inequality of Lemma \ref{lm:pdecunion} with the correlation result of Lemma \ref{lm:pdecuniongc}.

\begin{lemma}[Positive correlation - $\rm BMSC$, $L=1$] \label{lm:pdecuniongc}
Let $U'_1, U'_2 \subset {\mathbb F}_2^{NR}$. Then,
\begin{equation} \label{eq:pinteruniongc}
{\mathbb P}(\bigcup_{u \in U'_1} E'_u \cap \bigcup_{\tilde{u} \in U'_2} E'_{\tilde{u}}) \ge {\mathbb P}(\bigcup_{u \in U'_1} E'_u)\cdot {\mathbb P}(\bigcup_{\tilde{u} \in U'_2} E'_{\tilde{u}}).
\end{equation}
\end{lemma}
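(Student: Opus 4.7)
The plan is to carry out the same Fortuin--Kasteleyn--Ginibre (FKG) argument as in the $\rm BEC$ case of Lemma \ref{lm:pdecunion}, but with an ordering on $\mathcal{Y}^N$ tailored to the log-likelihood ratio (LLR). By symmetry of the $\rm BMSC$ I may assume without loss of generality that the all-zero codeword is transmitted, so that the channel outputs $y_1,\dots,y_N$ are independent, each with marginal distribution $p(\cdot\mid 0)$. Introduce the coordinate-wise LLR map $\lambda\colon \mathcal{Y}^N \to \overline{\mathbb R}^N$ defined by $\lambda(y)_i = \ln\frac{p(y_i|1)}{p(y_i|0)}$, and push the output distribution forward through $\lambda$; this yields a product measure $\nu^{\otimes N}$ on $\overline{\mathbb R}^N$, endowed with the coordinate-wise order induced by the usual total order on each factor $\overline{\mathbb R}$.

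The next step is to verify monotonicity. By \eqref{eq:defEu1}, each $E'_u$ equals $\lambda^{-1}(B_u)$ where $B_u = \{z \in \overline{\mathbb R}^N : \sum_{i \in I_u} z_i \ge 0\}$, and the latter is clearly an up-set: increasing any coordinate of $z$ only increases (or leaves unchanged) the sum $\sum_{i\in I_u} z_i$. Since arbitrary unions of up-sets are up-sets, $\bigcup_{u \in U'_1} B_u$ and $\bigcup_{\tilde u \in U'_2} B_{\tilde u}$ are increasing events under $\nu^{\otimes N}$, and $\bigcup_{u \in U'_1} E'_u$ and $\bigcup_{\tilde u \in U'_2} E'_{\tilde u}$ are precisely their $\lambda$-preimages.

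Finally, I would invoke Harris' inequality, i.e., the FKG inequality for product measures on a product of linearly ordered marginals, which states that any two increasing events under such a measure are positively correlated. Applied to the two up-sets above, this yields \eqref{eq:pinteruniongc} upon pulling back through $\lambda$, since preimages respect both unions and intersections. The main obstacle, compared to the $\rm BEC$ proof, is the move from the finite Boolean lattice $\{0,1\}^N$ to a possibly continuous alphabet; this requires a measure-theoretic version of FKG for products of totally ordered marginals, which can be obtained either directly or via a quantization-and-limit argument that discretizes the LLR to finitely many levels before passing to the limit. All remaining steps are then transcriptions of the corresponding $\rm BEC$ arguments, with ``$y_i=1$ on $I_u$'' replaced by ``the LLR sum over $I_u$ is non-negative''.
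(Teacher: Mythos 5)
Your proposal is correct and follows essentially the same route as the paper: order the outputs coordinate-wise by their log-likelihood ratios, observe that each $\bigcup_u E'_u$ is an increasing event for this order because $E'_u$ is defined by the sign of $\sum_{i\in I_u}\ln\frac{p(y_i|1)}{p(y_i|0)}$, and apply the FKG/Harris inequality for the product measure $p(\cdot\,|\,0^{NR})$, handling infinite alphabets via the extension of FKG to products of totally ordered measure spaces. The only cosmetic difference is that you push forward through the LLR map to $\overline{\mathbb R}^N$, whereas the paper orders $\mathcal Y$ itself by the LLR and invokes the canonical channel representation to make that order antisymmetric; the two devices serve the same purpose.
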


The proof of Lemma \ref{lm:pdecuniongc} can be found in Appendix \ref{app:pdecuniongc}.

For the \emph{Divide} step, we need to show that ${\mathbb P}(E'_u)$ can be made as small as we want, as done in Lemma \ref{lm:jump} for the events \eqref{eq:defEu}. This result is provided by Lemma \ref{lm:jumpgc}, stated and proven below.

\begin{lemma}[No big jumps - $\rm BMSC$] \label{lm:jumpgc}
Let $P_e \in (0, 1)$ and $Z \in (0, 1)$. Then, for any $N$ and $R$ so that
\begin{equation} \label{eq:hpmingc}
d_{\rm min}(N, R) > \frac{\ln(P_e/8)}{\ln Z},
\end{equation}
the probability of $E'_u$ can be bounded as
\begin{equation}
{\mathbb P}(E'_u) < \frac{P_e}{8}, \qquad \forall \mbox{ }u \in U. 
\end{equation}  
\end{lemma}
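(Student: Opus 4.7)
The result is a direct application of the classical Bhattacharyya pairwise-error bound combined with the minimum-distance hypothesis; my plan is to carry this out in three short steps. First, I use the fact that $W$ is output-symmetric to assume without loss of generality that the all-zero codeword is transmitted, so that $\mathbb{P}(E'_u)$ is simply the pairwise probability that the $\rm MAP$ detector prefers the codeword $uG$ to $0^N$ under $W^{\otimes N}(\,\cdot\,|0^N)$.

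Second, I derive the estimate $\mathbb{P}(E'_u) \le Z^{|I_u|}$. Starting from the second characterization of $E'_u$ in \eqref{eq:defEu1}, every $y \in E'_u$ satisfies $\prod_{i \in I_u} p(y_i|0) \le \prod_{i \in I_u}\sqrt{p(y_i|0)p(y_i|1)}$, obtained by taking the geometric mean of the two sides of $\prod_{i \in I_u} p(y_i|1) \ge \prod_{i \in I_u} p(y_i|0)$. Multiplying both sides by $\prod_{i \notin I_u} p(y_i|0)$, enlarging the summation/integration domain from $E'_u$ to all of $\mathcal{Y}^N$ (which only increases the sum, since the integrand is nonnegative), and using the product structure of $W^{\otimes N}$, the coordinates $i \notin I_u$ contribute factors equal to $1$ and the coordinates $i \in I_u$ each contribute $\sum_{y_i \in \mathcal{Y}}\sqrt{p(y_i|0)p(y_i|1)} = Z$, proving the bound.

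Third, I invoke linearity: since $u \in U$ is nonzero, $uG$ is a nonzero codeword whose Hamming weight equals $|I_u|$ and is therefore at least $d_{\rm min}(N,R)$. Because $Z \in (0,1)$, the map $z \mapsto Z^z$ is decreasing, so $\mathbb{P}(E'_u) \le Z^{d_{\rm min}(N,R)}$. Finally, I rearrange hypothesis \eqref{eq:hpmingc}, being careful that $\ln Z < 0$ flips the inequality upon multiplication: the condition $d_{\rm min}(N,R) > \ln(P_e/8)/\ln Z$ is equivalent to $d_{\rm min}(N,R)\cdot\ln Z < \ln(P_e/8)$, i.e., $Z^{d_{\rm min}(N,R)} < P_e/8$, which is the strict bound claimed.

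The only subtleties worth flagging are bookkeeping rather than conceptual: the ``$\ge 0$'' in the definition of $E'_u$ includes ties, but this is harmless because the inequality used in the Bhattacharyya step remains valid at equality; and the sign of $\ln Z$ must be handled correctly when manipulating \eqref{eq:hpmingc}. Consequently I expect no real obstacle. This lemma plays exactly the role of Lemma \ref{lm:jump} in the $\rm BEC$ case, with the trivial identity $\mathbb{P}(E_u) = \varepsilon^{|I_u|}$ replaced by the Bhattacharyya inequality $\mathbb{P}(E'_u) \le Z^{|I_u|}$, explaining why the general $\rm BMSC$ statement is formally identical once $\varepsilon$ is replaced by $Z$.
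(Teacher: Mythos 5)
Your proposal is correct and follows essentially the same route as the paper: the paper simply cites the Bhattacharyya pairwise-error bound ${\mathbb P}(E'_u) \le Z^{|I_u|}$ (Lemma 4.66 of the reference) and then combines $|I_u| \ge d_{\rm min}$ with hypothesis \eqref{eq:hpmingc}, exactly as you do. The only difference is that you spell out the standard derivation of the Bhattacharyya bound rather than citing it, and your sign bookkeeping for $\ln Z < 0$ is handled correctly.
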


\begin{proof}
It is possible to relate the probability of $E'_u$ and the Bhattacharyya parameter $Z$ of the $\rm BMSC$ $W$ as \cite[Lemma 4.66]{urbanke:coding} 
\begin{equation}
{\mathbb P}(E'_u) \le Z^{|I_u|}.
\end{equation}
Since $|I_u| \ge d_{\rm min} > \ln(P_e/8) / \ln Z$, the thesis easily follows.
\end{proof}

From Lemma \ref{lm:jumpgc} we deduce a result similar to that of Corollary \ref{cor:setcond}. Then, by using also Lemma \ref{lm:pdecuniongc} and by following the same procedure seen at the end of Section \ref{subsec:prooflin}, the proof of Theorem \ref{teo:lplus1tolplingc} with $L=1$ for any $\rm BMSC$ is readily obtained.

Lemma \ref{lm:mindistpolargc} generalizes the result of Lemma \ref{lm:mindistpolar}, showing that for $N$ big enough the required lower bound on the minimum distance holds. Hence, the DI bound and the subsequent scaling result are true for the class of polar codes.

\begin{lemma}[$d_{\rm min}$ of polar codes - $\rm BMSC$] \label{lm:mindistpolargc}
Let $P_e \in (0, 1)$, $Z \in (0, 1)$, and $N > 2^{\bar{n}(Z, C, P_e)}$, where $\bar{n}(Z, C, P_e)$ is given by \eqref{eq:defnbargenLgc}.
Then, the lower bound on $d_{\rm min}$ \eqref{eq:hpmingc} holds.
\end{lemma}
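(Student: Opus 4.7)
The plan is to mirror the proof of Lemma \ref{lm:mindistpolar} (the BEC case, carried out in Appendix \ref{app:mindistpolar}), replacing the exact BEC polarization recursions by inequalities that hold for any BMSC, at the cost of the additional $\ln(1-C)$ term appearing in \eqref{eq:defnbargenLgc}.

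First, I would reduce the minimum-distance bound to a statement about Hamming weights of indices in the information set. For a polar code built from $F^{\otimes n}$ with information set $\mathcal{A}\subset\{0,\ldots,N-1\}$, the standard structural fact $d_{\rm min}(N,R)=\min_{i\in\mathcal{A}} 2^{w_H(i)}$ holds, where $w_H(i)$ is the Hamming weight of the binary expansion of $i$, because the rows of $F^{\otimes n}$ have weights $2^{w_H(i)}$. Hence \eqref{eq:hpmingc} is equivalent to the statement that every $i\in\mathcal{A}$ satisfies $w_H(i)>\log_2(\ln(P_e/8)/\ln Z)$, and it is enough to show that every index $i$ with $w_H(i)\le\bar{m}(Z,P_e)$ has Bhattacharyya parameter $Z^{(i)}$ too large to be selected in $\mathcal{A}$.

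Second, I would lower-bound $Z^{(i)}$ for indices with few "$+$" operations, using the polarization relations valid for any BMSC: the exact identity $Z(W^+)=Z(W)^2$, the monotonicity $Z(W^-)\ge Z(W)$, and the capacity conservation $I(W^+)+I(W^-)=2I(W)$. Tracking the evolution of $Z$ along the $+/-$ sequence prescribed by the binary expansion of $i$, and passing to the worst-case ordering that minimizes the final $Z^{(i)}$, I would produce a quantitative bound of the form $Z^{(i)}\ge g(Z,C,w_H(i),n)$, where $g\to 1$ as $n\to\infty$ for fixed $w_H(i)$. This is the BMSC analog of the BEC bound $Z^{(i)}\ge\bigl(1-(1-\varepsilon)^{2^{n-w_H(i)}}\bigr)^{2^{w_H(i)}}$, with the capacity $C$ entering to compensate for the absence of an exact expression for $Z(W^-)$. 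Third, using the capacity identity $\sum_{i=0}^{N-1} I(W^{(i)})=NC$ together with a Fano-type inequality relating $Z^{(i)}$ and $I(W^{(i)})$, I would bound the number of synthetic channels with small $Z^{(i)}$; combined with the lower bound above, this forces every information index to satisfy $w_H(i)>\bar{m}(Z,P_e)$ once $N>2^{\bar{n}(Z,C,P_e)}$, which by the reduction of the first step yields \eqref{eq:hpmingc}.

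The main obstacle is the second step. For the BEC, the exact recursion $Z(W^-)=2Z-Z^2$ yields an explicit lower bound on $Z^{(i)}$; for a general BMSC only one-sided bounds such as $Z(W^-)\le 2Z(W)-Z(W)^2$ and $Z(W^-)\ge Z(W)$ are available, and one must detour through the capacity $C$ to obtain a matching quantitative lower bound. It is precisely this detour that causes the threshold $\bar{n}(Z,C,P_e)$ in \eqref{eq:defnbargenLgc} to be larger than the BEC counterpart \eqref{eq:defnbargenL}, through the replacement of $-\ln\varepsilon$ by $-\ln(1-C)$.
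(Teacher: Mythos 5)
Your overall skeleton (reduce $d_{\rm min}$ to row weights $2^{w_H(i)}$, lower-bound the Bhattacharyya parameter of low-weight indices, and derive a contradiction with the selection rule) is the same as the paper's, but the crucial quantitative step is missing. You assert that for a general $\rm BMSC$ only $Z(W^-)\ge Z(W)$ is available as a lower bound on the minus transform, and you propose to compensate via $I(W^+)+I(W^-)=2I(W)$ and a Fano-type inequality. This is where the argument breaks. First, the stronger two-sided bound $\sqrt{1-(1-x^2)^2}\le f_0(x)\le 1-(1-x)^2$ \emph{is} available for any $\rm BMSC$ (it is exactly \cite[Problem 4.62]{urbanke:coding}, cited in the paper), and the paper's proof in Appendix \ref{app:mindistpolargc} rests on it: iterating the lower bound $f_0^{(l)}$ gives $Z_{\rm min}^{(l)}(m)=\sqrt{1-(1-Z^{2^{m+1}})^{2^{n-m}}}\rightarrow 1$ as $n\rightarrow\infty$ for fixed $m$. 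Second, your substitute cannot deliver a bound $g(Z,C,w_H(i),n)\rightarrow 1$: monotonicity alone gives only $Z^{(i)}\ge Z^{2^{w_H(i)}}$, a constant bounded away from $1$, and capacity conservation combined with $Z(W)\ge 1-I(W)$ yields at best $Z^{(i)}\ge 1-I(W)$, again a constant. Meanwhile the comparison target $Z_{\rm max}^{(u)}(c+k)=(1-(1-Z)^{2^{n-c-k}})^{2^{c+k}}$ for the high-weight indices also tends to $1$ as $n\rightarrow\infty$, so no lower bound that stays bounded away from $1$ can ever dominate it; the whole point of the paper's $f_0^{(l)}$ is that both sides tend to $1$ and one must check which does so faster (this is the computation producing $\bar{k}$ and hence $\bar{m}$ in \eqref{eq:defbarmgc}). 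Your third step, counting channels with small $Z^{(i)}$ via $\sum_i I(W^{(i)})=NC$, also does not obviously close the contradiction in the regime $R\rightarrow C$ that this paper cares about, whereas the paper's count is different: once all indices with $\ge c+\bar{k}$ ones are shown to be unfrozen, $1-C\le 1-R\le 2^{-n}\sum_{i=0}^{c+\bar{k}-1}\binom{n}{i}$, which is contradicted by the Chernoff bound for $n>\bar{n}(Z,C,P_e)$.

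A smaller but related inaccuracy: the $-\ln(1-C)$ term in \eqref{eq:defnbargenLgc} is not the price of a ``detour through capacity'' to control $Z(W^-)$. It comes from the final counting step $1-R\ge 1-C$ and is already present in the $\rm BEC$ version \eqref{eq:defnbargenL}, where it reads $-\ln\varepsilon$ precisely because $\varepsilon=1-C$ there. The genuine extra cost of the $\rm BMSC$ case sits inside $\bar{m}$: the exponent $4\ln(P_e/8)/\ln Z$ in \eqref{eq:defbarmgc} versus $2\ln(P_e/8)/\ln\varepsilon$ in \eqref{eq:defbarm}, reflecting the loss from $f_0^{(l)}(x)=\sqrt{1-(1-x^2)^2}$ relative to the exact $\rm BEC$ recursion $1-(1-x)^2$.
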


The proof of Lemma \ref{lm:mindistpolargc} is in Appendix \ref{app:mindistpolargc}.

\subsection{Generalization to Any List Size} \label{subsec:bmscL}

Let $E'_{u^{(1)}, \cdots, u^{(L)}}$ be the set of all $y$ such that $p(y|u) \ge p(y|0^{NR})$ for all $ u \in \{u^{(1)}, \cdots, u^{(L)}\}$, i.e.,
\begin{equation} \label{eq:defEugenLgc}
\begin{split}
&E'_{u^{(1)}, \cdots, u^{(L)}} = \bigcap_{u \in \{u^{(1)}, \cdots, u^{(L)}\}} E'_u =  \{y \in  {\mathcal Y}^N \hspace{0.2em}|\\
&\hspace{0.2cm} \sum_{i=1}^N \ln \frac{p(y_i|(uG)_i)}{p(y_i|0)} \ge 0\hspace{1.2em} \forall \mbox{ }u \in \{u^{(1)}, \cdots, u^{(L)}\}\}.
\end{split}
\end{equation}
Consider the set ${\rm SS}_L$ containing all the subsets of $L$ distinct elements of ${\mathbb F}_2^{NR}$. In formulae,
\begin{equation}
\begin{split}
{\rm SS}_L &= \{ \{u^{(1)}, \cdots, u^{(L)}\} : u^{(i)} \in {\mathbb F}_2^{NR}  \mbox{ }\forall i \in \{1, \cdots, L\},\\
&\hspace{1cm} u^{(i)}\neq u^{(j)} \hspace{0.6em} \forall \hspace{0.3em} i\neq j\}.
\end{split}
\end{equation}

Since $P_e^{\rm MAP}(N, R, W, L)$ is the probability that there are at least $L$ distinct non-zero information vectors $u^{(1)}, \cdots, u^{(L)}$ s.t. $p(y|u) \ge p(y|0^{NR})$ for all $ u \in \{u^{(1)}, \cdots, u^{(L)}\}$, we have
\begin{equation}
\begin{split}
{\mathbb P}(\bigcup_{\{u^{(1)}, \cdots, u^{(L)}\} \in {\rm SS}_{L}} E'_{u^{(1)}, \cdots, u^{(L)}})= P_e^{\rm MAP}(N, R, W, L).
\end{split}
\end{equation}

In order to prove Theorem \ref{teo:lplus1tolplingc} for any fixed list size $L$ and for any $\rm BMSC$, one can follow similar steps to those of Section \ref{subsec:genbec} and the result is readily obtained. The only part which requires some further investigation consists in the generalization of Lemma \ref{lm:pdecuniongc} with the result below, which is proven in Appendix \ref{app:pdecuniongengc}. 

\begin{lemma}[Positive correlation - $\rm BMSC$, any $L$] \label{lm:pdecuniongengc}
Let $P'_1, P'_2 \subset {\rm SS}_L$. Then,
\begin{equation} \label{eq:pinterunionLgc}
\begin{split}
&{\mathbb P}(\bigcup_{\{u^{(1)}, \cdots, u^{(L)}\} \in P'_1} E'_{u^{(1)}, \cdots, u^{(L)}} \\
&\hspace{1cm}\cap \bigcup_{\{\tilde{u}^{(1)}, \cdots, \tilde{u}^{(L)}\} \in P'_2} E'_{\tilde{u}^{(1)}, \cdots, \tilde{u}^{(L)}}) \\
&\ge {\mathbb P}(\bigcup_{\{u^{(1)}, \cdots, u^{(L)}\} \in P'_1} E'_{u^{(1)}, \cdots, u^{(L)}})\\
&\hspace{0.2cm}\cdot {\mathbb P}(\bigcup_{\{\tilde{u}^{(1)}, \cdots, \tilde{u}^{(L)}\} \in P'_2} E'_{\tilde{u}^{(1)}, \cdots, \tilde{u}^{(L)}}).
\end{split}
\end{equation}
\end{lemma}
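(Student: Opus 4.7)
The plan is to reduce the inequality to the classical FKG inequality on the Boolean hypercube with product Bernoulli measure, mirroring the $L=1$ proof of Lemma \ref{lm:pdecuniongc} and adding only one new observation: an intersection of down-sets is a down-set.

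First, by the channel symmetry and the linearity of the code I may assume the all-zero codeword is transmitted. For each $i$, let $L_i = \ln(p(y_i|1)/p(y_i|0))$ denote the LLR in favor of $1$, and write $L_i = (1-2T_i)|L_i|$, where $T_i \in \{0,1\}$ records the sign. Under the all-zero assumption, the standard symmetry argument for $\rm BMSC$s shows that, conditional on the magnitudes $(|L_1|, \ldots, |L_N|)$, the signs $T_1, \ldots, T_N$ are mutually independent Bernoulli random variables, so the conditional law on $\{0,1\}^N$ is a product measure.

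Next, rewrite the defining inequality \eqref{eq:defEu1} in the $T$-coordinates:
\begin{equation*}
E'_u = \Bigl\{T \in \{0,1\}^N \hspace{0.2em}\Big|\hspace{0.2em} \sum_{i \in I_u}|L_i|\, T_i \le \tfrac{1}{2}\sum_{i \in I_u}|L_i|\Bigr\}.
\end{equation*}
This is a \emph{decreasing} event (down-set) with respect to the coordinate-wise partial order on $\{0,1\}^N$: flipping any $T_i$ from $1$ to $0$ can only decrease the left-hand side. Since arbitrary intersections and unions of down-sets are again down-sets, each event $E'_{u^{(1)},\ldots,u^{(L)}} = \bigcap_{j=1}^L E'_{u^{(j)}}$ appearing in \eqref{eq:defEugenLgc} is a down-set, and therefore so are both of the unions on either side of \eqref{eq:pinterunionLgc}.

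Having reached this setup, the FKG inequality applied to the product Bernoulli measure on $\{0,1\}^N$ (conditional on $(|L_1|,\ldots,|L_N|)$) yields the positive correlation bound pointwise in the magnitudes; integrating over $(|L_1|,\ldots,|L_N|)$ preserves the inequality and produces \eqref{eq:pinterunionLgc}. The only step I expect to require real care is the clean verification that, conditional on the LLR magnitudes, the signs $T_i$ genuinely form a product Bernoulli measure; this is the standard decomposition of a $\rm BMSC$ into a reliability part and a sign part, and it is the sole place where the analytic structure of the channel enters. Once this is nailed down, the passage from $L=1$ (Lemma \ref{lm:pdecuniongc}) to arbitrary $L$ is essentially automatic, just as the passage from Lemma \ref{lm:pdecunion} to Lemma \ref{lm:pdecuniongen} was for the $\rm BEC$.
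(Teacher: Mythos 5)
Your setup is sound up to the last step: the magnitude/sign decomposition is valid, the signs are indeed conditionally independent Bernoulli given the magnitudes, each $E'_u$ is a down-set in the sign coordinates, and intersections and unions of down-sets are down-sets, so the conditional Harris/FKG inequality does give, for every fixed magnitude vector $\mathbf{m}=(|L_1|,\ldots,|L_N|)$,
\begin{equation*}
{\mathbb P}(A\cap B\mid \mathbf{m})\ \ge\ {\mathbb P}(A\mid\mathbf{m})\cdot{\mathbb P}(B\mid\mathbf{m}),
\end{equation*}
where $A$ and $B$ denote the two unions in \eqref{eq:pinterunionLgc}. The gap is the claim that ``integrating over the magnitudes preserves the inequality.'' Taking expectations yields ${\mathbb P}(A\cap B)\ge {\mathbb E}\bigl[{\mathbb P}(A\mid\mathbf{m})\,{\mathbb P}(B\mid\mathbf{m})\bigr]$, but the right-hand side is not ${\mathbb P}(A)\,{\mathbb P}(B)$: you additionally need ${\rm Cov}\bigl({\mathbb P}(A\mid\mathbf{m}),{\mathbb P}(B\mid\mathbf{m})\bigr)\ge 0$, i.e., a second positive-correlation statement over the randomness of the magnitudes, which you neither state nor prove. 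The natural route to it --- showing that ${\mathbb P}(E'_u\mid\mathbf{m})$ is coordinatewise monotone in $\mathbf{m}$ and invoking Harris again at the magnitude level --- fails: for $|I_u|=3$ and magnitudes $(1,1,1)$ versus $(2,1,1)$ a direct computation gives ${\mathbb P}(E'_u\mid\mathbf{m})\approx 0.178$ versus $\approx 0.183$, so increasing one reliability can increase the conditional error probability. As written, the proof is therefore incomplete at its final step.

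The paper avoids the conditioning altogether by applying FKG once on the full output space: each coordinate alphabet ${\mathcal Y}$ is totally ordered by the likelihood ratio $p(y_i|1)/p(y_i|0)$ (the canonical LLR representation ensures antisymmetry), the product ${\mathcal Y}^N$ with the coordinatewise order is a distributive lattice, the measure $\mu(y)=p(y|0^{NR})$ is a product measure and hence log-supermodular, the events $E'_{u^{(1)},\cdots,u^{(L)}}$ and their unions are increasing for this order, and the FKG inequality (Preston's generalization for infinite alphabets) gives \eqref{eq:pinterunionLgc} in one step. If you wish to keep the two-stage magnitude/sign decomposition, you must supply the missing correlation inequality at the magnitude level; otherwise, work directly with the LLR order on ${\mathcal Y}^N$ as the paper does.
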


\section{Further Results for Genie-Aided \texorpdfstring{$\rm SC$}{SC} Decoding and \texorpdfstring{$W={\rm BEC}(\varepsilon)$}{W=BEC}} \label{sec:furthres}

Consider a polar code in ${\mathcal C}_{\rm pol}$ transmitted over
a ${\rm BEC}(\varepsilon)$, and denote by $W_N^{(i)}$ the $i$-th synthetic
channel, which is a $\rm BEC$ of Bhattacharyya parameter $Z_i$. Let $P_e^{\rm SC}(N, R, \varepsilon, k)$ be the block error probability
under $\rm SC$ decoding aided by a $k$-genie. More precisely, a $k$-genie-aided $\rm SC$ decoder runs the usual $\rm SC$ algorithm with the following difference: when we
reach a synthetic channel associated to an information bit that is erased, namely, we cannot decide on the value of a certain information bit, the genie tells the value of the erased bit, and it does so a maximum of $k$ times. An error is declared if and only if the decoder requires more than $k$ helps from the genie. 

Consider now $\rm SCL$ decoding for transmission over the $\rm BEC$. The $\rm SCL$ decoder also runs the usual $\rm SC$ algorithm and, when we reach a synthetic channel  associated to an information bit that is erased, say $W_N^{(i)}$, it takes into account both the possibilities, namely $u_i=0$ and $u_i=1$, and it lets the two decoding paths evolve independently. In addition, when we reach a synthetic channel associated to a frozen bit, the $\rm SCL$ decoder gains new information, namely, it learns the value of the linear combination of some of the previous bits. Therefore, some of the existing decoding paths might not be compatible with this extra information and they are removed from the list. An error is declared if and only if at any point during the decoding process the decoder requires to store more than $L$ decoding paths. 

Note that a $k$-genie-aided $\rm SC$ decoder and an $\rm SCL$ decoder
with list size $2^k$ behave similarly but not identically. When we reach a synthetic channel associated to an information bit that is erased, the former uses one of the helps from the genie, and the
latter doubles the number of decoding paths. However, when we reach a synthetic channel associated to a frozen bit, the $\rm SCL$
decoder can reduce the number of decoding paths, while it is not possible to gain new helps from the genie. Therefore, the $\rm SCL$ decoder always succeeds when
the genie-aided decoder succeeds, but in addition it might also succeed in some cases where
the genie-aided decoder fails.

\subsection{DI Bound and Scaling Exponent} \label{sec:digenie}

\begin{teo}[DI bound - genie-aided decoding] \label{teo:kplus1tokgenie}
Consider the transmission of a polar code in ${\mathcal C}_{\rm pol}$ over a ${\rm BEC}(\varepsilon)$ and fix $P_e \in (0, 1)$. Pick $N$ big enough and any $R$ that ensures
\begin{equation} \label{eq:condgenie}
\frac{1}{4} > P_e^{\rm SC}(N, R, \varepsilon, k) > P_e.
\end{equation}
Then, the performance of the $k+1$-genie-aided SC decoder is lower bounded by
\begin{equation}
P_e^{\rm SC}(N, R, \varepsilon, k+1) \ge \frac{3}{16} \cdot \bigl(P_e^{\rm SC}(N, R, \varepsilon, k)\bigr)^2.
\end{equation}
\end{teo}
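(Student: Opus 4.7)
The plan is to mirror the Divide and Intersect argument of Section \ref{sec:proofBECL1}. Assume the all-zero codeword is transmitted. Let $\mathcal{I}$ denote the information set and, for each $i\in\mathcal{I}$, let $A_i$ be the event (as a function of the erasure pattern $y$) that the synthetic channel $W_N^{(i)}$ is erased given the true values of bits $u_1,\dots,u_{i-1}$ supplied by the genie. For a BEC, $A_i$ depends on $y$ alone, is increasing in the coordinates of $y$, and satisfies $\mathbb{P}(A_i)=Z_i$. The $k$-genie-aided decoder fails iff at least $k+1$ of the events $\{A_i\}_{i\in\mathcal{I}}$ occur; hence, setting $\mathcal{P}_m=\{S\subseteq\mathcal{I}:|S|=m\}$ and $E_S=\bigcap_{i\in S}A_i$, we have $P_e^{\rm SC}(N,R,\varepsilon,k)=\mathbb{P}(\bigcup_{S\in\mathcal{P}_{k+1}}E_S)$ and $P_e^{\rm SC}(N,R,\varepsilon,k+1)=\mathbb{P}(\bigcup_{S\in\mathcal{P}_{k+2}}E_S)$.

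\textbf{Intersect and Divide.} Since each $A_i$ is an increasing event in the product Bernoulli measure on $\{0,1\}^N$, so is every union of intersections $E_S$. Applying the FKG inequality exactly as in Appendix \ref{app:pdecuniongen}, for any disjoint $P_1,P_2\subseteq\mathcal{P}_{k+1}$,
\[
\mathbb{P}\biggl(\bigcup_{S\in P_1}E_S\cap\bigcup_{T\in P_2}E_T\biggr)\ge\mathbb{P}\biggl(\bigcup_{S\in P_1}E_S\biggr)\cdot\mathbb{P}\biggl(\bigcup_{T\in P_2}E_T\biggr).
\]
For the divide step, I would establish a ``no big jumps'' bound, $\max_{S\in\mathcal{P}_{k+1}}\mathbb{P}(E_S)<P_e/8$, from the hypotheses that $N$ is large and $P_e^{\rm SC}(N,R,\varepsilon,k)<1/4$. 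Since $\mathbb{P}(E_S)\le\min_{i\in S}Z_i$, this reduces to showing that at most $k$ information indices satisfy $Z_i\ge P_e/8$; a polarization estimate in the spirit of Lemma \ref{lm:mindistpolar} then furnishes the required explicit lower bound on $N$. Granted the jump bound, the greedy construction used in Corollary \ref{cor:setcond}, applied to the family $\mathcal{P}_{k+1}$, yields a partition $\mathcal{P}_{k+1}=P_1\sqcup P_2$ with $\mathbb{P}(\bigcup_{S\in P_1}E_S)\ge\tfrac{3}{8}P_e^{\rm SC}(N,R,\varepsilon,k)$ and $\mathbb{P}(\bigcup_{T\in P_2}E_T)\ge\tfrac{1}{2}P_e^{\rm SC}(N,R,\varepsilon,k)$.

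\textbf{Conclude.} The set identity $E_S\cap E_T=E_{S\cup T}$, which is immediate from the definition (and is the analogue of Lemma \ref{lm:inters}), combined with the Intersect inequality gives
\[
\tfrac{3}{16}\bigl(P_e^{\rm SC}(N,R,\varepsilon,k)\bigr)^2\le\mathbb{P}\biggl(\bigcup_{\substack{S\in P_1\\T\in P_2}}E_{S\cup T}\biggr).
\]
Since $P_1\cap P_2=\emptyset$ forces $S\ne T$ and hence $|S\cup T|\ge k+2$, every event on the right-hand side is of the form $E_{S'}$ with $|S'|\ge k+2$, so the right-hand side is bounded above by $\mathbb{P}(\bigcup_{S'\in\mathcal{P}_{k+2}}E_{S'})=P_e^{\rm SC}(N,R,\varepsilon,k+1)$, which proves the theorem.

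\textbf{Main obstacle.} The delicate point is the jump bound, namely translating ``$N$ big enough'' together with $P_e^{\rm SC}(N,R,\varepsilon,k)<1/4$ into the quantitative statement $|\{i\in\mathcal{I}:Z_i\ge P_e/8\}|\le k$. Unlike the MAP setting, where a crude minimum-distance bound sufficed, here one must control the count of moderately polarized coordinates inside $\mathcal{I}$ rather than a single worst-case Bhattacharyya parameter; the role of the $1/4$ ceiling is precisely to preclude the regime in which a few weakly polarized bits carry the bulk of the error event and would otherwise violate the count bound, forcing a $\min_{i\in S}Z_i$ above $P_e/8$ for some $(k+1)$-subset $S$.
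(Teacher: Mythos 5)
Your architecture coincides with the paper's own proof in Section \ref{subsec:proofsgenie}: your events $A_i$ are the paper's $F_i$ of \eqref{eq:defF}, your $E_S$ are the $F_{i_0,\cdots,i_k}$, the characterization of failure as ``at least $k+1$ erased information positions'', the FKG-based Intersect step (Lemmas \ref{lm:corerasure} and \ref{lm:coerasuregen}), the partition of the $(k+1)$-subsets into $P_1\sqcup P_2$ (Corollary \ref{cor:setcondLerasure}), and the concluding chain via $E_S\cap E_T=E_{S\cup T}$ with $|S\cup T|\ge k+2$ are all exactly what the paper does. So the skeleton is right, and you have correctly located the only nontrivial ingredient.

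But that ingredient is genuinely missing, and the route you sketch for it would not work. You reduce the Divide step to showing that at most $k$ information indices satisfy $Z_i\ge P_e/8$ and propose to get this from ``a polarization estimate in the spirit of Lemma \ref{lm:mindistpolar}''. Lemma \ref{lm:mindistpolar} controls the Hamming weight of the rows of $G_N$, hence $d_{\rm min}$; it says nothing about how many of the $NR$ \emph{selected} synthetic channels have moderately large Bhattacharyya parameter. When $R$ is near capacity the construction is forced to include poorly polarized channels, and no minimum-distance-type estimate excludes them. Nor does your count bound follow cheaply from the ceiling in \eqref{eq:condgenie}: if $k+1$ unfrozen indices had $Z_i\ge P_e/8$, FKG would only give $P_e^{\rm SC}(N,R,\varepsilon,k)\ge (P_e/8)^{k+1}$, which contradicts nothing. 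What the paper actually proves (Lemma \ref{lm:jumperasure}, Appendix \ref{app:jumperasure}) is the stronger statement $Z_i={\mathbb P}(F_i)<P_e/8$ for \emph{every} $i\in{\mathcal F}^c$, by contradiction with $P_e^{\rm SC}(N,R,\varepsilon,k=0)<1/4$, and the contradiction requires lower-bounding the probability of a union of \emph{correlated} erasure events. Concretely, the paper uses: (i) a quantitative polarization result (Corollary 6 of \cite{HAU14}) showing that if some unfrozen channel has $Z_i=\alpha\ge P_e/8$, then at least $\lfloor N^{1+\lambda_{\rm BEC}^{(l)}}\rfloor$ unfrozen channels, with $\lambda_{\rm BEC}^{(l)}\ge -0.279$, have $Z_j\in[(\alpha/2)^2,\alpha]$; (ii) the bound $\sum_{i,j}\rho_{i,j}\le N^{3-\log_2 3}$ on the correlation coefficients of the erasure indicators \cite{mani:correlation}, together with an averaging argument, to extract $M=\lceil 128/P_e^2\rceil$ such channels that are pairwise nearly uncorrelated and satisfy $\sum_i Z_i\ge 1/2$; and (iii) Bonferroni's inequality to conclude $P_e^{\rm SC}(N,R,\varepsilon,k=0)\ge\sum_i Z_i-\frac{1}{2}\bigl(\sum_i Z_i\bigr)^2-\frac{1}{8}\ge\frac{1}{4}$, contradicting the hypothesis. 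This is the substance of the theorem and determines what ``$N$ big enough'' means; without an argument of this kind your Divide step has no foundation, so the proposal is incomplete.
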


By induction, the corollary below easily follows.

\begin{cor}[DI bound - genie-aided decoding, any $k$] \label{cor:divintgenie}
Consider the transmission of a polar code in ${\mathcal C}_{\rm pol}$ over a ${\rm BEC}(\varepsilon)$. Fix $P_e \in (0, 1)$ and consider the recursion \eqref{eq:srecgc} with the initial condition $P_e(0) = P_e$. Pick $N$ big enough and $R$ such that \eqref{eq:condgenie} holds with $P_e(k)$ instead of $P_e$. Then, the performance of the $k+1$-genie-aided $\rm SC$ decoder is lower bounded by
\begin{equation}
\begin{split}
&P_e^{\rm SC}(N, R, \varepsilon, k+1) \\
&\hspace{0.5cm}\ge \Bigl(\frac{3}{16}\Bigr)^{2^{k+1}-1} \cdot \bigl(P_e^{\rm SC}(N, R, \varepsilon, k=0)\bigr)^{2^{k+1}}.
\end{split}
\end{equation}
\end{cor}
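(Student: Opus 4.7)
The plan is to prove Corollary \ref{cor:divintgenie} by a straightforward induction on $k$, using Theorem \ref{teo:kplus1tokgenie} as the one-step engine. At each application, Theorem \ref{teo:kplus1tokgenie} converts a bound on the $j$-genie-aided decoder's error probability into one on the $(j{+}1)$-genie-aided decoder at the price of a factor $3/16$ and a squaring. Iterating this $k+1$ times, starting from $P_e^{\rm SC}(N,R,\varepsilon,0)$, the leading constant and the exponent of $P_e^{\rm SC}(N,R,\varepsilon,0)$ telescope exactly into the claimed $(3/16)^{2^{k+1}-1}$ and $2^{k+1}$.

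For the base case $k=0$, the claim coincides verbatim with Theorem \ref{teo:kplus1tokgenie} applied with threshold $P_e(0)=P_e$. For the inductive step, assuming
\begin{equation*}
P_e^{\rm SC}(N,R,\varepsilon,k) \ge \Bigl(\frac{3}{16}\Bigr)^{2^{k}-1} \bigl(P_e^{\rm SC}(N,R,\varepsilon,0)\bigr)^{2^{k}},
\end{equation*}
I would invoke Theorem \ref{teo:kplus1tokgenie} at level $k$ to obtain $P_e^{\rm SC}(N,R,\varepsilon,k+1) \ge (3/16)\bigl(P_e^{\rm SC}(N,R,\varepsilon,k)\bigr)^{2}$, and then substitute the inductive hypothesis, using the identity $1+2(2^{k}-1)=2^{k+1}-1$ to combine the exponents of $3/16$. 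This produces the target bound with $k$ replaced by $k+1$.

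The only point requiring real care is verifying that the hypothesis \eqref{eq:condgenie} of Theorem \ref{teo:kplus1tokgenie} remains in force at \emph{every} level $j\in\{0,\dots,k\}$ of the chain, with threshold $P_e(j)$ rather than $P_e$. Unfolding the recursion \eqref{eq:srecgc} yields the closed form $P_e(j)=(3/16)^{2^{j}-1}P_e^{2^{j}}$; combined with the inductive hypothesis, this automatically delivers the lower-threshold condition $P_e^{\rm SC}(N,R,\varepsilon,j)>P_e(j)$ as soon as $P_e^{\rm SC}(N,R,\varepsilon,0)>P_e$. The complementary upper-threshold condition $P_e^{\rm SC}(N,R,\varepsilon,j)<1/4$ is absorbed into the standing requirement that $N$ be large enough and $R$ chosen so that \eqref{eq:condgenie} holds at level $k$. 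The main obstacle is therefore not conceptual but purely bookkeeping, making sure these hypotheses propagate cleanly through the induction so that each invocation of Theorem \ref{teo:kplus1tokgenie} is legitimate.
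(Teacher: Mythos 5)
Your proposal is correct and follows exactly the route the paper takes: the paper's entire proof of Corollary~\ref{cor:divintgenie} is the single remark ``By induction, the corollary below easily follows,'' i.e., iterate Theorem~\ref{teo:kplus1tokgenie} once per genie help and telescope the constants via $1+2+\cdots+2^{k}=2^{k+1}-1$. On the hypothesis-propagation point you flag, a slightly cleaner bookkeeping is to apply Theorem~\ref{teo:kplus1tokgenie} with the \emph{single} threshold $P_e(k)$ at every level $j\le k$, since $P_e^{\rm SC}(N,R,\varepsilon,j)\ge P_e^{\rm SC}(N,R,\varepsilon,k)>P_e(k)$ by monotonicity in the number of genie helps; this avoids having to assume $P_e^{\rm SC}(N,R,\varepsilon,0)>P_e$, which is not literally part of the stated hypothesis.
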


Roughly speaking, Theorem \ref{teo:kplus1tokgenie} implies that the scaling exponent cannot change under $\rm SC$ decoding for any fixed number of helps from the genie. This statement can be formalized as follows.

\begin{teo}[Scaling exponent - genie-aided decoding] \label{teo:scalgenie}
Consider the set of polar codes ${\mathcal C}_{\rm pol}$ transmitted over a ${\rm BEC}(\varepsilon)$. Assume that the scaling law of Definition \ref{def:scal} holds for the $\rm SC$ decoder with mother curve $f$ and scaling exponent $\mu$. Then, for any $k \in {\mathbb N}$,
\begin{equation}
\limsup_{N\rightarrow\infty :\mbox{ }  N^{1/\mu}(C-R) = z} P_e^{\rm SC}(N, R, \varepsilon, k) \le f(z),
\end{equation}
\begin{equation}
\begin{split}
&\liminf_{N\rightarrow\infty :\mbox{ }  N^{1/\mu}(C-R) = z} P_e^{\rm MAP}(N, R, \varepsilon, k) \\
&\hspace{0.8cm}\ge \Bigl(\frac{3}{16}\Bigr)^{2^k-1} \cdot \bigl(f(z))^{2^k}.
\end{split}
\end{equation}
\end{teo}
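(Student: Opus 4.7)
The plan is to prove the two displayed inequalities separately. The $\limsup$ half is a direct consequence of a monotonicity remark, while the $\liminf$ half proceeds in two steps: first, a decoder-equivalence bridge that turns the $P_e^{\rm MAP}$ statement into one about $P_e^{\rm SC}$; then an iterated application of Corollary \ref{cor:divintgenie} along the scaling sequence.

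For the $\limsup$ inequality, I would observe that a $k$-help genie-aided SC decoder that ignores every help coincides with the plain SC decoder, and that consulting the genie can only convert an error event into a correct decision. Hence $P_e^{\rm SC}(N, R, \varepsilon, k) \leq P_e^{\rm SC}(N, R, \varepsilon, 0)$ for all $N$, $R$, and $k$, and the assumed scaling law for the plain SC decoder, $P_e^{\rm SC}(N, R, \varepsilon, 0) \to f(z)$, yields $\limsup P_e^{\rm SC}(N, R, \varepsilon, k) \leq f(z)$.

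For the $\liminf$ inequality, the bridge I would exploit is the identity
\begin{equation*}
P_e^{\rm MAP}(N, R, \varepsilon, L = 2^k) = P_e^{\rm SC}(N, R, \varepsilon, k),
\end{equation*}
which is specific to polar codes on the $\rm BEC$ and supplies the missing link between the $P_e^{\rm SC}$-based scaling assumption on $f$ and the $P_e^{\rm MAP}$-based conclusion of the theorem. Indeed, on the $\rm BEC$ the number of codewords compatible with the received vector equals $2^{|A| - {\rm rank}(G_N[A, E^c])}$, where $A$ is the active set and $E$ the erasure pattern; the triangular structure of the polar transform makes this exponent equal to the number of active synthetic channels that SC processing of $E$ would find erased, i.e., the number of genie consultations required by the decoder of Section \ref{sec:furthres}. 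Reading the $k$ on the $\rm MAP$ side as the list parameter $L = 2^k$ (the natural identification given that the BEC MAP decoder's effective list sizes are powers of two, as discussed in Section \ref{sec:mainres}), MAP with list $2^k$ therefore fails on exactly the same erasure patterns as the $k$-help genie-aided SC decoder, and the two error probabilities coincide.

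Given the bridge, a lower bound on $P_e^{\rm MAP}(N, R, \varepsilon, L = 2^k)$ reduces to a lower bound on $P_e^{\rm SC}(N, R, \varepsilon, k)$, to which Corollary \ref{cor:divintgenie} applies directly. For any $z$ with $f(z) < 1/4$, I would pick $\delta \in (0, f(z))$ and set $P_e := f(z) - \delta$; by the scaling assumption and the monotonicity step above, for $N$ large along the sequence $N^{1/\mu}(C-R) = z$ we have $P_e < P_e^{\rm SC}(N, R, \varepsilon, 0) < 1/4$, and therefore $P_e^{\rm SC}(N, R, \varepsilon, j) \leq P_e^{\rm SC}(N, R, \varepsilon, 0) < 1/4$ for every $j \leq k$, while the strict inequality $P_e^{\rm SC}(N, R, \varepsilon, j) > P_e(j)$ propagates by induction from the base case through the recursion \eqref{eq:srecgc}. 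Iterating the corollary $k$ times yields $P_e^{\rm SC}(N, R, \varepsilon, k) \geq (3/16)^{2^k - 1}(f(z) - \delta)^{2^k}$; taking $\liminf$ in $N$ and then letting $\delta \downarrow 0$ produces the required bound, and the regime $f(z) \geq 1/4$ is absorbed by monotonicity of $f$ in $z$. The main obstacle I anticipate is pinning down the bridge identity cleanly from the polar transform's action on erasure patterns, since every other step is essentially bookkeeping from earlier results.
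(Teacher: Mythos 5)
Your $\limsup$ argument (genie helps can only remove error events, so $P_e^{\rm SC}(N,R,\varepsilon,k)\le P_e^{\rm SC}(N,R,\varepsilon,0)$) and your plan of iterating Corollary \ref{cor:divintgenie} along the sequence $N^{1/\mu}(C-R)=z$ with $P_e=f(z)-\delta$ are exactly the intended route; the paper gives no written proof beyond saying the theorem is formalized from Theorem \ref{teo:kplus1tokgenie}, and the exponents $(3/16)^{2^k-1}$ and $2^k$ in the statement are precisely what Corollary \ref{cor:divintgenie} produces for $k$ helps applied to $P_e^{\rm SC}(N,R,\varepsilon,0)\to f(z)$.

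The genuine gap is your ``bridge identity'' $P_e^{\rm MAP}(N,R,\varepsilon,L=2^k)=P_e^{\rm SC}(N,R,\varepsilon,k)$, which is false, and which the paper itself rules out. On the BEC the number of MAP-compatible codewords is $2^{|A|-{\rm rank}(G_N[A,E^c])}$, but the number of genie consultations is the number of \emph{active} indices $i$ with $e_i\notin{\rm colspan}(G_N[\cdot,E^c])+{\rm span}(e_1,\dots,e_{i-1})$; these coincide only if one may also use the rows of $G_N$ at frozen positions, i.e., the parity constraints imposed by frozen bits with index larger than $i$, which the (genie-aided) SC decoder never exploits. This is exactly the phenomenon Section \ref{sec:furthres} describes: the SCL decoder with list $2^k$ can prune paths at frozen positions and hence ``might also succeed in some cases where the genie-aided decoder fails,'' and the MAP decoder with list $2^k$ is at least as good again, so $P_e^{\rm MAP}(N,R,\varepsilon,2^k)\le P_e^{\rm SC}(N,R,\varepsilon,k)$ with possibly strict inequality. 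Note also that this true inequality points the wrong way: a lower bound on $P_e^{\rm SC}(N,R,\varepsilon,k)$ cannot be transferred to a lower bound on $P_e^{\rm MAP}$ through it, so the bridge cannot be repaired. The correct reading is that the occurrence of $P_e^{\rm MAP}(N,R,\varepsilon,k)$ in the second display of Theorem \ref{teo:scalgenie} is a typographical slip for $P_e^{\rm SC}(N,R,\varepsilon,k)$ (the fourth argument is the number of genie helps $k$, not a list size, and the constants match Corollary \ref{cor:divintgenie}, not Corollary \ref{cor:divintgenlgc}); once this is recognized, no bridge is needed and your remaining bookkeeping goes through, up to the minor caveat that the hypothesis $P_e^{\rm SC}(N,R,\varepsilon,k)<1/4$ of Theorem \ref{teo:kplus1tokgenie} restricts the argument to $z$ with $f(z)<1/4$, a point worth stating explicitly rather than waving at monotonicity of $f$.
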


\subsection{Proof of DI Bound} \label{subsec:proofsgenie}

Let $y \in \{0, 1\}^N$ denote the erasure pattern of the channel and for any $i \in \{1, \cdots, N\}$ let $F_i$ be the set containing all $y$ such that $W_N^{(i)}$ erases, i.e.,
\begin{equation} \label{eq:defF}
F_i = \{y \in \{0, 1\}^N \hspace{0.2em}|\hspace{0.2em} W_N^{(i)} \mbox{ erases} \}.
\end{equation} 
Denoting by ${\mathcal F}^c$ the set of unfrozen positions, it is clear that
\begin{equation}
{\mathbb P}(\bigcup_{i \in {\mathcal F}^c} F_i)=P_e^{\rm SC}(N, R, \varepsilon, k=0).
\end{equation} 

The \emph{Intersect} step is based on the correlation inequality below, whose proof is similar to that of Lemma \ref{lm:pdecunion}. 

\begin{lemma}[Positive correlation for erasures - $k=0$] \label{lm:corerasure}
Let $I_1, I_2 \subset \{1, \cdots, N\}$. Then,
\begin{equation} \label{eq:eraseunion}
{\mathbb P}(\bigcup_{i \in I_1} F_i \cap \bigcup_{\tilde{\imath} \in I_2} F_{\tilde{\imath}}) \ge {\mathbb P}(\bigcup_{i \in I_1} F_i)\cdot {\mathbb P}(\bigcup_{\tilde{\imath} \in I_2} F_{\tilde{\imath}}).
\end{equation}
\end{lemma}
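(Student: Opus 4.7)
The plan is to reduce inequality \eqref{eq:eraseunion} to Harris's (FKG) inequality for the product Bernoulli measure on $\{0,1\}^N$, in direct analogy with the proof of Lemma \ref{lm:pdecunion} sketched in Appendix \ref{app:pdecunion}. The only new ingredient needed is to verify that each $F_i$ is an \emph{increasing} (upward-closed) event with respect to the coordinatewise partial order on erasure patterns.

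I would endow $\{0,1\}^N$ with the coordinatewise order $y \le y'$ iff $y_j \le y'_j$ for every $j$, where $y_j=1$ codes an erasure of the $j$-th physical use of the BEC. The key claim is that $y \in F_i$ and $y \le y'$ imply $y' \in F_i$. By channel symmetry we may assume the all-zero codeword was transmitted, so the synthetic channel $W_N^{(i)}$ erases under $y$ precisely when, given the non-erased coordinates of $y$ and the prefix $u_1=\cdots=u_{i-1}=0$, the bit $u_i$ is not uniquely determined by the induced linear system over $\mathbb{F}_2$. Passing from $y$ to a larger pattern $y'$ turns some non-erased outputs into erased ones, which strictly removes constraints from this compatibility check; any ambiguity already present under $y$ therefore persists under $y'$, so $F_i$ is upward-closed.

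Next, I would observe that finite unions of upward-closed sets are upward-closed, so $\bigcup_{i\in I_1}F_i$ and $\bigcup_{\tilde\imath\in I_2}F_{\tilde\imath}$ are both increasing events. Since $y$ is a vector of i.i.d.\ Bernoulli($\varepsilon$) coordinates, its distribution is a product measure on $\{0,1\}^N$, and Harris's inequality states that any two increasing events under a product measure are positively correlated. Applying this to the two unions yields \eqref{eq:eraseunion} verbatim.

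The main (and essentially only) obstacle is the monotonicity claim in the first step. It is a well-known qualitative feature of the polar construction on the BEC---implicit in \cite{arikan:polar}---but a clean standalone proof is worth spelling out, either by the direct ambiguity argument above or by induction on the polar recursion, noting that both the minus- and plus-transforms applied to BECs preserve monotonicity of the erasure event in the erasures of their two input copies. Once this is in place, the rest of the proof is the same three-line FKG argument already used to establish Lemma \ref{lm:pdecunion}.
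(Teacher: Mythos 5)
Your proposal is correct and follows essentially the same route as the paper, which states that the proof of this lemma is analogous to that of Lemma \ref{lm:pdecunion}: apply the FKG (Harris) inequality to the product Bernoulli measure on erasure patterns, using that the indicator functions of the two unions are increasing with respect to the coordinatewise order. Your explicit verification that each $F_i$ is upward-closed (adding erasures only removes constraints from the linear system determining $u_i$, so ambiguity persists) is exactly the monotonicity step the paper leaves implicit.
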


In general, define $F_{i_0, \cdots, i_k}$ to be the set of all the erasure patterns such that $W_N^{(i)}$ erases for all $i \in \{i_0, \cdots, i_k\}$, i.e., 
\begin{equation}
F_{i_0, \cdots, i_k} = \{y \in \{0, 1\}^N \hspace{0.2em}|\hspace{0.2em} W_N^{(i)} \mbox{ erases } \forall \hspace{0.3em}i \in \{i_0, \cdots, i_k\} \},
\end{equation}
and consider the set of positions ${\rm SP}_k$ containing all the subsets of $k$ distinct elements of ${\mathcal F}^c$,
\begin{equation}
\begin{split}
{\rm SP}_k &= \{ \{i_0, \cdots, i_k\} : i_m \in {\mathcal F}^c  \mbox{ }\forall m \in \{0, \cdots, k\},\\
&\hspace{1.2cm} i_m\neq i_n \hspace{0.6em} \forall \hspace{0.3em} m\neq n\}.
\end{split}
\end{equation}
It it clear that
\begin{equation}
{\mathbb P}(\bigcup_{\{i_0, \cdots, i_k\} \in {\rm SP}_k} F_{i_0, \cdots, i_k}) = P_e^{\rm SC}(N, R, \varepsilon, k).
\end{equation}
In addition, with a small effort we generalize the result of Lemma \ref{lm:corerasure} following the line of thought exposed in the proof of Lemma \ref{lm:pdecuniongen}.

\begin{lemma}[Positive correlation for erasures - any $k$] \label{lm:coerasuregen}
Let $R_1, R_2 \subset {\rm SP}_k$. Then,
\begin{equation} \label{eq:eraseunionk}
\begin{split}
&{\mathbb P}(\bigcup_{\{i_0, \cdots, i_k\} \in R_1} F_{i_0, \cdots, i_k} \cap \bigcup_{\{\tilde{\imath}_0, \cdots, \tilde{\imath}_k\} \in R_2} F_{\tilde{\imath}_0, \cdots, \tilde{\imath}_k}) \\
&\hspace{0.5cm}\ge {\mathbb P}(\bigcup_{\{i_0, \cdots, i_k\} \in R_1} F_{i_0, \cdots, i_k})\cdot {\mathbb P}(\bigcup_{\{\tilde{\imath}_0, \cdots, \tilde{\imath}_k\} \in R_2} F_{\tilde{\imath}_0, \cdots, \tilde{\imath}_k}).
\end{split}
\end{equation}
\end{lemma}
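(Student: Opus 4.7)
The plan is to follow the same FKG-based strategy used in the proofs of Lemma \ref{lm:pdecunion}, \ref{lm:pdecuniongen} and \ref{lm:corerasure}. Equip the sample space $\{0,1\}^N$ with the coordinate-wise partial order, in which $y \le y'$ means that every physical channel that is erased under $y$ is also erased under $y'$. Since the $N$ physical channels are independent, the erasure pattern is distributed according to a product measure on this hypercube, so the FKG inequality applies: any two monotone increasing events $A$ and $B$ satisfy ${\mathbb P}(A \cap B) \ge {\mathbb P}(A)\, {\mathbb P}(B)$. Because unions and intersections of increasing events are again increasing, it suffices to show that for every single index $i \in \{1, \ldots, N\}$ the event $F_i$ defined in \eqref{eq:defF} is monotone increasing in $y$; then $F_{i_0, \ldots, i_k} = \bigcap_{m=0}^{k} F_{i_m}$ is increasing, each of the two unions appearing in \eqref{eq:eraseunionk} is increasing, and the lemma follows from a single application of FKG.

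To establish the monotonicity of $F_i$, I would proceed by induction on $n = \log_2 N$, exploiting the recursive structure of the polar transform over the $\rm BEC$. The base case $n = 0$ is immediate since $F_1 = \{y : y_1 = 1\}$. For the inductive step, recall that each synthetic channel $W_N^{(i)}$ is obtained as either a $W^-$ or a $W^+$ combination of two independent synthetic channels of the preceding stage, operating on the first half and the second half of the erasure pattern respectively. At the level of a single realization, $W^-$ erases if and only if at least one of its two underlying channels erases, whereas $W^+$ erases if and only if both underlying channels erase. Both the logical OR and the logical AND operations preserve monotonicity in the coordinate-wise partial order on $\{0,1\}^N$, so by the inductive hypothesis the erasure event $F_i$ is monotone increasing in the full pattern $y$.

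Having established monotonicity of each $F_i$, applying the FKG inequality to the two increasing events appearing in \eqref{eq:eraseunionk} yields the desired correlation inequality. The main obstacle is not the invocation of FKG itself, which becomes a one-line step once monotonicity is in hand, but rather verifying cleanly that the recursive characterization of synthetic-channel erasures on the $\rm BEC$ truly preserves monotonicity at every stage of the induction. Since on the $\rm BEC$ the polar kernel reduces to the elementary OR / AND pair described above, this induction goes through without complications and parallels the correlation arguments already developed in Appendices \ref{app:pdecunion} and \ref{app:pdecuniongen} for the events $E_u$ of the $\rm MAP$ analysis.
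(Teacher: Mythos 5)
Your proposal is correct and follows essentially the same route as the paper: the paper proves this lemma by the FKG argument on the hypercube $\{0,1\}^N$ with the coordinate-wise partial order and product measure, exactly as in Appendices \ref{app:pdecunion} and \ref{app:pdecuniongen}, noting that the events $F_{i_0,\cdots,i_k}$ are increasing in the erasure pattern. Your explicit induction on the OR/AND recursion of the polar transform to verify monotonicity of each $F_i$ is a detail the paper leaves implicit, but it is valid and completes the same argument.
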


As concerns the \emph{Divide} step, we require the existence of $R_1 \subset {\rm SP}_k$, such that ${\mathbb P}(\bigcup_{\{i_0, \cdots, i_k\} \in R_1} F_{i_0, \cdots, i_k})$ is slightly less than $\frac{1}{2} P_e^{\rm SC}(N, R, \varepsilon, k)$. To prove this fact, we show that, choosing a suitably large block length, ${\mathbb P}(F_i)$ can be made as small as required. The proof of the lemma below is in Appendix \ref{app:jumperasure}.

\begin{lemma}[No big jumps for erasures] \label{lm:jumperasure}
Let $P_e \in (0, 1)$ and $\varepsilon \in (0, 1)$. Then, for $N\ge N_0(P_e, \varepsilon)$ and for any $R$ such that
\begin{equation} \label{eq:hplemmabigbhat}
\frac{1}{4} > P_e^{\rm SC}(N, R, \varepsilon, k=0) > P_e,
\end{equation}
the probability of $F_i$ is upper bounded by
\begin{equation}
{\mathbb P}(F_i) < \frac{P_e}{8}, \qquad \forall \hspace{0.3em} i \in {\mathcal F}^c. 
\end{equation}
\end{lemma}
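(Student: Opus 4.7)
The plan is to translate the erasure event into a statement about Bhattacharyya parameters and then invoke quantitative polarization to control them uniformly over $\mathcal{F}^c$. Since $W_N^{(i)}$ is a ${\rm BEC}$ of Bhattacharyya parameter $Z_i$, we have $\mathbb{P}(F_i) = Z_i$, so the lemma is equivalent to the uniform upper bound $Z_i < P_e/8$ over every $i \in \mathcal{F}^c$.

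The easy half is immediate from the upper part of the hypothesis: since $F_i \subseteq \bigcup_{j \in \mathcal{F}^c} F_j$, we have $Z_i \le P_e^{\rm SC}(N,R,\varepsilon,0) < 1/4$ for every $i \in \mathcal{F}^c$, which already confines the unfrozen $Z_i$'s to $[0,1/4)$ for free. Sharpening this from $1/4$ down to $P_e/8$ is where the real work lies, and I would do it by invoking the quantitative polarization theorem of Ar{\i}kan--Telatar: for any $\beta < 1/2$, the number of synthetic channels with $Z_i \le 2^{-N^\beta}$ is at least $NC - o(N)$. Because the polar code construction defines $\mathcal{F}^c$ as the $NR$ indices of smallest $Z$, it suffices to argue that $NR$ indices lie in $\{Z_j < P_e/8\}$; then choosing $N_0(P_e,\varepsilon)$ so that $2^{-N_0^\beta} < P_e/8$ finishes the argument.

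The principal obstacle is the interplay between $R$ and $C$ in the scaling regime. The Ar{\i}kan--Telatar bound is useful when $R$ is separated from $C$ by more than the $o(N)$ error term, whereas the hypothesis $P_e^{\rm SC} \in (P_e,1/4)$ allows $R$ to approach $C$ with $N$. To close this gap I would set up a dichotomy: either $R$ is sufficiently below $C$ so that polarization directly forces all $NR$ smallest $Z_i$ into $\{Z_j < P_e/8\}$, or $R$ is so close to $C$ that $\mathcal{F}^c$ must contain many indices with $Z_j$ in the moderate band $[P_e/8, 1/4)$, which---together with the ${\rm BEC}$ martingale identity $\sum_i Z_i = N\varepsilon$ and the FKG-type correlation of Lemma~\ref{lm:corerasure}---would push $P_e^{\rm SC}$ past $1/4$ and contradict the upper hypothesis. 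Making this case split quantitative and extracting the explicit threshold $N_0(P_e,\varepsilon)$ from the polarization rate estimates is the technically delicate step.
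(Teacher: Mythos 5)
Your reduction to the statement $Z_i < P_e/8$ for all $i \in {\mathcal F}^c$ and your diagnosis that the difficulty sits in the regime where $R$ approaches $C$ are both correct, but the mechanism you propose for the hard case has two genuine gaps. First, the claim that when $R$ is close to $C$ the set ${\mathcal F}^c$ must contain many indices with $Z_j$ in the moderate band does not follow from the rate-of-polarization theorem you cite, nor from $\sum_i Z_i = N\varepsilon$: a priori a single unfrozen channel could have $Z_i = \alpha \ge P_e/8$ while every other unfrozen channel is already polarized to essentially $0$, and this is perfectly consistent with $P_e^{\rm SC}\in(P_e,1/4)$. What is needed, and what the paper uses, is a quantitative \emph{lower} bound on the number of partially polarized channels: Corollary 6 of \cite{HAU14} guarantees at least $\lfloor N^{1+\lambda}\rfloor$ (with $\lambda \ge -0.279$) synthetic channels whose Bhattacharyya parameter lies in the band $[(\alpha/2)^2,\alpha]$, and these are unfrozen because $\alpha$ is attained at an unfrozen index and the code keeps the $NR$ smallest Bhattacharyya parameters.

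Second, and more seriously, your appeal to the FKG-type correlation of Lemma~\ref{lm:corerasure} points in the wrong direction. To push $P_e^{\rm SC}={\mathbb P}(\bigcup_i F_i)$ above $1/4$ you need a lower bound on a union probability that beats $\max_i Z_i$, and that requires an \emph{upper} bound on the pairwise intersections ${\mathbb P}(F_i\cap F_j)$, i.e., a statement that the erasure events are nearly independent; FKG only gives ${\mathbb P}(F_i\cap F_j)\ge Z_iZ_j$, which is useless here (if the events were perfectly positively correlated the union would have probability $\alpha<1/4$ and no contradiction would arise). The paper supplies the missing decorrelation through the bound $\sum_{i,j}\rho_{i,j}\le N^{3-\log_2 3}$ on the correlation coefficients of the erasure indicators, extracts by an averaging argument a subset of $M=\lceil 128/P_e^2\rceil$ nearly uncorrelated channels whose Bhattacharyya parameters sum to roughly $1/2$, and then applies Bonferroni's inequality to conclude $P_e^{\rm SC}\ge 1/4$, contradicting \eqref{eq:hplemmabigbhat}. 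Without these two external inputs your dichotomy cannot be closed, so the proposal as written does not yield a proof.
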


\begin{cor}[Existence of $R_1$] \label{cor:setcondLerasure}
Let $P_e \in (0, 1)$ and $\varepsilon \in (0, 1)$. Then, for $N$ big enough and $R$ ensuring \eqref{eq:hplemmabigbhat}, there exists $R_1 \subset {\rm SP}_k$ such that 
\begin{equation}\label{eq:setconderase}
\begin{split}
{\mathbb P}(\bigcup_{\{i_0, \cdots, i_k\} \in R_1} F_{i_0, \cdots, i_k}) &\ge \frac{3}{8} P_e^{\rm SC}(N, R, \varepsilon, k),\\
{\mathbb P}(\bigcup_{\{i_0, \cdots, i_k\} \in R_1} F_{i_0, \cdots, i_k}) &\le \frac{1}{2} P_e^{\rm SC}(N, R, \varepsilon, k).
\end{split}
\end{equation}
\end{cor}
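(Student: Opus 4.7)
The plan is to mimic the greedy enumeration argument used to establish Corollaries \ref{cor:setcond} and \ref{cor:setcondL}, with the events $E_u$ (respectively $E_{\mathrm{sp}(\cdot)}$) replaced by the erasure-pattern events $F_{i_0,\ldots,i_k}$. First I would fix an arbitrary ordering $S_1,S_2,\ldots,S_M$ of the elements of $\mathrm{SP}_k$, with $S_j=\{i_0^{(j)},\ldots,i_k^{(j)}\}$, and define the partial unions $R_1^{(j)}=\{S_1,\ldots,S_j\}$ together with their probabilities
\begin{equation*}
p_j = {\mathbb P}\Bigl(\bigcup_{S\in R_1^{(j)}} F_S\Bigr),
\end{equation*}
so that $p_0=0$, the sequence $(p_j)$ is non-decreasing, and $p_M = P_e^{\rm SC}(N,R,\varepsilon,k)$.

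The central observation is a uniform bound on the jumps. Since requiring several synthetic channels to erase is at least as restrictive as requiring a single one to do so, $F_{S_j}\subseteq F_{i_0^{(j)}}$, and hence
\begin{equation*}
p_j - p_{j-1} \le {\mathbb P}(F_{S_j}) \le {\mathbb P}(F_{i_0^{(j)}}) < \frac{P_e}{8},
\end{equation*}
where the last inequality is Lemma \ref{lm:jumperasure}, applicable because \eqref{eq:hplemmabigbhat} is part of the hypothesis (for all $N$ sufficiently large).

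With this jump bound in hand, the conclusion follows by the same intermediate-value-style argument used in the earlier corollaries: each step increases $p_j$ by strictly less than $\tfrac{1}{8}P_e^{\rm SC}(N,R,\varepsilon,k)$, which is precisely the length of the target window $[\tfrac{3}{8}P_e^{\rm SC}(N,R,\varepsilon,k),\tfrac{1}{2}P_e^{\rm SC}(N,R,\varepsilon,k)]$. Since $p_0=0$ lies strictly below the window and $p_M=P_e^{\rm SC}(N,R,\varepsilon,k)$ lies strictly above it, letting $j^\star$ be the smallest index with $p_{j^\star}\ge\tfrac{3}{8}P_e^{\rm SC}(N,R,\varepsilon,k)$ automatically gives $p_{j^\star}<p_{j^\star-1}+\tfrac{1}{8}P_e^{\rm SC}(N,R,\varepsilon,k)\le\tfrac{1}{2}P_e^{\rm SC}(N,R,\varepsilon,k)$, so $R_1:=R_1^{(j^\star)}$ satisfies \eqref{eq:setconderase}.

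The only delicate bookkeeping point I foresee is that \eqref{eq:hplemmabigbhat} constrains $P_e^{\rm SC}(N,R,\varepsilon,k{=}0)$, while the target window in \eqref{eq:setconderase} is expressed through $P_e^{\rm SC}(N,R,\varepsilon,k)$. For the jump size $P_e/8$ to be dominated by $\tfrac{1}{8}P_e^{\rm SC}(N,R,\varepsilon,k)$, one needs $P_e^{\rm SC}(N,R,\varepsilon,k)>P_e$; this is an implicit requirement on $R$ inherited from the context (Theorem \ref{teo:kplus1tokgenie}, where the stronger hypothesis $P_e^{\rm SC}(N,R,\varepsilon,k)>P_e$ is assumed, from which $P_e^{\rm SC}(N,R,\varepsilon,k{=}0)>P_e$ follows by monotonicity in the number of genie helps). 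Beyond this notational alignment, the proof is routine once the inclusion $F_{S_j}\subseteq F_{i_0^{(j)}}$ is noted.
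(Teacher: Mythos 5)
Your proof is correct and is exactly the argument the paper intends: the corollary is stated without proof as an immediate consequence of Lemma \ref{lm:jumperasure}, and the greedy partial-union (intermediate-value) argument, with each jump bounded by ${\mathbb P}(F_{i_0,\cdots,i_k})\le{\mathbb P}(F_{i_0})<P_e/8$ and the window $[\tfrac{3}{8}P,\tfrac{1}{2}P]$ having width $\tfrac{1}{8}P$, is the standard way to fill it in, matching the analogous Corollaries \ref{cor:setcond} and \ref{cor:setcondL}. Your remark that one additionally needs $P_e^{\rm SC}(N,R,\varepsilon,k)>P_e$, inherited from the hypothesis \eqref{eq:condgenie} of Theorem \ref{teo:kplus1tokgenie}, for the jump size to be dominated by the window width is also the correct reading of the paper's implicit bookkeeping.
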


Eventually, the proof of Theorem \ref{teo:kplus1tokgenie} is obtained by using Lemma \ref{lm:coerasuregen} and Corollary \ref{cor:setcondLerasure} and by following a procedure similar to that outlined at the end of Section \ref{subsec:genbec}.

\section{Concluding Remarks} \label{sec:concl}

In this paper, the scaling exponent of list decoders is analyzed with an application to polar codes. By means of a \emph{Divide and Intersect} (DI) procedure, we lower bound the error probability under $\rm MAP$ decoding with list size $L$ for any $\rm BMSC$. The result applies to any set of linear codes with unbounded minimum distance, and, specifically, to the set of polar codes. As a result, we deduce that under $\rm MAP$ decoding the scaling exponent is a constant function of the list size.

A similar DI bound is proven for the genie-aided $\rm SC$ decoder, when transmission takes place over the $\rm BEC$. Consequently, the scaling exponent under genie-aided decoding does not change for any fixed number of helps from the genie.

\section*{Acknowledgement}

The authors would like to thank E. Telatar for fruitful discussions and the anonymous reviewers for their helpful comments. This work was supported by grant No. 200020\_146832/1 of the Swiss National Foundation.

\appendix

\subsection{Proof of Lemma \ref{lm:pdecunion}} \label{app:pdecunion}

\begin{proof}
Consider the Hamming space $\{0, 1\}^N$. For $y, z \in \{0, 1\}^N$ define the following partial order,
\begin{equation} \label{eq:partord}
y \le z \Longleftrightarrow y_i \le z_i, \qquad \forall i \in \{1, 2, \cdots, N\}.
\end{equation}  
Define $y \lor z$ and $y \land z$ as
\begin{equation} \label{eq:andor}
\begin{split}
(y \lor z)_i =& \begin{cases} 0 \qquad \mbox{ if } y_i = z_i =0,\\
1 \qquad\mbox{ else}, \end{cases}\\
(y \land z)_i =& \begin{cases} 1 \qquad \mbox{ if } y_i = z_i =1,\\
0 \qquad\mbox{ else}. \end{cases}\\
\end{split}
\end{equation}

Just to clarify the ideas, think of $y \in \{0, 1\}^N$ as an erasure pattern, as specified at the beginning of Section \ref{subsec:poscor}. Since the $N$ copies of the original ${\rm BEC}(\varepsilon)$ are independent and each of them is erased with probability $\varepsilon$, we consider the probability measure defined by
\begin{equation} \label{eq:probmeas}
{\mathbb P}(y) = \Bigl(\frac{\varepsilon}{1-\varepsilon}\Bigr)^{w_H(y)} (1-\varepsilon)^N,
\end{equation}
where $w_H$ denotes the Hamming weight.

As $w_H(y \lor z) + w_H(y \land z) = w_H(y) + w_H(z)$, we have 
\begin{equation} \label{eq:probmeascond}
{\mathbb P}(y) \cdot {\mathbb P}(z) = {\mathbb P}(y \lor z) \cdot {\mathbb P}(y \land z).
\end{equation}

For any $U_1 \subset {\mathbb F}_2^{NR}$, consider the function $f : \{0, 1\}^N \rightarrow \{0, 1\}$, defined as
\begin{equation*} 
f(y) = 1-\prod_{u \in U_1} (1-{\mathds 1}_{\{y \in E_u\}}), 
\end{equation*}
where $E_u$ is defined in \eqref{eq:defEu} and ${\mathds 1}_{\{y \in E_u\}} = 1$ if and only if $y \in E_u$. Consequently, if there exists $u \in U_1$ s.t. $u G_y =0$, then $f(y) = 1$; $f(y) = 0$, otherwise. Hence,
\begin{equation*}
{\mathbb E}[f(y)] = 1 \cdot {\mathbb P}(f(y) = 1) + 0 \cdot {\mathbb P}(f(y) = 0) = {\mathbb P}(\bigcup_{u \in U_1} E_u).
\end{equation*}
If $f(y) \le f(z)$ whenever $y \le z$, then $f$ is said to be monotonically increasing. If $y \le z$, then the erasure pattern $z$ contains all the erasures of $y$ (and perhaps some more). Thus, if $f(y)=1$, then $f(z)=1$.  Since $f$ can be either $0$ or $1$, this is enough to show that the function is increasing.

Analogously, for any $U_2 \subset {\mathbb F}_2^{NR}$, consider the function $g : \{0, 1\}^N \rightarrow \{0, 1\}$ defined as
\begin{equation*}
g(y) = 1-\prod_{\tilde{u} \in U_2} (1-{\mathds 1}_{\{y \in E_{\tilde{u}}\}}). 
\end{equation*}
The function $g$ is increasing and its expected value is given by 
\begin{equation*}
{\mathbb E}[g(y)] = {\mathbb P}(\bigcup_{\tilde{u} \in U_2} E_{\tilde{u}}).
\end{equation*}
In addition,
\begin{equation*}
{\mathbb E}[f(y)g(y)] = {\mathbb P}(\bigcup_{u \in U_1} E_u \cap \bigcup_{\tilde{u} \in U_2} E_{\tilde{u}}).
\end{equation*}
The thesis thus follows from the version of the FKG inequality presented in Lemma 40 of \cite{korada:exchange}.
\end{proof}

\subsection{Proof of Lemma \ref{lm:mindistpolar}} \label{app:mindistpolar}

\begin{proof}
Consider the $N \times N$ matrix $G_N$ defined as follows,
\begin{equation}
G_N = F^{\otimes n}, \qquad \qquad F =  \biggl[ \begin{array}{cc}
1 & 0 \\
1 & 1 \end{array} \biggr], 
\end{equation}
where $F^{\otimes n}$ denotes the $n$-th Kronecker power of $F$. Let $G = [g_1, g_2, \cdots, g_{NR}]^T$ be the generator matrix of the polar code of block length $N$ and rate $R$ for the ${\rm BEC}(\varepsilon)$, which is obtained by selecting the $NR$ rows of $G_N$ which minimize the corresponding Bhattacharyya parameters. Then, by Lemma 3 of \cite{hussami:perfo},
\begin{equation*}
d_{\rm min} = \min_{1\le i \le NR} w_H(g_i),
\end{equation*}
where $d_{\rm min}$ denotes the minimum distance.

Setting $n = \log_2 N$, we need to show that for $n > \bar{n}(\varepsilon, P_e)$,
\begin{equation} \label{eq:whinf}
w_H(g_i) > C(P_e, \varepsilon), \qquad i =1, 2, \cdots, NR,
\end{equation}
where
\begin{equation*}
C(P_e, \varepsilon) = \frac{\ln(P_e/8)}{\ln \varepsilon}.
\end{equation*}

Suppose, by contradiction, that \eqref{eq:whinf} does not hold, i.e., there exists a row $g_i$ s.t. for $n > \bar{n}(\varepsilon, P_e)$,
\begin{equation} \label{eq:limwh}
w_H(g_i) \le C(P_e, \varepsilon).
\end{equation}
Since $G$ is obtained from $G_N$ by eliminating the rows corresponding to the frozen indices, $g_i$ is a row of $G_N$, say row of index $i'$. Then, by Proposition 17 of \cite{arikan:polar},
\begin{equation*}
w_H(g_i) = 2^{w_H(b^{(i')})} = 2^{\sum_{j=1}^{n} b_j^{(i')}},
\end{equation*}
where $b^{(i')}=(b_1^{(i')}, b_2^{(i')}, \cdots, b_n^{(i')})$ is the binary expansion of $i'-1$ over $n$ bits, $b_1^{(i')}$ being the most significant bit and $b_n^{(i')}$ the least significant bit. Consequently, \eqref{eq:limwh} implies that
\begin{equation*}
\sum_{j=1}^{n} b_j^{(i')} \le \lceil\log_2 C(P_e, \varepsilon)\rceil = c(P_e, \varepsilon),
\end{equation*}
i.e., the number of 1's in the binary expansion of $i'-1$ is upper bounded by $c(P_e, \varepsilon)$. 

The Bhattacharyya parameter $Z_{i'}$ of the $i'$-th synthetic channel is given by
\begin{equation*}
Z_{i'} = f_{b_1^{(i')}} \circ f_{b_2^{(i')}} \circ \cdots f_{b_n^{(i')}} (\varepsilon),
\end{equation*}
where $\circ$ denotes function composition and 
\begin{align}
f_0(x) &= 1-(1-x)^2, \label{eq:f0}\\
f_1(x) &= x^2. \label{eq:f1}
\end{align}
Notice that $f_0$ and $f_1$ are increasing functions $\forall \mbox{ } x \in[0, 1]$, and that $f_1 \circ f_0(x) \ge f_0 \circ f_1(x)$ $\forall \mbox{ } x \in [0, 1]$. Consequently, if we set $m = w_H(b^{(i')})$, the minimum Bhattacharyya parameter $Z_{\rm min}(m)$ is obtained by applying first the function $f_1(x)$ $m$ times and then the function $f_0(x)$ $n-m$ times. The maximum Bhattacharyya parameter $Z_{\rm max}(m)$ is obtained if we apply first the function $f_0(x)$ $n-m$ times and then the function
$f_1(x)$ $m$ times. Observing also that for all $t \in {\mathbb N}$,
\begin{align}
&\underbrace{f_0 \circ f_0 \circ \cdots f_0(x)}_{\mbox{\emph{t} times}} = 1-(1-x)^{2^t}, \label{eq:f0it}\\ 
&\underbrace{f_1 \circ f_1 \circ \cdots f_1(x)}_{\mbox{\emph{t} times}} = x^{2^t}, \label{eq:f1it}
\end{align}
we get
\begin{equation}\label{eq:ineqbw}
Z_{\rm min}(m) \le Z_{i'} \le Z_{\rm max}(m),
\end{equation}
with
\begin{align*}
Z_{\rm min}(m) &= 1-(1-\varepsilon^{2^m})^{2^{n-m}},\\
Z_{\rm max}(m) &= (1-(1-\varepsilon)^{2^{n-m}})^{2^m}.
\end{align*}
Since $f_1(x) \le f_0(x)$ $\forall \mbox{ } x \in[0, 1]$ and $m \le c$, we obtain that
\begin{equation} \label{eq:ineqmc}
Z_{\rm min}(m) \ge Z_{\rm min}(c).
\end{equation}
At this point, we need to show that for $k$ sufficiently large,
\begin{equation} \label{eq:worstbest}
Z_{\rm min}(c) \ge Z_{\rm max}(c+k).
\end{equation}
As $1-(1-\varepsilon)^{2^{n-c-k}} < 1$, the condition \eqref{eq:worstbest} is satisfied if
\begin{equation*} 
1-(1-\varepsilon^{2^c})^{2^{n-c}} \ge 1-(1-\varepsilon)^{2^{n-k-c}},
\end{equation*}
which after some simplifications leads to
\begin{equation} \label{eq:k}
k \ge \log_2\biggl(\frac{\ln(1-\varepsilon)}{\ln(1- \varepsilon^{2^c})}\biggr).
\end{equation}

Notice that the RHS of \eqref{eq:k} is an increasing function of $c$. As $c < \log_2(C)+1$, we deduce that the choice
\begin{equation} \label{eq:fixk}
\bar{k} = \ceil[\Bigg]{\log_2\biggl(\frac{\ln(1-\varepsilon)}{\ln(1- \varepsilon^{2C})}\biggr)} = \ceil[\Bigg]{\log_2\Biggl(\frac{\ln(1-\varepsilon)}{\ln(1- \varepsilon^{\frac{2\ln(P_e/8)}{\ln \varepsilon}})}\Biggr)}
\end{equation}
also satisfies \eqref{eq:worstbest}.

An immediate consequence of inequalities \eqref{eq:ineqbw}, \eqref{eq:ineqmc}, and \eqref{eq:worstbest} is that $Z_{i'} \ge Z_{\rm max}(c+\bar{k})$. Therefore, we can conclude that every channel of index $j$ with $\ge c+\bar{k}$ ones in the binary expansion $b^{(j)}$ of $j-1$ has Bhattacharyya parameter $Z_j \le Z_{i'}$. Consequently, all these channels have not been frozen and, as $R \le C = 1-\varepsilon$,
\begin{equation*}
\begin{split}
\varepsilon \le 1-R &= \frac{\# \mbox{ frozen channels}}{\#\mbox{ channels}} \le \frac{\displaystyle\sum_{i=0}^{c+\bar{k}-1}\binom{n}{i}}{2^n} \\
&\le \exp{\Bigl(\frac{-(n-2(c+\bar{k}-1))^2}{2n}\Bigr)},
\end{split}
\end{equation*} 
where the last inequality is a consequence of Chernoff bound \cite{chernoff:article}.

After some calculations, we conclude that for $n > \bar{n}(\varepsilon, P_e)$, where $\bar{n}(\varepsilon, P_e)$ is given by \eqref{eq:defnbargenL},
\begin{equation*}
\exp{\Bigl(\frac{-(n-2(c+\bar{k}-1))^2}{2n}\Bigr)} < \varepsilon,
\end{equation*}
which is a contradiction.
\end{proof}

\subsection{Proof of Lemma \ref{lm:pdecuniongen}} \label{app:pdecuniongen}

\begin{proof}
As in the proof of Lemma \ref{lm:pdecunion} presented in Appendix \ref{app:pdecunion}, consider the Hamming space $\{0, 1\}^N$ with the partial order \eqref{eq:partord}. For $y, z \in \{0, 1\}^N$ define $y \lor z$ and $y \land z$ as in \eqref{eq:andor} and take the probability measure \eqref{eq:probmeas} which satisfies \eqref{eq:probmeascond}. For any $P_1, P_2 \subset {\rm LC}_l$, pick $f : \{0, 1\}^N \rightarrow \{0, 1\}$ and $g : \{0, 1\}^N \rightarrow \{0, 1\}$, defined as
\begin{equation*}
\begin{split}
f(y) &= 1-\prod_{{\rm span}(u^{(1)}, \cdots, u^{(l)}) \in P_1} \bigl(1-{\mathds 1}_{\{y \in E_{{\rm sp}(u^{(1)}, \cdots, u^{(l)})}\}}\bigr),\\ 
g(y) &= 1-\prod_{{\rm span}(\tilde{u}^{(1)}, \cdots, \tilde{u}^{(l)}) \in P_2} \bigl(1-{\mathds 1}_{\{y \in E_{{\rm sp}(\tilde{u}^{(1)}, \cdots, \tilde{u}^{(l)})}\}}\bigr),\\ 
\end{split}
\end{equation*}
where $E_{{\rm sp}(u^{(1)}, \cdots, u^{(l)})}$ is given by \eqref{eq:defEugenL} and  ${\mathds 1}_{\{y \in E_{{\rm sp}(u^{(1)}, \cdots, u^{(l)})}\}} = 1$ if and only if $y \in E_{{\rm sp}(u^{(1)}, \cdots, u^{(l)})}$. Hence,
\begin{equation*}
\begin{split}
&{\mathbb E}[f(y)] = {\mathbb P}(\bigcup_{{\rm span}(u^{(1)}, \cdots, u^{(l)}) \in P_1} E_{{\rm sp}(u^{(1)}, \cdots, u^{(l)})}),\\
&{\mathbb E}[g(y)] = {\mathbb P}(\bigcup_{{\rm span}(\tilde{u}^{(1)}, \cdots, \tilde{u}^{(l)}) \in P_2} E_{{\rm sp}(\tilde{u}^{(1)}, \cdots, \tilde{u}^{(l)})}),\\
&{\mathbb E}[f(y)g(y)] = {\mathbb P}(\bigcup_{{\rm span}(u^{(1)}, \cdots, u^{(l)}) \in P_1} E_{{\rm sp}(u^{(1)}, \cdots, u^{(l)})}\\
&\hspace{1cm}\cap \bigcup_{{\rm span}(\tilde{u}^{(1)}, \cdots, \tilde{u}^{(l)}) \in P_2} E_{{\rm sp}(\tilde{u}^{(1)}, \cdots, \tilde{u}^{(l)})}).\\
\end{split}
\end{equation*}
Since $f$ and $g$ are increasing, the thesis follows by Lemma 40 of \cite{korada:exchange}.
\end{proof}

\subsection{Proof of Lemma \ref{lm:pdecuniongc}} \label{app:pdecuniongc}

\begin{proof}
Assume for the moment that the output alphabet $\mathcal Y$ of the channel is finite and consider the binary relation $\stackrel{{\mathcal Y}}{\le}$, defined for all $y_i, z_i \in {\mathcal Y}$ as 
\begin{equation} \label{eq:defleA}
y_i \stackrel{{\mathcal Y}}{\le} z_i \Longleftrightarrow \frac{p(y_i|1)}{p(y_i|0)} \le \frac{p(z_i|1)}{p(z_i|0)}.
\end{equation}
The relation $\stackrel{{\mathcal Y}}{\le}$ is transitive and total. As concerns the antisymmetry, $\stackrel{{\mathcal Y}}{\le}$ satisfies the property if the following implication holds for all $y_i, z_i \in {\mathcal Y}$,
\begin{equation} \label{eq:llrorder}
\frac{p(y_i|1)}{p(y_i|0)} = \frac{p(z_i|1)}{p(z_i|0)} \Longrightarrow y_i = z_i.
\end{equation}
Note that, without loss of generality, we can assume that the channel output identifies with the log-likelihood ratio, see \cite[Section 4.1.2]{urbanke:coding}. With this assumption of using the canonical representation of the channel, \eqref{eq:llrorder} is also fulfilled. Hence, $\stackrel{{\mathcal Y}}{\le}$ is a total ordering over $\mathcal Y$.

Set ${\mathcal L} = {\mathcal Y}^N$ and for any $y = (y_1, \cdots, y_N)$ and $z =(z_1, \cdots, z_N)$ in ${\mathcal L}$ define the binary relation $\stackrel{{\mathcal L}}{\le}$ as
\begin{equation} \label{eq:defleL}
y \stackrel{{\mathcal L}}{\le} z \Longleftrightarrow y_i \stackrel{{\mathcal Y}}{\le} z_i, \hspace{1.5em} \forall \hspace{0.3em} i \in \{1, \cdots, N\}.
\end{equation}
It is easy to check that $\stackrel{{\mathcal L}}{\le}$ is a partial order over the $N$-fold Cartesian product ${\mathcal Y}^N$.

For any $y, z \in {\mathcal L}$, denote by $y \lor z$ their unique minimal upper bound and by $y \land z$ their unique maximal lower bound, defined as
\begin{align*}
(y \lor z)_i &= \max_{\stackrel{{\mathcal Y}}{\le}}(y_i, z_i), \qquad \forall \hspace{0.3em} i \in \{1, \cdots, N\},\\
(y \land z)_i &= \min_{\stackrel{{\mathcal Y}}{\le}}(y_i, z_i), \qquad \forall \hspace{0.3em} i \in \{1, \cdots, N\}. 
\end{align*}

Since the distributive law holds, i.e.,
\begin{equation*}
y \land (z \lor w) = (y \land z) \lor (y \land w), \qquad \forall \hspace{0.3em} y, z, w, \in {\mathcal L},
\end{equation*}
the set ${\mathcal L}$ with the partial ordering $\stackrel{{\mathcal L}}{\le}$ is a finite distributive lattice. Observe that in the proof of Appendix \ref{app:pdecunion} the finite distributive lattice ${\mathcal L}$ is replaced by the Hamming space $\{0, 1\}^N$.

Let $\mu : {\mathcal L} \rightarrow {\mathbb R}^{+}$ be defined as
\begin{equation} \label{eq:defmu}
\mu(y) = p(y|0^{NR}).
\end{equation}
In words, $\mu$ represents the probability of receiving the $N$-tuple $y$ from the channel, given that the all-zero information vector $0^{NR}$ was sent. We say that such a function is log-supermodular if, for all $y, z \in {\mathcal L}$,
\begin{equation} \label{eq:logsupermodular}
\mu(y)\cdot \mu(z) \le \mu(y \land z)\cdot \mu(y \lor z).
\end{equation}
An easy check shows that \eqref{eq:logsupermodular} is satisfied with equality with the choice \eqref{eq:defmu}. Notice that in the proof of Appendix \ref{app:pdecunion} the log-supermodular function $\mu$ is replaced by the probability measure \eqref{eq:probmeas}.

For any $U'_1 \subset {\mathbb F}_2^{NR}$, consider the function $f : {\mathcal L} \rightarrow \{0, 1\}$, defined as
\begin{equation*} 
f(y) = 1-\prod_{u \in U'_1} (1-{\mathds 1}_{\{y \in E'_u\}}), 
\end{equation*}
where $E'_u$ is given by \eqref{eq:defEu1} and ${\mathds 1}_{\{y \in E'_u\}} = 1$ if and only if $y \in E'_u$. If $f(y) \le f(z)$ whenever $y \stackrel{{\mathcal L}}{\le} z$, then $f$ is said to be monotonically increasing. Since $f$ can be either $0$ or $1$, we only need to prove the implication $f(y)=1 \Rightarrow f(z)=1$, whenever $y \stackrel{{\mathcal L}}{\le} z$. If $f(y) =1$, there exist $u^* \in U'_1$ such that
\begin{equation*}
0 \le \sum_{i \in I_{u^*}} \ln \frac{p(y_i|1)}{p(y_i|0)}. 
\end{equation*} 
As $y_i \stackrel{{\mathcal Y}}{\le} z_i$ for all $i \in \{1, \cdots, N\}$, by definition \eqref{eq:defleA} we obtain
\begin{equation*}
\sum_{i \in I_{u^*}} \ln \frac{p(y_i|1)}{p(y_i|0)} \le \sum_{i \in I_{u^*}} \ln \frac{p(z_i|1)}{p(z_i|0)}, 
\end{equation*}
which implies that $f(z) =1$. As a result, $f$ is increasing.

Analogously, for any $U'_2 \subset {\mathbb F}_2^{NR}$, consider the function $g : {\mathcal L} \rightarrow \{0, 1\}$ defined as
\begin{equation*}
g(y) = 1-\prod_{\tilde{u} \in U'_2} (1-{\mathds 1}_{\{ y \in E'_{\tilde{u}}\}}). 
\end{equation*}
Using the same argument seen for the function $f$, one realizes that $g$ is an increasing function.

By the FKG inequality \cite{alon:prob},
\begin{equation*}
\sum_{y \in {\mathcal L}} \mu(y) f(y) \cdot \sum_{y \in {\mathcal L}} \mu(y) g(y) \le \sum_{y \in {\mathcal L}} \mu(y) f(y)g(y) \cdot \sum_{y \in {\mathcal L}} \mu(y).
\end{equation*}
Observing that
\begin{align*}
&\sum_{y \in {\mathcal L}} \mu(y) =1,\\
&\sum_{y \in {\mathcal L}} \mu(y)f(y) ={\mathbb P}(\bigcup_{u \in U'_1} E'_u),\\
&\sum_{y \in {\mathcal L}} \mu(y)g(y) ={\mathbb P}(\bigcup_{\tilde{u} \in U'_2} E'_{\tilde{u}}),\\
&\sum_{y \in {\mathcal L}} \mu(y)f(y)g(y) ={\mathbb P}(\bigcup_{u \in U'_1} E'_u \cap \bigcup_{\tilde{u} \in U'_2} E'_{\tilde{u}}),
\end{align*}
we obtain the thesis \eqref{eq:pinteruniongc}.

When the output alphabet of the channel is infinite, the proof is very similar and follows from the generalization of the FKG inequality to a finite product of totally ordered measure spaces \cite{preston:fkggen}.

\end{proof}

\subsection{Proof of Lemma \ref{lm:mindistpolargc}} \label{app:mindistpolargc}

\begin{proof}
Following the approach of Appendix \ref{app:mindistpolar}, suppose, by contradiction, that there is an unfrozen index $i'$ of $G_N$, such that the number of 1's in the binary expansion of $i'-1$ is upper bounded by $c(P_e, Z)$, defined as
\begin{equation*}
c(P_e, Z) = \ceil[\Bigg]{\log_2 \frac{\ln(P_e/8)}{\ln Z}}.
\end{equation*} 
The Bhattacharyya parameter $Z_{i'}$ has the following expression
\begin{equation*}
Z_{i'} = f_{b_1^{(i')}} \circ f_{b_2^{(i')}} \circ \cdots f_{b_n^{(i')}} (Z),
\end{equation*}
where $f_1(x)$ is given by \eqref{eq:f1}, and $f_0(x)$ can be bounded as \cite{arikan:polar}, \cite[Problem 4.62]{urbanke:coding},
\begin{equation*}
\sqrt{1-(1-x^2)^2}= f_0^{(l)}(x) \le f_0(x) \le f_0^{(u)}(x) = 1-(1-x)^2.
\end{equation*}
Since $f_1(x)$ and $f_0^{(l)}(x)$ are increasing and $f_0^{(l)}(x) \le f_0(x)$, we have
\begin{equation*}
Z_{i'} \ge Z_{i'}^{(l)} = f_{b_1^{(i')}}^{(l)} \circ f_{b_2^{(i')}}^{(l)} \circ \cdots f_{b_n^{(i')}}^{(l)} (Z),
\end{equation*}
where, for the sake of simplicity, we have defined $f_1^{(l)}(x) = f_1(x)$. Setting $m = w_H(b^{(i')})$ and remarking that $f_1^{(l)} \circ f_0^{(l)}(x) \ge f_0^{(l)} \circ f_1^{(l)}(x)$, a lower bound on $Z_{i'}^{(l)}$ is obtained applying first the function $f_1^{(l)}(x)$ $m$ times and then the function $f_0^{(l)}(x)$ $n-m$ times. Using \eqref{eq:f1it} and observing that for all $t \in {\mathbb N}$,
\begin{equation*}
\underbrace{f_0^{(l)} \circ f_0^{(l)} \circ \cdots f_0^{(l)}(x)}_{\mbox{\emph{t} times}} = \sqrt{1-(1-x^2)^{2^t}}, 
\end{equation*}
we get
\begin{equation*}
Z_{i'}^{(l)} \ge Z_{\rm min}^{(l)}(m)=\sqrt{1-(1-Z^{2^{m+1}})^{2^{n-m}}}.
\end{equation*}
Since $f_1^{(l)}(x) \le f_0^{(l)}(x)$ and $m \le c$, we obtain that
\begin{equation*}
Z_{\rm min}^{(l)}(m) \ge Z_{\rm min}^{(l)}(c).
\end{equation*}

On the other hand, let $Z_j$ be the Bhattacharyya parameter of the synthetic channel of index $j$ with $\ge c+k$ ones in the binary expansion $b^{(j)}$ of $j-1$. Since $f_1(x)$ and $f_0^{(u)}(x)$ are increasing and $f_0(x) \le f_0^{(u)}(x)$, we have
\begin{equation*}
Z_j \le Z_j^{(u)} = f_{b_1^{(j)}}^{(u)} \circ f_{b_2^{(j)}}^{(u)} \circ \cdots f_{b_n^{(j)}}^{(u)} (Z),
\end{equation*}
where we have defined for the sake of simplicity $f_1^{(u)}(x) = f_1(x)$. Setting $m' = w_H(b^{(j)})$ and remarking that $f_1^{(u)} \circ f_0^{(u)}(x) \ge f_0^{(u)} \circ f_1^{(u)}(x)$, an upper bound on $Z_{j}^{(u)}$ is obtained applying first the function $f_0^{(u)}(x)$ $n-m'$ times and then the function $f_1^{(u)}(x)$ $m'$ times. Using \eqref{eq:f0it} and \eqref{eq:f1it}, we get
\begin{equation*}
Z_j^{(u)} \le Z_{\rm max}^{(u)}(m')=(1-(1-Z)^{2^{n-m'}})^{2^{m'}}.
\end{equation*}
Since $f_1^{(u)}(x) \le f_0^{(u)}(x)$ and $m' \ge c+k$, we have that
\begin{equation*}
Z_{\rm max}^{(u)}(m') \le Z_{\rm max}^{(u)}(c+k).
\end{equation*}

At this point, we need to pick $k$ such that the following inequality holds,
\begin{equation*} 
Z_{\rm min}^{(l)}(c) \ge Z_{\rm max}^{(u)}(c+k).
\end{equation*}
After some calculations, one obtains that
\begin{equation*}
\bar{k} = \ceil[\Bigg]{\log_2\Biggl(\frac{\ln(1-Z)}{\ln(1- Z^{\frac{4\ln(P_e/8)}{\ln Z}})}\Biggr)}
\end{equation*}
fulfills the requirement.
 
As a result, every channel of index $j$ with $\ge c+\bar{k}$ ones in the binary expansion $b^{(j)}$ of $j-1$ cannot be frozen. By Chernoff bound \cite{chernoff:article}, we get a contradiction for $n > \bar{n}(Z, C, P_e)$, where $\bar{n}(Z, C, P_e)$ is given by \eqref{eq:defnbargenLgc}.
\end{proof}

\subsection{Proof of Lemma \ref{lm:pdecuniongengc}} \label{app:pdecuniongengc}

\begin{proof}
Assume at first that the output alphabet $\mathcal Y$ of the channel is finite and consider the finite distributive lattice ${\mathcal L} = {\mathcal Y}^N$ with the partial ordering $\stackrel{{\mathcal L}}{\le}$ defined in \eqref{eq:defleL}. Let $\mu : {\mathcal L} \rightarrow {\mathbb R}^{+}$ be the log-supermodular function \eqref{eq:defmu}.

For any $P'_1, P'_2 \subset {\rm SS}_L$, consider the functions $f : {\mathcal L} \rightarrow \{0, 1\}$ and $g : {\mathcal L} \rightarrow \{0, 1\}$, given by
\begin{equation*}
\begin{split} 
f(y) &= 1-\prod_{\{u^{(1)}, \cdots, u^{(L)}\} \in P'_1} (1-{\mathds 1}_{\{y \in E'_{u^{(1)}, \cdots, u^{(L)}}\}}),\\
g(y) &= 1-\prod_{\{\tilde{u}^{(1)}, \cdots, \tilde{u}^{(L)}\} \in P'_2} (1-{\mathds 1}_{\{y \in E'_{\tilde{u}^{(1)}, \cdots, \tilde{u}^{(L)}}\}}),\\
\end{split} 
\end{equation*}
where $E'_{u^{(1)}, \cdots, u^{(L)}}$ is defined in \eqref{eq:defEugenLgc} and ${\mathds 1}_{\{y \in E'_{u^{(1)}, \cdots, u^{(L)}}\}} = 1$ if and only if $y \in E'_{u^{(1)}, \cdots, u^{(L)}}$. For analogous reasons to those pointed out in Appendix \ref{app:pdecuniongc}, $f$ and $g$ are monotonically increasing.

Noticing that
\begin{align*}
&\sum_{y \in {\mathcal L}} \mu(y)f(y) ={\mathbb P}(\bigcup_{\{u^{(1)}, \cdots, u^{(L)}\} \in P'_1} E'_{u^{(1)}, \cdots, u^{(L)}}),\\
&\sum_{y \in {\mathcal L}} \mu(y)g(y) ={\mathbb P}(\bigcup_{\{\tilde{u}^{(1)}, \cdots, \tilde{u}^{(L)}\} \in P'_2} E'_{\tilde{u}^{(1)}, \cdots, \tilde{u}^{(L)}}),\\
&\sum_{y \in {\mathcal L}} \mu(y)f(y)g(y) ={\mathbb P}(\bigcup_{\{u^{(1)}, \cdots, u^{(L)}\} \in P'_1} E'_{u^{(1)}, \cdots, u^{(L)}} \\
&\hspace{1cm}\cap \bigcup_{\{\tilde{u}^{(1)}, \cdots, \tilde{u}^{(L)}\} \in P'_2} E'_{\tilde{u}^{(1)}, \cdots, \tilde{u}^{(L)}}),
\end{align*}
the thesis follows from the FKG inequality \cite{alon:prob}. To handle the case of an infinite output alphabet, it is enough to apply the generalization of the FKG inequality in \cite{preston:fkggen}.

\end{proof}

\subsection{Proof of Lemma \ref{lm:jumperasure}} \label{app:jumperasure}

\begin{proof}
Suppose that the thesis does not hold, i.e.,
\begin{equation*}
\max_{i \in {\mathcal F}^c} {\mathbb P}(F_i) = \max_{i \in {\mathcal F}^c} Z_i = \alpha \ge \frac{P_e}{8}. 
\end{equation*}
Consider $a, b \in (0, 1)$ that satisfy
\begin{equation} \label{eq:condab}
\sqrt{a} \le 1-\sqrt{1-b}.
\end{equation}
Then, for any $\varepsilon \in (0, 1)$ and for $N$ sufficiently large, by Corollary 6 of \cite{HAU14} the number of channels $N_c(a, b, N, \varepsilon)$ whose Bhattacharyya parameter is contained in the interval $[a, b]$ is lower bounded by $N^{1+\lambda_{\rm BEC}^{(l)}}$, where $\lambda_{\rm BEC}^{(l)} \ge -0.279$. Since the choice $b=\alpha$ and $a= (\alpha/2)^2$ satisfies \eqref{eq:condab}, we obtain
\begin{equation} \label{eq:A}
N_c\biggl(\Bigl(\frac{\alpha}{2}\Bigr)^2, \alpha, N, \varepsilon\biggr) \ge A = \floor{N^{1+\lambda_{\rm BEC}^{(l)}}}.
\end{equation}
Let $B_i$ be the erasure indicator of the $i$-th synthetic channel of Bhattacharyya parameter $Z_i$. Then, $B_i \in \{0, 1\}$ is a binary random variable such that ${\mathbb P}(B_i=1)=Z_i$. Denote by $\rho_{i, j}$ the correlation coefficient between the erasure indicators of the $i$-th and the $j$-th channel, which can be expressed as
\begin{equation*}
\rho_{i, j} = \frac{{\mathbb E}(B_i B_j)-{\mathbb E}(B_i){\mathbb E}(B_j)}{{\rm var}(B_i){\rm var}(B_j)}.
\end{equation*}
By Corollary 2 of \cite{mani:correlation}, we have that
\begin{equation} \label{eq:sum}
\sum_{i, j \in \{1, \cdots, N\}} \rho_{i, j} \le N^{3-\log_2(3)}.
\end{equation}
Let ${\mathcal A}_{\rm max}$ be the set of indices of the unfrozen channels with the highest Bhattacharyya parameters such that $|{\mathcal A}_{\rm max}| = A$. Notice that the Bhattacharyya parameters of these channels are contained in the interval $[(\alpha/2)^2, \alpha]$ by \eqref{eq:A}. Denote by $R_A$ the associated $A\times A$ matrix of the correlation coefficients. We are going to show that for any $M \in {\mathbb N}$, there exists $S_M^* \subset {\mathcal A}_{\rm max}$, with $|S_M^*|=M$, such that    
\begin{equation} \label{eq:corrmax}
\max_{\substack{i, j \in S_M^*\\ i \neq j}} \rho_{i, j} < \binom{M}{2} \frac{N^{3-\log_2(3)}}{A^2}.
\end{equation}
Since $3-\log_2(3)-2(1+\lambda_{\rm BEC}^{(l)}) <0$, the previous relation implies that, if we fix $M$ and we choose $N$ suitably large, then the correlation coefficients of the channels with indices in $S_M^*$ can be made arbitrarily small.

To prove \eqref{eq:corrmax}, first observe that \eqref{eq:sum} clearly implies that $\sum_{i, j \in {\mathcal A}_{\rm max}} \rho_{i, j} \le N^{3-\log_2(3)}$. Hence, the average of all the elements of the matrix $R_A$ is upper bounded by $N^{3-\log_2(3)}/A^2$. As $R_A$ is symmetric and its principal diagonal is made up by ones, the average of the strictly upper triangular part of $R_A$, namely the average of the $\binom{A}{2}$ elements of $R_A$ which are above the principal diagonal, is also upper bounded by $N^{3-\log_2(3)}/A^2$. In formulae,
\begin{equation*}
\frac{1}{\binom{A}{2}}\sum_{\substack{i, j \in {\mathcal A}_{\rm max}\\ i < j}} \rho_{i, j} \le \frac{N^{3-\log_2(3)}}{A^2}.
\end{equation*}
To any $S_M \subset {\mathcal A}_{\rm max}$, with $|S_M|=M$, we can associate the $\binom{M}{2}$ elements of the strictly upper triangular part of $R_A$ which represent the correlation coefficients of the channels whose indices are in $S_M$. By symmetry, when we consider all the subsets of cardinality $M$ of ${\mathcal A}_{\rm max}$, we count each element of the strictly upper triangular part of $R_A$ the same number of times, i.e. $\binom{A-2}{M-2}$. As a result, noticing that there are $\binom{A}{M}$ distinct subsets of cardinality $M$ of ${\mathcal A}_{\rm max}$, we have
\begin{equation*}
\frac{1}{\binom{A}{M}}\sum_{S_M \subset {\mathcal A}_{\rm max}} \frac{1}{\binom{M}{2}}\sum_{\substack{i, j \in S_M\\ i < j}} \rho_{i, j} \le \frac{N^{3-\log_2(3)}}{A^2}.
\end{equation*}
Consequently, there exists $S_M^* \subset {\mathcal A}_{\rm max}$, such that
\begin{equation*}
\frac{1}{\binom{M}{2}}\sum_{\substack{i, j \in S_M^*\\ i < j}} \rho_{i, j} \le \frac{N^{3-\log_2(3)}}{A^2},
\end{equation*}
which implies \eqref{eq:corrmax}.

With the choice $M = \ceil{128/P_e^2}$, it is easy to see that there exists $S^* \subset S_M^*$ that satisfies
\begin{equation}\label{eq:sumbhatlim}
\frac{1}{2}+\alpha\ge \sum_{i \in S^*} Z_i \ge \frac{1}{2}. 
\end{equation} 
Indeed, $\sum_{i \in S_M^*}Z_i \ge M(\alpha/2)^2 \ge 1/2$ and $\max_{i \in S_M^*} Z_i \le \alpha$. 

An application of Bonferroni's inequality (see \cite[Section 4.7]{comtet:comb}) yields
\begin{equation}\label{eq:calc1}
\begin{split}
P_e^{\rm SC}(N, R, \varepsilon, k=0)&\ge {\mathbb P}(\bigcup_{i \in S^*} F_i)\\
&\ge \sum_{i \in S^*} {\mathbb P}(F_i) - \frac{1}{2} \sum_{\substack{i, j \in S_M^*\\ i \neq j}}{\mathbb P}(F_i \cap F_j).
\end{split}
\end{equation}
The term ${\mathbb P}(F_i \cap F_j)$ can be upper bounded as
\begin{equation} \label{eq:calc2}
\begin{split}
{\mathbb P}(F_i \cap F_j) &= Z_i Z_j +\rho_{i, j} \sqrt{Z_i Z_j(1-Z_i)(1-Z_j)} \\
& \le Z_i Z_j + \binom{M}{2}\frac{N^{3-\log_2(3)}}{A^2} \\
& \le Z_i Z_j + \frac{1}{8\binom{M}{2}},
\end{split}
\end{equation}
where the first inequality comes from \eqref{eq:corrmax} and the fact that $Z_i \in [0, 1]$ and the second inequality is easily obtained picking $N$ large enough.

Using \eqref{eq:calc1} and \eqref{eq:calc2}, we have
\begin{equation}
P_e^{\rm SC}(N, R, \varepsilon, k=0) \ge \sum_{i \in S^*} Z_i - \frac{1}{2} \Bigl(\sum_{i \in S^*} Z_i\Bigr)^2 - \frac{1}{8}.
\end{equation}
Note that
\begin{equation*}
\alpha \le P_e^{\rm SC}(N, R, \varepsilon, k)<\frac{1}{4},
\end{equation*}
where the last inequality comes from the hypothesis of the Lemma. Hence, by using \eqref{eq:sumbhatlim}, we deduce that 
\begin{equation*}
\sum_{i \in S^*} Z_i < \frac{3}{4} < 1.
\end{equation*}
Since the function $h(x) = x-x^2/2$ is increasing in $[0, 1]$ and $1/2\le \sum_{i \in S^*} Z_i < 1$, we can conclude that
\begin{equation}
\sum_{i \in S^*} Z_i - \frac{1}{2} \Bigl(\sum_{i \in S^*} Z_i\Bigr)^2 - \frac{1}{8}\ge \frac{1}{4},
\end{equation}
which is a contradiction and gives us the thesis.

\end{proof}

\bibliographystyle{IEEEtran}

\bibliography{biblio}

\end{document}